\definecolor{niceRed}{RGB}{190,38,38}
\definecolor{niceBlue}{HTML}{0466a7}
\definecolor{niceGreen}{rgb}{0.2,0.8,0.2}
\newcommand{\Acal}{\mathcal{A}}
\newcommand{\Ncal}{\mathcal{N}}
\newcommand{\Xcal}{\mathcal{X}}
\newcommand{\R}{\mathbb{R}}
\newcommand{\Q}{\mathbb{Q}}
\newcommand{\N}{\mathbb{N}}
\DeclareMathOperator{\E}{\mathbb{E}}
\newcommand{\I}{\mathbb{I}}
\DeclarePairedDelimiter{\abs}{\lvert}{\rvert}
\newcommand{\rr}{\right)}
\renewcommand{\ll}{\left(}
\newcommand{\prob}{\mathbb{P}} 
\newcommand{\game}{\mathcal{G}} 
\newcommand{\nfgdiscr}{\game = \ll \players, \pures, \pay \rr} 
\newcommand{\players}{\mathcal{N}} 
\newcommand{\n}{n} 
\newcommand{\mi}{{-i}} 
\newcommand{\numact}{\ensuremath{m}} 
\newcommand{\pures}{\mathcal{A}} 
\newcommand{\pay}{u} 
\newcommand{\pot}{\phi} 
\newtheorem{theorem}{Theorem}
\newtheorem{lemma}{Lemma}
\newtheorem{remark}{Remark}
\newtheorem{example}{Example}
\newtheorem{definition}{Definition}
\newtheorem{proposition}{Proposition}
\newtheorem{corollary}{Corollary}
\newtheorem{result}{Result}
\newtheorem*{theorem*}{Theorem}
\newtheorem*{inf_theorem}{Informal theorem}
\newenvironment{reftheorem}[1]{%
	\IfBlankTF{#1}
	{}
	{}%
	\reftheoreminner
}{\endreftheoreminner}
\newcommand{\cmark}{\ding{51}} 
\newcommand{\xmark}{\ding{55}} 
\title{Online Optimization Algorithms in Repeated Price Competition: Equilibrium Learning and Algorithmic Collusion}
\author{
	Martin Bichler \\
	\small Department of Computer Science\\
	\small Technical University of Munich\\
	\small \texttt{m.bichler@tum.de} \\
	\And
	Julius Durmann\\
	\small Department of Computer Science\\
	\small Technical University of Munich \\
	\small \texttt{julius.durmann@tum.de}\\
	\And
	Matthias Oberlechner\\
	\small Department of Computer Science\\
	\small Technical University of Munich \\
	\small \texttt{matthias.oberlechner@tum.de}\\
}
\begin{document}
	
\newgeometry{left=2cm, right=2cm, top=3cm, bottom=3cm}
	
\maketitle

\begin{abstract}
	This paper investigates whether online learning algorithms in pricing produce competitive outcomes or tacit collusion. This issue has drawn considerable attention from competition regulators as algorithmic pricing becomes more common in digital markets. Understanding when such algorithms lead to equilibrium or supra-competitive prices is critical for buyers, sellers, and policymakers.
	
	We study the behavior of multi-armed bandit (MAB) online learning algorithms in repeated price competition. These algorithms require little information to learn, making them realistic models of automated pricing. Our analysis shows that mean-based algorithms, a special variant of online learning algorithms, converge to correlated rationalizable actions. In the Bertrand environments considered, this implies convergence to the Nash equilibrium or adjacent prices. Numerical experiments reveal that most MAB algorithms, including those that are not mean-based, also converge. We observe supra-competitive prices only in specific cases where all sellers implement the same symmetric version of certain algorithms, such as UCB. This effect diminishes as the number of competitors increases.
	
	Our results suggest that, even in a stylized repeated Bertrand competition, sustained supra-competitive prices may be less of a concern when independent agents use different online learning algorithms.
	Our insights are relevant for regulators and managers considering the use of algorithmic pricing algorithms.
\end{abstract}

\vfill
\pagebreak

\setcounter{tocdepth}{2}
\tableofcontents

\vfill
\pagebreak

\section{Introduction}
Algorithmic pricing, where prices are determined by software agents, is becoming increasingly prevalent in online retail markets. 
\citet{chen2016empirical} estimated for 2015 that algorithms were involved in the pricing of approximately one-third of the roughly 1600 best-selling products on the Amazon marketplace. In 2018, the average product price on Amazon was reported to change once every ten minutes.\footnote{\url{https://www.businessinsider.com/amazon-price-changes-2018-8}} Since then, an industry has emerged that specializes in automated pricing software with over 130 companies offering a variety of 185 pricing algorithms in 2021 \citep{calzolari_pricing_2024}. 

From the point of view of a single seller, algorithmic pricing aims to solve an online optimization problem, where a firm's actions are the prices it sets, and the objective is the (average) profit it wants to maximize over time. 
When entering the market, sellers have little information about competitors' costs or the demand of buyers at different prices and need to find out over time which prices maximize profit. 

A key challenge is deciding whether to prioritize short-term revenue (by exploiting a known price with a high payoff) or invest in discovering better long-term pricing strategies (through exploration of different prices). Online optimization algorithms are designed to effectively balance this trade-off, and they scale effectively across large sets of products \citep{bubeck2011introduction}. 
In online exchanges, the algorithms have access to bandit feedback.
This means that after performing an action (setting a price), an agent learns the value of the objective function (his profit) for this specific action. 
This (multi-armed) bandit model in online optimization is particularly suitable for algorithmic pricing applications, which was recognized early on. Bandit algorithms were already suggested for pricing by \citet{rothschild1974two} long before digital platform markets emerged. Today, there is an extensive academic literature on multi-armed bandit algorithms for pricing \citep{trovo2015multi, den2015dynamic, bauer2018optimal, mueller2019low, elreedy2021novel, taywade_multi-armed_2023, qu24}, and there are many resources by practitioners on how to implement bandit algorithms for pricing.\footnote{\url{https://towardsdatascience.com/dynamic-pricing-with-multi-armed-bandit-learning\\-by-doing-3e4550ed02ac}, \url{https://www.griddynamics.com/blog/dynamic-pricing-algorithms}}

Online optimization algorithms target online problems where agents optimize against a stochastic process that is unknown and independent of their actions. Game-theoretical problems are different because the actions of one player impact the objectives of the others. In games, the Nash equilibrium (NE) describes a state where no agent wants to deviate unilaterally. Unlike the optimum in single-player optimization, this state may bring suboptimal payoffs to all agents. 
In spite of a long literature on learning in games, we don't have a comprehensive theory about which algorithms converge to a Nash equilibrium in which types of games \citep{fudenberg1999TheoryLearningGames, young2004strategic, cesa2006prediction}. 

Algorithmic pricing is a game-theoretical setting. An online retail market where multiple sellers compete via prices in a market for goods is naturally modeled as a Bertrand competition \citep{bertrand1883theorie}. This established oligopoly model allows for different assumptions about demand (e.g., all-or-nothing, linear, or logit demand models), and there is a long literature characterizing the emerging equilibrium prices. 
In the standard continuous benchmark with symmetric firms, the equilibrium price equals marginal cost, which maximizes consumer welfare.

Equilibrium analysis assumes rational agents that play their equilibrium strategy from the start. In practice, however, competitors lack information about each other's costs and the demand model, which are both needed to calculate such a price in advance. 
Additionally, changes in demand and supply over time would require players to recalculate their equilibrium strategy.

A growing number of articles shows that optimization and learning algorithms can lead to supra-competitive prices higher than the Nash equilibrium of the stage game \citep{waltman2008q, calvano_algorithmic_2019, abada_artificial_2023, klein_autonomous_2021, abada_collusion_2024, brown_competition_2023}.\footnote{Note that by the Folk Theorem, such a supra-competitive price could still be an equilibrium in an infinitely repeated game, if players have sufficient patience \citep{fudenberg1991game}.}  
The phenomenon is commonly referred to as \textit{algorithmic collusion} \citep{den2023mathematical}, and it raised substantial concerns among regulators \citep{oecd2017} as it decreases consumer welfare. Some states start to regulate the use of pricing algorithms, especially if they are used jointly between competitors \citep{aguiar-curry_ab-325_2025, vinson__elkins_llp_california_nodate}. \citet{abada2024algorithmic} provide a recent treatment of the subject, an overview of relevant literature, and policy implications. They define algorithmic collusion as persistent supra-competitive outcomes produced by learning algorithms without human design to do so, and we adhere to this definition. They also state that these algorithms should be "relevant for use in real market environments", and we argue that bandit-feedback online learning algorithms are exactly that further below.

The discussion on algorithmic collusion draws largely on numerical experiments of certain learning algorithms in a repeated Bertrand competition with a fixed set of sellers and a specific demand model. While real-world environments might be more complex, this model enables an analysis of convergence to the static Nash equilibrium. In particular, analysts can perform comparative statics based on the knowledge of costs and demand functions that are not available in empirical work. One might argue that if algorithmic collusion does not arise in repeated play in a Bertrand pricing game, it is unlikely to emerge in more complex environments where demand and supply change over time. 

There are a number of environments where algorithmic collusion was shown experimentally. Most articles focus on Q-learning \citep{Calvano.2020b, calvano_algorithmic_2021-1, klein_autonomous_2021}, but \citet{hansen_frontiers_2021} recently found evidence for supra-competitive prices with Upper Confidence Bound (UCB, \cite{auer2002bandit}), a multi-armed bandit algorithm, as well. Reward-punishment schemes that might be the result of reinforcement learning algorithms with large state spaces can be ruled out in these environments \citep{lambin2024less}. Our Appendix \ref{app:qlearning} includes Q-learning as a benchmark because it is a dominant baseline in prior algorithmic-collusion papers; our focus and contributions are on scalable online optimization/learning algorithms with bandit feedback.

We argue that it is important to study not only properties of algorithms but also to consider the structure of the games at hand when analyzing the convergence to equilibrium.
Decades of research on learning in games \citep{young2004strategic, foster1997CalibratedLearningCorrelated} has shown that convergence of learning algorithms to a Nash equilibrium typically requires strong assumptions. It is well-known that there are games that cannot be learned via any uncoupled learning dynamics \citep{hart2006stochastic, milionis2022nash}. Also, random normal-form games can lead to cycles or even chaotic dynamics \citep{sanders2018prevalence}. 
The Bertrand competition has more structure that we exploit in this paper to show the convergence of large classes of algorithms relevant to algorithmic pricing. 

\subsection{Contributions}

Prior studies on algorithmic pricing and algorithmic collusion analyzed specific algorithms in a repeated Bertrand competition with specific demand models. We make two main contributions: First, we prove that an important class of online optimization algorithms, mean-based algorithms, converge to correlated rationalizable strategies \citep{brandenburger1987rationalizability}, a solution concept that contains correlated equilibria \citep{aumann1987correlated} and Nash equilibria. Then, we show that in symmetric, discrete Bertrand competition with all-or-nothing, linear, and logit demand, mean-based algorithms converge almost surely to actions that are close to their Nash equilibria. This is because the set of correlated rationalizable strategies is equivalent or close to the set of Nash equilibria of these games. The connection between mean-based algorithms and correlated rationalizable strategies was unknown, and the finding adds to the literature on learning in games. The result has real-world relevance because it provides proof that widely used algorithms such as Exponential Weights (Exp3) converge to a Nash equilibrium in important models of pricing competition. Note that such convergence results of bandit algorithms to Nash equilibrium are rare in the literature on learning in games.

Not all online optimization algorithms suitable for algorithmic pricing are mean-based. 
In a second contribution, we provide extensive experiments with a variety of multi-armed bandit algorithms that have been suggested for algorithmic pricing and show that sustained supra-competitive prices are an exception limited to symmetric installations of UCB algorithms and Q-learning among a small number of sellers.   
We report experiments on Exp3, $\varepsilon$-greedy, UCB, and Thompson sampling, showing that UCB leads to high prices while the other algorithms converge to the Nash equilibrium. 
This holds for all-or-nothing, linear, and logit demand models for symmetric and asymmetric model parameterizations. In experiments in which two or more different algorithms (including UCB) are combined, we find that prices converge to the Nash equilibrium quickly. Supra-competitive pricing with UCB is also much reduced if there are more than three sellers. Overall, these results for a variety of standard Bertrand competition models indicate that non-competitive outcomes are less of a concern with this important class of online optimization algorithms. 

\subsection{Organization of the Paper}

In the next section, we discuss related work on online optimization, algorithmic collusion, and learning in games. In Section \ref{sec:model}, we introduce the relevant game-theoretical solution concepts and the Bertrand competition. Section \ref{sec:convergence-of-mean-based-algorithms} provides the central theoretical result of our paper on the convergence of mean-based algorithms to equilibrium, before we describe our experimental results in Section \ref{sec:experimental-analysis}. Section \ref{sec:conclusions} provides conclusions and an outlook.

\section{Related Work}

In what follows, we provide an overview of the relevant literature. We briefly discuss online optimization algorithms, summarize the literature on algorithmic collusion, and the key results from the literature on learning in games relevant to this paper.

\subsection{Online Optimization}

Online optimization is concerned with making sequential decisions in an unknown environment with the goal of optimizing a performance metric over time. 
At each stage $t = 1,2,\dots$ the agent chooses a strategy $x_t \in \Xcal $ from some set and gets a utility $u_t(x_t)$, which it seeks to maximize. In algorithmic pricing, the action could be the price or the quantity. 

Algorithms are commonly analyzed in two models, the adversarial model and the stochastic model. In the stochastic model, the objective is to minimize the expected regret over the distribution of the input, which is drawn independently and identically from some underlying distribution. In the adversarial model, the input can be chosen by an adversary. A standard performance measure of algorithms generating a sequence of strategies $x_t$ in both models is the notion of (external) regret. 
\begin{definition}[No-regret Algorithm]
	The (external) regret denotes the difference between the aggregated utilities after $T$ stages and the best fixed action in hindsight and is defined by 
	\begin{equation}\label{def:noregret}
		\text{Reg}(T) = \max_{x \in \Xcal} \sum_{t=1}^T u_t(x) - u_t(x_t).
	\end{equation}
	We say that an algorithm has \textit{no regret} if the regret $ \text{Reg}(T)$ grows sublinearly. This means that an algorithm performs on average at least as well as the best fixed action in hindsight in the long run.
\end{definition}

There are different types of no-regret online algorithms. One can distinguish such algorithms based on the feedback available to the agents. Some noteworthy feedback types include:

\begin{itemize}
	\item \emph{Bandit feedback}: The algorithm only receives partial feedback about the performance of its decisions. Typically, the feedback consists of a scalar reward or cost signal associated with the chosen action, but it does not reveal the potential rewards or costs of the other actions. 
	
	\item \emph{Gradient feedback}: The algorithm receives feedback about the first-order derivatives (e.g., gradients) of the cost or payoff function with respect to the chosen action. While it is typically unavailable to learning algorithms in the field, one can compute gradient feedback when the goal is to develop equilibrium solvers rather than to mimic real-world interactions. 
	
	\item \emph{Full feedback}: The algorithm receives feedback on the reward of all its possible actions or is granted access to the entire reward function. This information is given independent of which action was selected, thus providing access to counterfactual information.
\end{itemize}

We will largely focus on bandit feedback, which is a natural algorithm design principle for algorithmic pricing. Sellers set a price, and they learn their profit for this price in the next period. They often have no or only incomplete information about the demand model or the strategies used by all other sellers. This is particularly true on large online retail platforms where there are many substitutes for a good. We also include experiments with full feedback, and our theoretical results cover both bandit-feedback and full-feedback algorithms. 


Exponential Weights, Follow-the-Perturbed-Leader (FTPL), Online Gradient Descent, or Online Mirror Descent are all algorithms that satisfy the no-regret property (which does not make assumptions on the feedback model). Exponential Weights (or the Exp3 variant) uses bandit feedback.

Exp3 is also a \textit{mean-based algorithm} \citep{braverman_selling_2018}, a property of online optimization algorithms that will play an important role in our analysis. Informally, if the mean reward of action $a$ is significantly larger than the mean reward of action $b$, the learning algorithm will choose action $b$ with negligible probability. Apart from Exp3, also Multiplicative Weights Update (MWU), and FTPL are mean-based. We will show that such algorithms converge to a Nash equilibrium in versions of the Bertrand competition. 

Reinforcement learning (RL) algorithms have also been applied to algorithmic pricing \citep{rana2014real, kastius2022dynamic, deng2024algorithmic}. RL with states can be appropriate in environments where demand systematically depends on state variables, such as the day of the week or customer history \citep{abada_artificial_2023}. Most notably, prior work often uses tabular Q-learning with $\epsilon$-greedy exploration where the agents observe the last-period prices \citep{calvano_algorithmic_2021, calvano_algorithmic_2021-1, schaefer_emergence_2023, klein_autonomous_2021}. These algorithms are typically sample-inefficient: as the state and action space grow, they require extensive training rounds and adapt slowly to changing market conditions. This makes them less suitable for dynamic pricing environments, where rapid adaptation is crucial. By contrast, online learning algorithms with bandit feedback can adjust more quickly and scale better. 
As a benchmark, prior work on algorithmic collusion has often applied Q-learning. Its stateless version essentially resembles an $\varepsilon$-greedy bandit algorithm (see also \citet{lambin2024less}, who shows supra-competitive outcomes can emerge even with stateless Q-learning). 
Based on these arguments and discussions with practitioners, we therefore focus on online optimization algorithms with bandit feedback in our work, while reporting (state-based) Q-learning results for comparison in the appendix (Section \ref{app:qlearning}).


\subsection{Algorithmic Pricing and Collusion}

According to \citet{harrington_developing_2018}, "[c]ollusion is when firms use strategies that embody a reward–punishment scheme which rewards a firm for abiding by the supra-competitive outcome and punishes it for departing from it.” While this definition states that a reward-punishment scheme is a necessary mechanism for genuine collusion, \citet{abada_collusion_2024} argue that "[h]arm to consumers comes from supra-competitive prices, not the policy functions that produce those prices nor the learning algorithm that produces the policy function.” This is why some contributions from the computational literature also refer to supra-competitive outcomes as algorithmic "collusion".

Algorithmic collusion has drawn substantial attention from the research community and from policymakers, as we outlined in the introduction. 
Most of this literature analyzes specific algorithms such as Q-learning for specific model variations, i.e., Bertrand oligopolies with standard all-or-nothing demand, with linear, or logit demand \citep{bertrand1883book}. 
\citet{Calvano.2020} analyze a Bertrand competition with logit demand and constant marginal cost. They find that  Q-learning agents under self-play can lead to supra-competitive prices higher than the Nash equilibrium. 
A related sequential move pricing duopoly environment with linear demand (instead of the simultaneous move Bertrand model in \citet{Calvano.2020}) was analyzed by \citet{klein_autonomous_2021}, who also found collusion with Q-learning agents. 
\citet{asker_impact_2024} detect in their experiments on Bertrand competition with standard (all-or-nothing) demand that the outcome depends on specifics of the Q-learning algorithm  (e.g., synchronous vs. asynchronous updating). \citet{lambin2024less} analyzes a two-staged exploration scheme that helps explain how collusion can sometimes form with Q-learning under self-play, and \citet{schaefer_emergence_2023} empirically derives a probability boundary for the evolution of cooperation in a repeated Prisoner's dilemma. While there are measures that can be implemented against collusion, e.g., on a platform-level \citep{johnson_platform_2023}, these articles agree on the potential risk to consumer welfare posed by algorithmic sellers, although some raise doubts about the purposeful coordination between the algorithms.

In contrast, \citet{abada_collusion_2024} analyze Q-learning in Bertrand oligopolies and show that Q-learning algorithms with sufficiently large $\epsilon$-greedy exploration show no collusion. \citet{den_boer_artificial_2022} provide a detailed analysis of the inner workings of Q-learning and reveal that there is no immediate reason to believe that Q-learning would lead to collusion easily.
In addition, \citet{eschenbaum2022robust} criticize the claim that algorithms can be trained offline to successfully collude online in different market environments. The authors find that collusion breaks down when collusive reinforcement learning policies are extrapolated from a training environment to the market. 

Most of this experimental literature on algorithmic collusion is based on Q-learning, although there is no evidence that this algorithm is particularly important or widespread for algorithmic pricing.  
\citet{hansen_frontiers_2021} analyze the price levels that arise in a duopoly setting with UCB ("Upper Confidence Bound") agents. They run a series of experiments where these agents interact simultaneously in a Bertrand economy competition with linear demand functions. The agents observe a perturbed estimate of their revenues, which are a result of their prices and the corresponding demand. \cite{hansen_frontiers_2021} find that, under sufficiently small reward noise, agents eventually explore prices in a correlated manner, giving rise to supra-competitive outcomes.\footnote{Note that in this literature review, we focus on algorithms that were not designed to collude \citep{meylahn_learning_2022}.} \citet{douglas_naive_2024} provide more details on these results by analyzing UCB and an $\epsilon$-greedy bandit algorithm in a repeated Prisoner's dilemma. They argue that deterministic algorithms (UCB) always learn to cooperate ("collude") while randomized algorithms ($\epsilon$-greedy) never do so in this simple game.

A more extensive discussion of the literature is provided by \citet{abada2024algorithmic}. 
In summary, the literature on algorithmic collusion largely focuses on specific algorithms (mostly Q-learning) that are employed in specific types of oligopoly models. The same type of algorithm is used in a symmetric model. 
We analyze a variety of multi-armed bandit algorithms competing against the same, but also against different algorithms. We analyze standard all-or-nothing demand, linear, and logit demand models in a Bertrand competition. 

Recent work has emphasized that supra-competitive algorithmic pricing and algorithmic collusion should not be conflated. \citet{hartline2024regulation} propose a notion of plausible algorithmic non-collusion based on certifying unilateral competitive behavior, while \citet{Hartline2026ClarificationAlgorithmicCollusion} argues that some supra-competitive outcomes are better understood as unilateral non-competitive behavior rather than collusion. This distinction is important for regulatory interpretation: anti-competitive algorithmic pricing is the broader category, while collusion is a narrower case in which multiple firms’ behavior jointly sustains supra-competitive prices. We use “algorithmic collusion” in the broad outcome-based sense, merely indicating supra-competitive prices.

\subsection{Learning in Games} \label{sec:learning_games}

The literature on algorithmic collusion is inherently tied to that on learning in games. The latter has a long history and asks what type of equilibrium behavior (if any) may arise in the long run of a process of learning and adaptation, in which agents seek to maximize their payoff while learning about the actions of other agents through repeated interactions \citep{fudenberg1999TheoryLearningGames}. Cournot's study of duopoly \citep{cournot1838recherches} already introduced a Nash equilibrium and a particular learning process. Cournot's model of best reply dynamics appears unrealistic as a model for algorithmic pricing because players would need a lot of information about competitors. Later, \citet{brown1951IterativeSolutionGames} suggested fictitious play, which converges to equilibrium distributions for any two-player, finite-strategy, zero-sum game. However, \citet{shapley1964TopicsTwopersonGames} established that it can lead to cycles and that the frequency distribution does not necessarily converge. 
Many learning algorithms have been developed, ranging from iterative best-response to first-order online optimization algorithms in which agents follow their utility gradient \citep{mertikopoulos2019learning, bichler2023soda}.

In this general context, it is well known that the dynamics of learning agents do not always converge to a Nash equilibrium \citep{milionis2022nash, vlatakis2020no}: they may cycle, diverge, or be chaotic, even in zero-sum games, where the Nash equilibrium is tractable \citep{mertikopoulos2018cycles, bailey2018multiplicative}. 
While there is no comprehensive characterization of games that are ``learnable'' and one cannot expect that uncoupled dynamics lead to Nash equilibrium in all games \citep{hart2003uncoupled, milionis2022nash}, there are some important results regarding learners. A classical result is that the class of no-regret learning algorithms converges to the so-called \emph{[coarse] correlated equilibrium} ([C]CE) of a game \cite{fudenberg1999TheoryLearningGames}, a result that has recently been extended to Markov games for an algorithm called "V-learning" \citep{jin_v-learningsimple_2024}.
[C]CEs are supersets of Nash equilibria. However, CCEs can also contain dominated strategies and are a rather weak solution concept. \citet{wu_correlated_2008} and \citet{jann_correlated_2015} have shown that the Bertrand competition with "all-or-nothing" demand has a unique CE, and convergence guarantees to CE exist for no-internal-regret algorithms, like the scheme introduced by \cite{foster_regret_1999}. However, these schemes lack the simplicity of no-external-regret algorithms. Our results provide a complementary guarantee for mean-based algorithms. 

Less is known about conditions of games in which learning algorithms converge to a Nash equilibrium. In a landmark paper, \citet{monderer1996potential} introduced the class of \textit{potential games} and showed that Cournot oligopolies with linear price or cost functions are potential games. 
Potential games are guaranteed to have at least one pure Nash equilibrium. Importantly, it was shown that algorithms such as best and better response dynamics or fictitious play converge to a Nash equilibrium in these games. More recently, convergence of MWU \citep{palaiopanos2017multiplicative} and Exp3 \citep{cohen2017learning} in potential games was also shown.

Another important property of a game is supermodularity \citep{vives1999oligopoly, topkis1998supermodularity, milgrom1990rationalizability}. In supermodular games, the best response of each player is positively correlated with the strategies chosen by other players. 
\citet{milgrom1990rationalizability} showed that Bertrand oligopoly games are (log-)supermodular if each firm's elasticity of demand is a decreasing function of its competitors' prices. 
Later, \citet{milgrom1991adaptive} established that adaptive learning algorithms converge to the unique NE in a Cournot duopoly model and Bertrand oligopoly models with linear and logit demand. 
Adaptive learning algorithms include best-response dynamics and fictitious play. In the bandit algorithms that we analyze, an agent can only observe their profit for a particular action, but not necessarily the strategies of all other players, as is necessary for these earlier equilibrium-finding algorithms.

\citet{mertikopoulos2024unified} develops a general stochastic approximation template that unifies many online learning algorithms (gradient, multiplicative weights, optimistic, bandit, etc.) and analyzes their convergence properties across broad classes of games. Our paper complements this by focusing on repeated Bertrand competition, showing how such bandit dynamics manifest in price competition with direct implications for market design and regulation.

It is important to note the distinction between learning algorithms that have access to full payoff information (as in \citet{fudenberg1999TheoryLearningGames}) and bandit feedback, where players only observe realized payoffs from their chosen actions; while convergence results are more established in the former case, our work contributes to the latter, where such guarantees are harder to obtain.
Our work also differs from the extensive literature on dynamic pricing in the monopolist setting (see, e.g., \citet{den2015dynamic}), which focuses on single-seller learning of demand, whereas we study learning dynamics in competitive multi-seller environments where strategic interaction is central.

Finally, some recent papers aim to find an algorithm that, when employed by all agents, yields a Nash equilibrium. They require a common and coordinated exploration scheme. \citet{yang_competitive_2024} introduce an algorithm that learns the Nash equilibrium in a variety of Bertrand settings, based on a common exploration scheme that all participants need to follow. \citet{goyal_learning_2023} restrict their attention to the multinomial logit-demand case and show that the Nash equilibrium can be found with a specialized, decentralized online learning algorithm. Similar to our work, \citet{wang_learning_2022} focus on rationalizability in (coarse) correlated equilibria. They provide a number of algorithms that are able to achieve this under limited coupling requirements.

\section{Model}
\label{sec:model}

In the following, we will first introduce the necessary notation and definitions before we define different types of Bertrand competition models.

\subsection{Basic Definitions and Notation}

We begin with the concepts of a finite normal-form game, a Nash equilibrium, and a symmetric game. 
A normal-form game is a representation in game theory that defines the strategies available to each player, their corresponding payoffs, and the resulting outcomes in a simultaneous and strategic interaction. Formally, we have the following definition. 

\begin{definition}[(Finite) Normal-form Game\index{normal-form games}]\label{def:normalform}
	A \emph{normal-form game} with $\n$ players can be described as a tuple $\nfgdiscr$, with a finite set of players $i \in \players = \{1, \dots, \n\}$, an action space $\Acal= \Acal_1 \times \cdots \times \Acal_\n$ for all $i \in \players$, and payoff or utility functions $u = (u_1, \dots , u_\n)$ with $u_i:  \Acal \rightarrow \R$.
	We call the game a finite normal-form game if the action spaces are also finite, i.e., $\vert \Acal_i \vert < \infty$.
\end{definition}
Please note the slight abuse of notation in the subscripts: Previously, we used indices to refer to time (e.g., "$u_t(\dots)$"); now, we mostly differentiate players (e.g., "$u_i(\dots)$"). In the following, we will use letters $s$, $t$, and $T$ whenever we consider time, and $i$, $j$, and $n$ whenever we refer to players.

In a normal-form game, all agents $i = 1, \dots, \n$ submit their actions $a_i$ simultaneously to form an action profile $a = (a_1, \dots, a_\n)$. 
We often abbreviate the profile by $(a_i, a_{-i})$ where $a_{-i} = (a_1, \dots, a_{i - 1}, a_{i + 1}, \dots, a_\n)$. Similarly, we will sometimes index parts of combined spaces as follows: $\Acal_{-i} = \Acal_1 \times \dots \times \Acal_{i-1} \times \Acal_{i+1} \times \dots \times \Acal_{\n}$.
The agents may also play a distribution over actions, known as a mixed strategy, denoted by $x_i \in \Delta (\Acal_i)$, where $\Delta (\Acal_i) $ is the probability simplex over $\Acal_i$. 
For any joint distribution $x \in \Delta(\Acal)$ over action tuples, the expected utility of player $i$ is $u_i(x) = \E_{a \sim x} \left[ u_i(a_i, a_{-i}) \right]$. Similarly, we will write $u_i(a_i, x_{-i}) = \E_{a_{-i} \sim x_{-i}} \left[u_i(a_i, a_{-i}) \right]$ for the expected utility of some action $a_i$ of player $i$ against randomizing opponents.



A Nash equilibrium is a situation in a strategic interaction where each player's strategy is optimal given the strategies chosen by all other players, and no player has an incentive to unilaterally deviate from their chosen strategy.

\begin{definition}[Nash Equilibrium (NE)]
	In a normal-form game $\nfgdiscr$, a strategy profile $x^* = (x_1^*, \ldots, x_\n^*)$ is a \emph{Nash equilibrium} if, for every player $i \in \Ncal$, we have:
	\begin{equation}
		u_i(x_i^*, x_\mi^*) \geq u_i(x_i, x_\mi^*), \quad \forall x_i \in \Delta \Acal_i.
	\end{equation}
\end{definition}

It is well-known that computing Nash equilibria is PPAD-hard in the worst case \citep{daskalakis2009complexity}.
There are specific subsets of normal-form games where the utilities have more structure. This allows us to find an equilibrium with faster iterative algorithms. Potential games \citep{monderer1996potential} and supermodular games \citep{topkis1979submodular_games, milgrom1990rationalizability} have received significant attention. 

\begin{definition}[Potential Game] \label{def:potential}
	A game $\nfgdiscr$ is a \emph{potential game} if there exists a potential function $\pot : \Acal \rightarrow \R$ such that for every player $i \in \Ncal$ and every pair of actions $a_i, a'_i \in \Acal_i$ 
	\begin{equation}
		u_i(a_i, a_\mi) - u_i(a'_i, a_\mi) = \pot(a_i, a_\mi) - \pot(a'_i, a_\mi), \quad \forall a_\mi \in \Acal_\mi.
	\end{equation}
\end{definition}

It is known that best response dynamics, better response dynamics, Fictitious Play, Replicator Dynamics, and simultaneous gradient ascent all converge in potential games \citep{fudenberg1999TheoryLearningGames, monderer1996potential, sandholm2010population, swenson2018best}. \citet{cohen2017learning} could also show convergence of bandit algorithms such as Exp3 to a Nash equilibrium in potential games. 

Bertrand competitions with some demand models are \textit{supermodular games}, which were introduced by \citet{topkis1979submodular_games}. 
Supermodular games are defined on complete lattices and require utility functions that are supermodular and have "increasing differences" (see, e.g., \citep{milgrom1990rationalizability}). Since we only consider one-dimensional action spaces $\Acal \subset \R$, the utility functions trivially satisfy the supermodularity condition, and we can use the componentwise ordering to define a complete lattice on $\Acal_{i}$ and $\Acal_{-i}$. Then, the definition of supermodular games reduces to the following.
\begin{definition}[Supermodular Game]\label{def:supermodulargame}
	A normal-form game \( \nfgdiscr \) with $\Acal_i \subset \R$ is called a \textit{supermodular game} if it satisfies the following conditions for each player \( i \in \players \):
	\begin{enumerate}
		\item[(1)] $u_i$ is order upper semi-continuous in $a_i$ and continuous in $x_{-i}$ and has a finite upper bound.
		\item[(2)] $u_i$ has increasing differences in $a_i$ and $a_{-i}$, i.e., for any two action profiles $a=(a_i,a_{-i})$ and $a'=(a_i',a_{-i}')$ such that $a_i' \geq a_i$ and $a_j' \geq a_j$for all $j \neq i$, the following inequality holds:
		\begin{equation}
			u_i(a_i', a_{-i}') - u_i(a_i, a_{-i}') \geq u_i(a_i', a_{-i}) - u_i(a_i, a_{-i}).
		\end{equation}
	\end{enumerate}
\end{definition}
If we restrict ourselves to finite normal-form games, i.e., discrete actions, then we only have to check the increasing differences property (2) and can ignore the continuity requirements (1).

It is known that pure strategy Nash equilibria exist in supermodular games. The set of actions surviving iterated strict dominance has a greatest and least element, and both are Nash equilibria \citep{levin2006solution}. Consequently, if a game has a unique Nash equilibrium, then strict elimination of dominated strategies finds this equilibrium. 
However, the elimination of dominated strategies is not a learning method that can be used independently by all players of the game because it requires complete information about the payoff matrix of a game. Such information is not available in algorithmic pricing. Therefore, we will focus on uncoupled algorithms that can be used independently by the agents and that only learn from bandit feedback after each round.

\subsection{Bertrand Competition} 
\label{sec:bertrand-oligopoly}

The Bertrand pricing game or Bertrand competition \citep{bertrand1883theorie} is an economic model that describes the revenue of a firm according to the price it sets for its product and the resulting demand. 
The firm's action  $a_i \in \Acal_i$ is the price for a good it wants to produce. All firms produce the same good, and they all take action simultaneously. 
Firms compete for demand with their prices and affect each other's revenues: The demand depends on all agents' actions and is decreasing in the agent's own price. 
As is common, we assume that the cost functions are linear in the demand. In summary, the utility can be described by

\begin{equation}
	u_i(a_i, a_\mi) =  d_i(a) \cdot (a_i - c_i).
\end{equation}

\subsubsection{Demand Models}

We consider different demand models. The \textit{standard} or \textit{all-or-nothing demand} (also called Bertrand demand) is similar to an auction, where only the highest bidder gets the item. 
In this case, the firm with the lowest price gets all the demand. Different from auctions, where the valuation of an item does not depend on its price, the demand in this competition is additionally linearly decreasing in the agent's price. 
If multiple firms offer the lowest price, the demand is shared equally. This leads to a demand function given by
\begin{equation} \label{eq:demand_standard}
	d_i(a_i, a_{-i}) = \begin{cases} \frac{D}{n_{min}} (1- a_i) &\text{if } i \in \arg \min_{j \in \Ncal} a_j \\ 0 &\text{else} \end{cases}, 
	\tag{standard} 
\end{equation}
where $D > 0$ is the maximum total demand, and $n_{min} := \abs{\arg \min_{j \in \Ncal} a_j}$ is the number of firms with the lowest price.

Another model that was used in \cite{Calvano.2020} is the \textit{multinomial or logit demand}. Here, the demand is split between the $n$ agents/goods and some outside good (index $0$) according to
\begin{equation}\label{eq:logit_demand}
	d_i(a_i,a_\mi) = \dfrac{\exp \left( \frac{\alpha_i - a_i}{\mu} \right) }{ \exp \left( \frac{\alpha_0}{\mu} \right) + \sum_{j=1}^n \exp \left( \frac{\alpha_j - a_j}{\mu} \right)}. 
	\tag{logit}
\end{equation}
The parameters $\alpha_i > 0$ capture different product quality indices and $\mu > 0$ models a product differentiation between the goods, i.e., if $\mu \rightarrow 0$, the goods are perfect substitutes. (Please note the difference between $\alpha$ ("alpha") and $a$ in the formula.)

And finally, we also introduce the \textit{linear demand} that was considered in \citet{hansen_frontiers_2021}. The agents' demand function is given by 
\begin{equation}\label{eq:linear_demand}
	d_i(a_i, a_{-i}) = \alpha_i - \beta_i a_i + \tfrac{\gamma}{n-1} \sum_{j \neq i} a_j. 
	\tag{linear} 
\end{equation}
Similar to the standard model, the demand decreases with rate $\beta_i > 0$ in the price of the agent, but it also increases with rate $\gamma > 0$ in the sum of the prices of all other agents. 
We assume that the effect of the own price is greater than the effect of the average opponents' prices, i.e., $\beta_i > \gamma$. 

In a standard Bertrand model with homogeneous products and symmetric firms, prices equal marginal costs in equilibrium. However, this changes with asymmetries or product differentiation. 
With asymmetric costs between firms, the Nash equilibrium typically involves the low-cost firm pricing at or just below the marginal cost of the high-cost firm. With product differentiation (as in logit demand), equilibrium prices are typically above marginal costs due to firms having some market power. Independent of the demand model, we will use the Nash equilibrium as a baseline for the learning outcome.


\subsubsection{Game-Theoretical Properties}

Let us now discuss some properties that are already known about Bertrand pricing models.
\citet{milgrom1991adaptive} showed that Bertrand competition with linear and logit demand and continuous actions are (log-)supermodular games. As indicated earlier, iterated elimination of dominated strategies converges to a unique Nash equilibrium in this model. However, this is not the case for standard all-or-nothing demand, as we show (see Appendix, Proposition \ref{prop:super_standard}).
Potential games have even stronger properties, as we have seen. 
We show that the Bertrand competition with a linear demand function is not only supermodular, but also potential (see Appendix, Proposition \ref{prop:potential}). 
For Bertrand competitions with standard all-or-nothing and logit demand, we can show that these games are, in general, not potential games (see Appendix, Proposition \ref{prop:potentialstandard} and \ref{prop:potentiallogit}). 
An overview of these concepts for the different utility models is given in Table \ref{tab:overview_concepts}.

\subsection{Alternative Solution Concepts}

We explore alternative solution concepts to the Nash equilibrium. While the latter is computationally hard to find, there exist simple iterative algorithms that can identify strategy profiles corresponding to these solution concepts. 
In some games, the alternatives coincide with the Nash equilibrium, which is what we will leverage in our convergence proof for mean-based algorithms. 

\subsubsection{(Coarse) Correlated Equilibria}

Apart from Nash equilibria (NE), Correlated Equilibria (CE) and Coarse Correlated Equilibria (CCE) have received significant attention in the literature on learning in games. It is well known that $NE \subseteq CE \subseteq CCE$.
\begin{definition}[Correlated Equilibrium]
	A joint distribution $\sigma \in \Delta(\Acal)$ is a \emph{correlated equilibrium} if for every player $i \in \players$ and for all actions $a_i, a_i' \in \Acal_i$,
	\[
	\mathbb{E}_{a \sim \sigma}[u_i(a_i, a_{-i}) \mid a_i] \geq \mathbb{E}_{a \sim \sigma}[u_i(a_i', a_{-i}) \mid a_i].
	\]
\end{definition}

\begin{definition}[Coarse Correlated Equilibrium] \label{def:cce}
	A joint distribution $\sigma \in \Delta(\Acal)$ is a \emph{coarse correlated equilibrium} if for every player $i \in \players$ and for all actions $a_i' \in \Acal_i$,
	\[
	\mathbb{E}_{a \sim \sigma}[u_i(a_i, a_{-i})] \geq \mathbb{E}_{a \sim \sigma}[u_i(a_i', a_{-i})].
	\]
\end{definition}

Both CE and CCE capture a scenario in which a mediator recommends actions to the players, based on a joint action profile distribution, such that no player wants to deviate. For CEs, deviation may depend on the recommended action, while the deviation for CCEs does not take this information into account.
The corresponding CCE constraints form a polytope, and therefore, individual CCEs can be computed via a linear program. 
It was shown that algorithms with no internal regret converge to a CE and those with no external regret to a CCE \citep{foster1997CalibratedLearningCorrelated}. Popular bandit algorithms such as Exp3 are no-external-regret algorithms. However, CCEs are a weak solution concept, and they can contain dominated strategies \citep{viossat2013no}. Also, the set of CCEs can be a very large superset of the set of Nash equilibria. 
In Example \ref{ex:cce}, we computed different CCEs for Bertrand competition models with linear and all-or-nothing demand. 

\begin{example}[Coarse Correlated Equilibria in Bertrand Competitions] \label{ex:cce}
	We consider Bertrand competitions with 2 players, action spaces $\Acal = \{0.1, 0.2, \dots, 0.9 \}$, cost parameters $c_1 = c_2 = 0$, and two different demand functions, namely linear demand ($\alpha_i = 0.48, \, \beta_i = 0.9, \, \gamma = 0.6$ for $i \in \players$) and all-or-nothing demand ($D=1$). The computed CCEs are visualized in Figure \ref{fig:cce}.
	\begin{figure}[ht]
		\centering
		
		\hfill
		\begin{subfigure}{0.4\textwidth}
			\includegraphics[width=\textwidth]{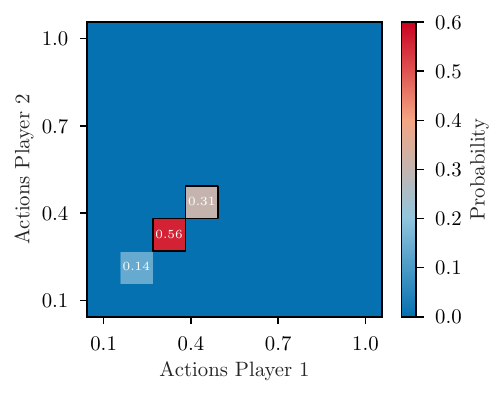}
			\subcaption{Linear Demand \label{fig:cce_linear}}
		\end{subfigure}
		\hfill
		\begin{subfigure}{0.4\textwidth}
			\includegraphics[width=\textwidth]{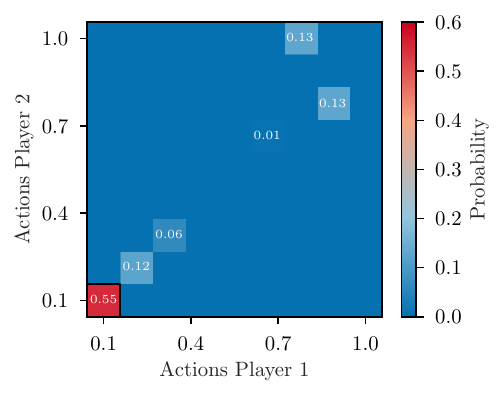}
			\subcaption{All-or-Nothing Demand\label{fig:cce_standard}}
		\end{subfigure}
		\hfill
		
		\caption{Coarse Correlated Equilibria for Different Bertrand Competitions}
		\label{fig:cce}
		
		\centering \small
		In the figures, we visualize exemplary CCEs and highlight the Nash equilibria (boxes with black frames). The CCEs were computed by solving an LP where the constraints come from the definition of a CCE. The LP objective is to maximize the Wasserstein distance to the Nash equilibria.
	\end{figure}
	For the linear demand model, the CCEs contain only actions near the Nash equilibrium. 
	However, for the all-or-nothing demand model, we can find CCEs with dominated actions. This is why the guarantee that no-regret learners converge to a CCE is not sufficient in a Bertrand competition with all-or-nothing demand.
\end{example}

\subsubsection{Serially Undominated Set}

Another important solution concept that can be computed by means of a simple, iterative algorithm is that of a serially undominated set. For this, let us introduce two types of dominance relationships. 

\begin{definition}[Dominated Actions]
	An action $a_i \in \Acal_i$ is 
	\begin{itemize}
		\item \emph{strongly dominated}, if there exists another action $\hat{a} \in \Acal_i$ such that $u_i(\hat{a}, a_{-i}) > u_i(a_i, a_{-i})$
		\item \emph{strictly dominated}, if there exists a mixed strategy $\hat{x} \in \Delta(\Acal_i)$ such that $u_i(\hat{x}, a_{-i}) > u_i(a_i, a_{-i})$ 
	\end{itemize}
	for all opponents' action profiles $ a_{-i} \in \Acal_{-i}$.
\end{definition}
By iterative removal of dominated actions, we end up with a serially undominated set. 
From \citep{milgrom1990rationalizability}, we know that in supermodular games, the iterative procedure of removing strongly dominated strategies leads to a set where the maximum and minimum elements are component strategies of pure Nash equilibria. This set includes all Nash equilibria, correlated equilibria, and rationalizable actions \citep{bernheim_rationalizable_1984, pearce_rationalizable_1984}.

Iteratively crossing out strictly dominated actions, also known as  \emph{Iterated Strict Dominance (ISD)}, leads to a smaller set of actions, which we will call \emph{strictly serially undominated set (SSU)}. A game is said to be \textit{strict dominance solvable} if ISD yields a unique strategy profile. In finite games, this strategy profile is necessarily the unique Nash equilibrium \citep{fudenberg1991game, Kolpin2009Strict}. Note that in order to perform ISD, we need complete information about the payoff matrix. This assumption is too strong for algorithmic pricing, which is why we will focus on bandit algorithms later. In finite games, there is a close relationship between actions that survive ISD and correlated rationalizable actions, which we will explore in the following subsection.

\subsubsection{Correlated Rationalizable Set}

Our main convergence statement is centered on the concept of \textit{correlated rationalizability} \citep{brandenburger1987rationalizability}. These are actions that are a best response against any joint mixed strategy of the opponents that is also correlated rationalizable. 


\begin{definition}[Correlated Rationalizable Set (CR)]
	\label{def:rationalizable-actions}
	Let $\bar{\Acal} = \bar{\Acal}_1 \times \dots \times \bar{\Acal}_{\n}$, $\bar{\Acal}_i \subseteq \Acal_i$, be the largest product set of actions such that, for each $i \in \players$ and $a_i \in \bar{\Acal}_i$
	\begin{equation*}
		\exists ~ \hat{x}_{-i} \in \Delta(\bar{\Acal}_{-i}), ~ \forall ~ a_i' \in \Acal_i : ~ u_i(a_i, \hat{x}_{-i}) \geq u_i(a_i', \hat{x}_{-i}).
	\end{equation*}
	The set $\bar{\Acal}$ is called the \emph{correlated rationalizable set} (CR).
\end{definition}
This means that an action $a_i $ is in $\bar{\Acal}_i$ ("correlated rationalizable") if it maximizes the player's utility for some probability $\hat{x}_{-i}$ with support on $\bar{\Acal}_{-i}$. The definition is recursive, which ensures that only actions are in the support of $\hat{x}_{-i}$ which are correlated rationalizable themselves. 

Different from the definition of \textit{rationalizable} actions (\cite{bernheim_rationalizable_1984}, \cite{pearce_rationalizable_1984}), correlated rationalizability allows the mixed strategies $\hat{x}_{-i}$ to be dependent/correlated probability distributions. The set of rationalizable actions is a subset of the set of correlated rationalizable actions, but the subset relation may not be strict.
\citet{bernheim_rationalizable_1984} has shown that all Nash equilibria are rationalizable, so they are also correlated rationalizable. 
In finite games, iterated strict dominance and correlated rationalizability give the same solution set \citep{levin2006solution}. While this set (SSU) can be computed with ISD having the complete payoff matrix available, we will show that uncoupled mean-based algorithms also find specific instances of the correlated rationalizable set.

\subsubsection{Summary}

We summarize the relation between different solution concepts for games in Figure \ref{fig:solution-concepts-venn-diagram}. 
It is well known that $NE \subseteq CE \subseteq CCE$ in finite games. 
The relationship between correlated rationalizable sets (or strictly serially undominated sets) and [coarse] correlated equilibria is more complex. 
We can show that actions contained in the support of correlated equilibria are also correlated rationalizable.
Additionally, while it is known that CCE may contain dominated strategies \citep{viossat2013no}, we can show that actions supported by a CCE are not a superset of correlated rationalizable actions. 
The following propositions formalize these relations. Please refer to the appendix for both proofs.
\begin{proposition}
	\label{prop:CE-is-correlated-rationalizable}
	Given a finite normal-formal game, any action profile that is in the support of a correlated equilibrium is also in the correlated rationalizable set.
\end{proposition}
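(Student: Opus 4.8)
The plan is to exhibit, for each player, the conditional distributions induced by the correlated equilibrium as the ``beliefs'' witnessing correlated rationalizability, and then to invoke maximality of the correlated rationalizable set.

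First, I would reduce the claim to marginal supports. Write $\sigma$ for the correlated equilibrium and, for each $i$, let $S_i := \{ a_i \in \Acal_i : \sigma(\{a_i\} \times \Acal_{-i}) > 0 \}$ be its $i$-th marginal support; set $S := S_1 \times \dots \times S_\n$. Since the correlated rationalizable set $\bar\Acal$ of Definition~\ref{def:rationalizable-actions} is by construction a product set, and since every $a \in \operatorname{supp}(\sigma)$ has $a_i \in S_i$ for all $i$, it is enough to prove $S_i \subseteq \bar\Acal_i$ for each $i$; equivalently, because $\bar\Acal$ is the (inclusion-)maximal correlated rationalizable set — the coordinatewise union over all such sets is again one, and in finite games it coincides with the strictly serially undominated / ISD set noted after the definition — it suffices to show that $S$ itself satisfies the defining best-response property of a correlated rationalizable set.

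Second, I would verify that property directly from the correlated-equilibrium inequality. Fix a player $i$ and an action $a_i \in S_i$. By definition of $S_i$, the recommendation event $\{a_i\}\times\Acal_{-i}$ has positive $\sigma$-probability, so the conditional law $\hat x_{-i} := \sigma(\,\cdot \mid a_i) \in \Delta(\Acal_{-i})$ is well defined, and its support lies in $S_{-i}$: if $\hat x_{-i}(a_{-i})>0$ then $\sigma(a_i,a_{-i})>0$, hence each coordinate $a_j$, $j\neq i$, lies in $S_j$. Now the correlated-equilibrium condition for player $i$, conditioned on being recommended $a_i$, reads $\E_{a_{-i}\sim \hat x_{-i}}[u_i(a_i,a_{-i})] \ge \E_{a_{-i}\sim \hat x_{-i}}[u_i(a_i',a_{-i})]$ for every $a_i'\in\Acal_i$, i.e. $u_i(a_i,\hat x_{-i}) \ge u_i(a_i',\hat x_{-i})$ for all $a_i'$. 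With $\hat x_{-i}\in\Delta(S_{-i})$ this is exactly the clause required of $\bar\Acal_i = S_i$ in Definition~\ref{def:rationalizable-actions}, so $S$ is correlated rationalizable; combined with the reduction above this gives $\operatorname{supp}(\sigma)\subseteq\bar\Acal$.

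I do not anticipate a substantive obstacle: morally, the correlated-equilibrium incentive constraint \emph{is} the correlated-rationalizability best-response constraint for the posterior belief $\sigma(\cdot\mid a_i)$. The only points needing care are bookkeeping: (i) $\bar\Acal$ must be argued to be rectangular, so one works through the marginal supports $S_i$ rather than demanding that $\operatorname{supp}(\sigma)$ itself be a product set; (ii) conditioning is legitimate only on positive-probability recommendations, which is automatic for $a_i\in S_i$; and (iii) the recursive clause of Definition~\ref{def:rationalizable-actions} — that the witnessing belief be supported inside $S_{-i}$ — holds for free because $\operatorname{supp}(\sigma)$ is coordinatewise contained in $S$. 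As an alternative, if one prefers to match the ISD characterization, I would instead induct on the rounds of iterated strict dominance and show, using the same posterior-belief argument, that no action in $S_i$ is ever eliminated; I would keep the maximal-fixed-point version above as the cleaner route.
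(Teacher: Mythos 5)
Your proposal is correct and follows essentially the same route as the paper's proof: both use the posterior belief $\sigma(\cdot \mid a_i)$ induced by the correlated equilibrium as the witnessing distribution in Definition~\ref{def:rationalizable-actions}, and both observe that this belief is supported on actions that are themselves recommended with positive probability, so the product of marginal supports is a correlated rationalizable set. Your write-up is in fact somewhat more careful than the paper's, making explicit the reduction to the rectangular set $S$ of marginal supports and the maximality of the correlated rationalizable set, steps the paper compresses into ``mutually correlated rational responses show that the actions are indeed correlated rationalizable.''
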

\begin{proposition}\label{prop:SSU-not-in-CCE}
	There exists a finite-normal form game where an action from the strictly serially undominated set is not in the support of any CCE.
\end{proposition}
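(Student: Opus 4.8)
The plan is to exhibit an explicit small finite normal-form game in which some action survives iterated strict dominance (hence lies in the strictly serially undominated set, which by the equivalence cited just before equals the correlated rationalizable set) but receives probability zero under every coarse correlated equilibrium. The natural candidate is a two-player game of modest size — a $3\times 3$ game is the smallest place where CCEs can genuinely fail to contain an undominated action, since in $2\times 2$ games every undominated action is already in the support of some Nash equilibrium. I would take a game engineered so that one player has an action, say the middle row/column, that is never strictly dominated (so it survives ISD) yet is a strictly worse response than some other action against \emph{every} product distribution that a CCE could concentrate on; the known pathology that CCEs can place weight only on dominated outcomes \citep{viossat2013no} gives the template, and I would reverse-engineer from there.

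The key steps, in order, are: (1) write down the $3\times3$ payoff bimatrix explicitly and verify that the distinguished action $a^\star$ is not strictly dominated — i.e., for no mixed strategy $\hat x_i$ of the owning player do we have $u_i(\hat x_i, a_{-i}) > u_i(a^\star, a_{-i})$ for all $a_{-i}$; equivalently, check via an LP / the separating-hyperplane characterization of strict dominance that $a^\star$ is a best response to some belief, so it lies in the CR set. (2) Verify that no further ISD rounds eliminate $a^\star$: after removing any strictly dominated actions of the opponent, re-check that $a^\star$ remains undominated; iterate until the process stabilizes, confirming $a^\star \in \text{SSU}$. (3) Characterize the CCE polytope: using Definition \ref{def:cce}, write the linear inequalities $\mathbb{E}_{a\sim\sigma}[u_i(a_i,a_{-i})] \geq \mathbb{E}_{a\sim\sigma}[u_i(a_i',a_{-i})]$ for both players and all deviations $a_i'$, and show every feasible $\sigma$ satisfies $\sum_{a_{-i}} \sigma(a^\star, a_{-i}) = 0$. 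The cleanest way to do the last step is to produce a single deviation inequality (or a nonnegative combination of the CCE constraints) whose slack is a nonnegative multiple of the marginal probability on $a^\star$, forcing that marginal to vanish; this is a short Farkas-type certificate that can be displayed in one or two lines.

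I expect step (3) — pinning down that \emph{every} CCE, not just some, excludes $a^\star$ — to be the main obstacle, because the CCE polytope is typically full-dimensional and one must rule out the whole face structure rather than exhibit one point. The way to tame this is to choose the game so that the exclusion is forced by a \emph{single} violated incentive constraint: arrange payoffs so that whenever $\sigma$ puts any mass on the row/column containing $a^\star$, the owner of $a^\star$ strictly gains by switching unconditionally to a fixed alternative action, with the gain proportional to that mass. Then one CCE inequality alone certifies $\sigma(a^\star,\cdot)=0$, and steps (1)–(2) only need to guarantee $a^\star$ is nonetheless not \emph{strictly} dominated in the original game (dominated only in a ``weak'' or belief-dependent sense that ISD cannot detect). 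Once such a payoff matrix is fixed, all three verifications are finite arithmetic; I would present the bimatrix, the one-line dominance check, and the one-line CCE certificate, and remark that the equivalence $\text{SSU} = \text{CR}$ from \citep{levin2006solution} then yields the claim as stated.
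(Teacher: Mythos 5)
Your overall architecture -- exhibit an explicit game, check that the distinguished action survives iterated strict dominance, then give a Farkas-type certificate from the CCE inequalities forcing its marginal to zero -- is exactly the right shape and is the shape of the paper's proof. But as written there is a genuine gap: the proposition is an existence claim, and your proposal never actually produces the game. Everything hinges on the payoff bimatrix, and "I would reverse-engineer from there" leaves the entire content of the proof outstanding; the three verifications you list are only finite arithmetic \emph{once the example exists}. Moreover, your premise for looking at $3\times 3$ games is false: it is not true that in $2\times 2$ games every serially undominated action lies in the support of some Nash equilibrium (or CCE). The paper's proof uses the $2\times 2$ game with
\[
P_1 = \begin{pmatrix} 0 & 2 \\ 2 & 1 \end{pmatrix}, \qquad
P_2 = \begin{pmatrix} 2 & 0 \\ 0 & 0 \end{pmatrix},
\]
in which no action is strictly dominated (so SSU is the full action set), yet the column player's unconditional-deviation constraint forces $\sigma_{12}=0$ and the row player's then forces $\sigma_{11}=0$, so row~1 has zero mass in every CCE; one can check it is also in no Nash equilibrium. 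Searching only among $3\times 3$ games would not break the proof, but the justification you give for doing so is wrong.

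A secondary point on your step (3): the "cleanest" version you describe -- a \emph{single} deviation inequality whose slack is a positive multiple of the marginal on $a^\star$ uniformly in $\sigma$ -- cannot exist, because it would require $u_i(b,a_{-i})-u_i(a^\star,a_{-i})=c>0$ for every opponent profile $a_{-i}$, i.e., $a^\star$ would be strictly dominated by $b$, contradicting what step (1) must establish. The exclusion necessarily comes from combining at least two CCE constraints (in the paper's example, one from each player): a nonnegative combination of the two constraints above yields $-2\sigma_{11}-\sigma_{12}\ge 0$, which together with nonnegativity kills both entries. Your fallback of "a nonnegative combination of the CCE constraints" is the viable route; the single-inequality shortcut is not.
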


\begin{figure}[ht]
	\centering
	\begin{tikzpicture}
		\draw[line width=0.33mm, rounded corners, dashed] (-6, -1) rectangle (4, 2); 
		\draw[line width=0.33mm, rounded corners] (-4, -0.75) rectangle (5, 1.75); 
		\draw[line width=0.33mm, rounded corners, dashed] (-2, -0.5) rectangle (3, 1.5); 
		\draw[line width=0.33mm, rounded corners] (0, -0) rectangle (2, 1); 
		\node at (1, 0.5) {NE}; 
		\node at (-1, 0.75) {CE}; 
		\node at (-3, 1) {CR=SSU};   
		\node at (-5, 1.25) {CCE};
		\node at (4.5, 0.5) {}; 
	\end{tikzpicture}
	\caption{Relation Between Solution Concepts}
	\label{fig:solution-concepts-venn-diagram}
	
	\centering \small
	The displayed relations are between the support sets of the solution concepts.\\
	CCE = Coarse Correlated Equilibrium, SSU = Strictly Serially Undominated, CR = Correlated rationalizable, CE = Correlated Equilibrium, NE = Nash equilibrium.
\end{figure}

As indicated earlier, algorithms that have no (external) regret (see Equation \eqref{def:noregret}) converge to a CCE, while algorithms with no internal regret (such as regret matching) converge to CEs \citep{foster1997CalibratedLearningCorrelated}.
ISD computes the strictly serially undominated set (SSU) and, thus, also the correlated rationalizable set (CR), but this algorithm requires full access to the payoff matrix. 

Table \ref{tab:overview_concepts} summarizes properties of games that are satisfied by Bertrand competitions with different demand models (all-or-nothing, linear, and logit demand) and algorithms that are known to converge to a Nash equilibrium if certain of the game properties hold. 

\begin{table}[h]
	\centering
	\footnotesize
	
	\begin{tabular}{ l l l l r }
		\hline
		\multirow{2}{*}{\textbf{Game properties}} & \multicolumn{3}{c}{\textbf{Demand Model}} & \multirow{2}{*}{\textbf{Convergence to NE}}  \\ 
		& All-or-Nothing & Linear & Logit &  \\ 
		\hline
		Potential   & \xmark ~ (Prop.~\ref{prop:potentialstandard})  & \cmark ~ (Prop.~\ref{prop:potential}) & \xmark ~ (Prop. \ref{prop:potentiallogit})  & FP, BR, Exp3 \\
		Supermodular \& unique NE   & \xmark ~ (Prop. \ref{prop:super_standard}) & \cmark ~ (MR90)  & \cmark ~ (MR90) & ISD, FP, BR (MR90)\\
		Unique CR  & \cmark ~ (Prop. \ref{prop:standard-demand-correlated-rationalizable}) & \cmark\textsuperscript{1}~ (Prop. \ref{prop:linear-demand-correlated-rationalizable-actions}) & \cmark\textsuperscript{1} ~ (Prop. \ref{prop:logit-demand-correlated-rationalizable-actions}) & ISD, \textit{MB} (Thm. \ref{thm:bertrand}) \\ 
		\hline
	\end{tabular}
	\caption{Overview of Games and Properties. }
	\label{tab:overview_concepts}
	
	\raggedright \small
	Fictitious play (FP), best response (BR), Exponential Weights (Exp3), iterated strict dominance (ISD), mean-based (MB); (MR90) refers to \citet{milgrom1990rationalizability}.\\
	\textsuperscript{1} The set of CR is almost unique. We show that only two neighboring actions per agent can be in CR. Depending on the discretization, either one or both are NE. 
\end{table}

The Bertrand competition with linear demand is a potential game for which we know that even bandit algorithms such as Exp3 converge. However, other demand types do not lead to a potential game as we showed earlier. \citet{milgrom1990rationalizability} showed that Bertrand competition with continuous actions and linear demand is supermodular and has a unique Nash equilibrium, while with logit demand, it is log-supermodular. We don't know of convergence proofs for bandit algorithms, but we know that ISD, Fictitious Play, and best response dynamics find the unique Nash equilibrium in such supermodular games. ISD also finds correlated rationalizable sets, because they are equivalent to serially undominated sets. Our main analytical contribution is to show that mean-based algorithms, with Exp3 as the leading example, converge to the correlated rationalizable set. Importantly, in a Bertrand competition with all-or-nothing or linear demand, this set is either a singleton, meaning it must be the Nash equilibrium, or consists of only two adjacent actions. 

\section{Convergence of Mean-Based Algorithms}
\label{sec:convergence-of-mean-based-algorithms}




Let us now focus on mean-based algorithms. These are online optimization algorithms that pick actions with low average rewards with low probability \citep{braverman_selling_2018, deng2022nash}. 

\begin{definition}[Mean-based Algorithm]
	Let $\alpha_t(a)$ be the average reward of action $a \in \Acal$ in the first $t-1$ rounds: $\alpha_t(a) = \frac{1}{t-1} \cdot \sum_{s=1}^{t-1} u_s(a)$. 
	An algorithm is $\gamma_t$-mean-based if, for any $a \in \Acal$, whenever there exists $a' \in \Acal$ such that $\alpha_t(a^\prime)-\alpha_t(a) > \gamma_t$, the probability that the algorithm picks $a$ at round $t$ is at most $\gamma_t$. An algorithm is \emph{mean-based} if it is $\gamma_t$-mean-based for some decreasing sequence $(\gamma_t)_{t=1}^\infty$ such that $\gamma_t \rightarrow 0$ as $t \rightarrow \infty$.
\end{definition}

Many no-regret algorithms, such as Exp3, MWU, and FTPL, are also mean-based. We know from \citet{kolumbus_auctions_2022} that, if a mean-based algorithm converges to a CCE, then the CCE is co-undominated, which avoids outcomes with dominated strategies as illustrated in Figure \ref{fig:cce_standard}. 
\citet{deng2022nash} and \citet{feng_convergence_2021} analyzed the convergence of mean-based algorithms in first-price and second-price auctions, and they found convergence in the complete-information games. We focus on Bertrand competitions, which differ due to the various demand models. 

Below, we provide proof that mean-based algorithms converge to strategy profiles in the correlated rationalizable set. 
We analyze time-average and last-iterate convergence \citep{anagnostides2024convergence}. While the former measures the fraction of times that all players have chosen actions from a reference set, the latter makes a statement about the current strategy of the players.

\begin{definition}[Time-average Convergence]
	A sequence of action profiles $\{a_s\}_{t = 1}^{\infty}$ converges in time-average to a set of action profiles $\bar{\Acal} \subseteq \Acal$ if
	$\lim_{t \to \infty} \frac{1}{t} \sum_{s = 1}^t \I[a_{s} \in \bar{\Acal}] = 1$.
	The sequence converges in time-average to a single action profile $\bar a$ if $\bar \Acal = \{ \bar a \}$.
\end{definition}

\begin{definition}[Last-iterate Convergence]
	A sequence of action profiles $\{a_s\}_{t = 1}^{\infty}$ converges in last-iterate to a set of action profiles $\bar{\Acal} \subseteq \Acal$ if 
	$\lim_{t \to \infty} \mathbb{P}_t(a_t \in \bar{\Acal}) = 1$.
	The sequence converges in last-iterate to a single action profile $\bar a$ if $\bar \Acal = \{ \bar a \}$.
\end{definition}


With these definitions at hand, we state our central convergence theorem. For brevity of notation, we restrict ourselves to an informal formulation and provide a mathematically thorough version in our Appendix \ref{sec:formal-theorem}. 
The proof, building on techniques introduced by \citet{feng_convergence_2021} and \citet{deng2022nash}, along with the referenced propositions below, can also be found in Appendix \ref{sec:proof of the theorems}. 

\begin{reftheorem}{Theorem \ref{thm:mean-based-convergence} (Informal version)}
	\label{thm:mean-based-convergence-informal}
	If all players use mean-based algorithms in a finite normal-form game, their empirical frequency of play converges almost surely to the set of correlated rationalizable actions (time-average convergence). This also holds if players $i$ observe $U_{it}$ at time $t$, which is a stochastic version of the average reward $u_i$, as long as the random variables $U_{it}(a_i, a_{-i})$ are bounded, sampled independently for each time $t$, and have expected value $u_i$. Moreover, agents may enter the game at different points in time $\tau_i$.
\end{reftheorem}

This theorem states that players will eventually play correlated rationalizable actions in almost every round. But we can also say something about the actual (mixed) strategy profile of the players, i.e., about last-iterate convergence. 
We formalize this result in the corollary below and provide its proof in Appendix \ref{sec:last-iterate-convergence}.
\begin{corollary}[Last-iterate convergence]
	\label{cor:last-iterate-convergence}
	Under the same assumptions as in Theorem \ref{thm:mean-based-convergence}, the probability that the agents select correlated rationalizable actions approaches 1 for $t \to \infty$.
\end{corollary}
If the correlated rationalizable actions correspond to the unique Nash equilibrium, we thus get last-iterate convergence to the Nash equilibrium. 
Due to the relation between correlated rationalizable actions and serially undominated actions, we can also state the following corollary.
\begin{corollary}
	Mean-based algorithms converge to strategy profiles in SSU.
\end{corollary}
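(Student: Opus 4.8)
The plan is to obtain this corollary with essentially no further work, by combining the main convergence theorem (Theorem~\ref{thm:mean-based-convergence-informal}) with the inclusions among solution concepts recorded in Section~\ref{sec:model}. Recall that Theorem~\ref{thm:mean-based-convergence-informal} already gives, whenever all players run mean-based algorithms in a finite normal-form game, almost-sure time-average convergence of the empirical frequency of play to the correlated rationalizable set $\mathrm{CR}$. So it suffices to show that $\mathrm{CR}$ sits inside the serially undominated set --- the product set $U$ surviving iterated removal of strongly dominated actions, in the sense of \citet{milgrom1990rationalizability} --- and the conclusion ``convergence into $U$'' follows a fortiori. (If instead one reads ``serially undominated set'' as the strictly serially undominated set $\mathrm{SSU}$, the statement is even more immediate, since $\mathrm{CR}=\mathrm{SSU}$.)

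Concretely I would proceed in three short steps. First, invoke the equivalence $\mathrm{CR}=\mathrm{SSU}$ valid in every finite game \citep{levin2006solution}, already stated before Figure~\ref{fig:solution-concepts-venn-diagram}, so that the limit set in Theorem~\ref{thm:mean-based-convergence-informal} can be identified with the one produced by iterated strict dominance. Second, show $\mathrm{SSU}\subseteq U$: being strongly dominated by a pure action $\hat a$ is exactly being strictly dominated by the point-mass mixture $\mathbf 1_{\hat a}$, so strong dominance implies strict dominance; hence any product set on which no player has a strictly dominated action is also a product set on which no player has a strongly dominated action, and applying this to $\mathrm{SSU}$ together with the maximality (fixed-point) characterization of $U$ gives $\mathrm{SSU}\subseteq U$. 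Alternatively one can argue directly that a correlated rationalizable action is a best reply to some belief and therefore cannot be strictly --- hence not strongly --- dominated at any elimination stage. Third, chain $\mathrm{CR}=\mathrm{SSU}\subseteq U$ and transfer the almost-sure time-average limit from Theorem~\ref{thm:mean-based-convergence-informal} to $U$.

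The one point that needs care is the second step. A round-by-round comparison of the two elimination procedures is awkward, because the strict procedure may delete strictly more at each stage and the surviving sets need not be nested; the clean route is to compare the two sets through their fixed-point characterizations, which relies on the (classical) order-independence of iterated elimination of dominated actions in finite games. One should also check that the result is insensitive to whether the dominating mixtures are taken over all of $\Acal_i$ or only over the currently surviving set, since both conventions yield the same $\mathrm{SSU}$ in finite games. Beyond this the corollary carries no new probabilistic content: all of it is inherited from Theorem~\ref{thm:mean-based-convergence-informal}, and the rest is a chain of set inclusions among solution concepts.
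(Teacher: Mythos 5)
Your proof is correct and follows essentially the same route as the paper, which disposes of the corollary in one line by combining the main convergence theorem with the asserted equivalence between correlated rationalizable sets and serially undominated sets. Your extra step establishing $\mathrm{CR}=\mathrm{SSU}\subseteq U$ (to cover the reading where ``serially undominated'' refers to iterated removal of \emph{strongly} dominated actions) is a careful refinement of a point the paper glosses over, but the underlying argument is identical.
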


Next, we turn to the class of Bertrand competition games more specifically. Let us present our second main result informally.
\begin{inf_theorem}\label{thm:bertrand}
	In symmetric Bertrand competition games with standard demand, linear demand, and logit demand and discrete actions, the empirical play of mean-based algorithms converges almost surely to actions that are close to their (discrete) Nash equilibria, and the probability of playing such actions approaches 1.
\end{inf_theorem}

For this, we just need to show that the correlated rationalizable sets are close to the set of Nash equilibria for the three demand models, which we do in Appendix \ref{app:correlated-rationalizable-bertrand}. The proofs also show that, at least for standard and linear demand, the discrete Nash equilibria are close to their continuous counterparts.

\section{Experimental Analysis}
\label{sec:experimental-analysis}

In our experimental analysis, we analyze the behavior of various bandit algorithms in Bertrand oligopolies with different demand models. The focus variables of our experiments are the converged prices and the convergence behavior. 
We focus on scenarios that our theoretical contribution does not cover. We use algorithms that are well-known but (with the exception of Exp3) not known to be mean-based, and we simulate their behavior in a variety of Bertrand oligopolies. 

In our numerical experiments, most algorithms consistently reach Nash equilibrium prices within a few iterations, indicating that even without tens of thousands of repeated interactions and without the mean-based property, supra-competitive prices are rarely established. By including different demand functions and both symmetric and asymmetric environments, we show that our findings are robust across a wide range of treatments.

\subsection{Selected Algorithms}

In our study, we include $\epsilon$-greedy, UCB, Thompson Sampling, and Exp3, which are widely used and cited in the literature \citep{bubeck2011introduction}. 
All bandit algorithms share similarities. First, an action is selected according to some potentially random action-selection rule that is based on past experiences. Then, the agent receives a corresponding reward feedback with which it subsequently updates its internal state. We describe this generic scheme in the Algorithm \ref{alg:bandit-feedback-algorithm} below. All algorithms that we analyze follow this generic scheme while implementing distinct rules for selecting actions and updating their beliefs. Details of the algorithms can be found in appendix \ref{app:algorithms}.


\begin{algorithm}
	\caption{Bandit Algorithm Scheme}\label{alg:bandit-feedback-algorithm}
	\begin{algorithmic}
		\State Initialization: Algorithm parameters $\theta$
		\For{$t = 1, 2, \dots$}
		\State $a_t \leftarrow$ \texttt{get\_action($\theta$, $t$)} \hfill \textit{// Agents choose action.}
		\State $u_t \leftarrow$ \texttt{environment($a_t$)} \hfill \textit{// Agents submit actions and observe own utility.}
		\State $\theta \leftarrow$ \texttt{update($\theta$, $a_t$, $u_t$, $t$)} \hfill \textit{// Agents update their strategy.}
		\EndFor
	\end{algorithmic}
\end{algorithm}

\subsection{Experiment Setup}
\label{sec:experiment-setup}

We demonstrate that even in treatments where no analytical convergence results are available, convergence to competitive prices can be observed consistently with many bandit algorithms. \textit{Supra-competitive ("collusive") prices}, which are prices between the Nash equilibrium and the joint profit-maximizing prices, only occur under symmetric combinations of certain algorithms.
We characterize supra-competitive prices via the following metrics. Let $(\bar{a}_i)_{i = 1}^{\n}$ be the largest component strategies of any pure Nash equilibrium, and let $(a^M_i)_{i = 1}^{\n}$ be the actions played in the monopoly. We define the \textit{price-competition index} of player $i$ and the \textit{profit-competition index} of all players by 
\begin{equation}
	CI_{price, i}(a_i) = \frac{a_i - \bar{a}_i}{a^M_i - \bar{a}_i}, ~ i \in [\n]
	\qquad \text{and} \qquad 
	CI_{profit}(a) = \frac{\sum_{i = 1}^{\n} ( u_i(a) - u_i(\bar{a}) ) }{\sum_{i = 1}^{\n} ( u_i(a^M) - u_i(\bar{a}) ) }.
\end{equation}
A price-competition index of zero captures that player $i$ plays $\bar{a}_i$, and thus is competitive. A value of one means that the agent plays its monopoly price, indicating non-competitive pricing. The profit-competition index represents the fraction of monopolistic payoff surplus that all agents could achieve in total.

Let us briefly discuss the choice of metrics. In essence, we want to capture a deviation from competitive play (Nash equilibrium) to cooperation. A Nash equilibrium naturally lends itself as the lower bound of our metric's values, but the choice of the upper deserves discussion. As noted by \citet{loots_datadriven_2023}, multiple notions of collusive prices are reasonable, and joint profit maximization might not always be the adequate choice. In line with previous research \citep{Calvano.2020}, we nonetheless select these prices as our upper reference points, because in our analyzed settings, joint profit maximization is beneficial to all firms. Being a joint maximum, it is Pareto-optimal for sellers, too. As indicated by \citet{loots_datadriven_2023}, joint profit maximization also exhibits the highest price increase and the worst effect on consumer welfare, on average.

In the following, we introduce the treatment variables of our experiments, such as the oligopoly configurations and the bandit algorithms, as well as the experiment setup. We then state the conclusions drawn from our experiments and present the empirical evidence supporting them.

In our experiments, we vary the games that the agents compete in, the algorithms they use, and the number of competitors they face. An experiment is formed by a unique combination of these factors. 
Table \ref{tab:experiment-treatment-variables} provides an overview of the treatment and focus variables. Below, we describe the individual treatments in more detail.

\begin{table}[H]
	\centering
	
    \begin{tabular}{l l}
        \hline
        \textbf{Treatment variables} & \textbf{Variations} \\
        \hline
        Oligopoly model & Demand models (standard, linear, logit), Symmetric/asymmetric parameterization \\
        Algorithms & UCB-T, Exp3-$\epsilon$, $\epsilon$-Greedy, TS, (UCB1, Q-Learning, MWU)\\
        Number of competitors & 2 - 10 \\
        \hline
        \textbf{Focus variables} & \textbf{Description} \\
        \hline
        Prices & Prices after convergence \\
        Convergence speed & Time to convergence and convergence behavior \\
        \hline
    \end{tabular}
    \caption{Treatment and Focus Variables of the Experiments}
    \label{tab:experiment-treatment-variables}
\end{table}

\paragraph{Oligopoly Model.}
Our oligopolies model the demand and the costs that firms face. They are based on the games described in Section \ref{sec:bertrand-oligopoly}. We implemented models with \ref{eq:demand_standard}, \ref{eq:linear_demand}, and \ref{eq:logit_demand} demand functions, where the games with linear and logit demand are inspired by \citep{hansen_frontiers_2021} and \citep{Calvano.2020}, respectively. Depending on the parameter choices, the utility functions are the same for all agents (symmetric game) or not (asymmetric game). Our configurations are summarized in Table \ref{tab:bertrand-parameters}.
The asymmetric settings (\textit{O2'}, \textit{O3'}) are a step towards more realistic scenarios where firms might face different costs and demand structures. 
The price ranges are chosen to span the Nash equilibrium prices and the prices that maximize the joint profit, including a small margin. The agents can choose from 21 evenly distributed prices within this range. 
For example, an agent in \textit{O3} can select actions from the set $\{1.00, 1.05, 1.10, \dots, 1.95, 2.00\}$. 
Other discretizations, and in particular asymmetric discretizations for the players, did not substantially change our results in preliminary experiments.

\begin{table}[h]
	\centering
	\footnotesize
	
	\begin{tabular}{c | c c c c | c c}
		\toprule
		Model & Demand & Costs $(c_1, c_2)$ & Other parameters & Price range & Nash & Monopoly \\
		\midrule
		\textbf{O1} & standard & $(0.0, 0.0)$ & - & $[0.05, 1.00]$ & $(0.05, 0.05)$ & $(0.48, 0.48)$ \\
		\textbf{O2} & linear & $(0.0, 0.0)$ & $\alpha = 0.48$, $\beta = 0.9$, $\gamma = 0.6$ & $[0.00, 1.00]$ & $(0.40, 0.40)$ & $(0.80, 0.80)$ \\
		\textbf{O3} & logit & $(1.0, 1.0)$ & $a_1 = a_2 = 2.0$, $a_0 = 0.0$, $\mu = 0.25$  & $[1.00, 2.00]$ & $(1.50, 1.50)$ & $(1.90, 1.90)$ \\
		\textbf{O2'} & linear & $(0.0, 0.2)$ & $\alpha = 0.48$, $\beta = 0.9$, $\gamma = 0.6$ & $[0.00, 1.00]$ & $(0.45, 0.50)$ & $(0.80, 0.90)$  \\
		\textbf{O3'} & logit & $(0.5, 1.0)$ & $a_1 = 1.5$, $a_2 = 2.0$, $\mu = 0.25$, $a_0 = 0.0$ & $[1.00, 2.00]$ & $(0.95, 1.48)$ & $(1.40, 1.93)$  \\
		\bottomrule
	\end{tabular}
	
	\caption{Parameterization of the Bertrand oligopoly models.}
	\label{tab:bertrand-parameters}
	
	\raggedright \small
	The (pure) Nash equilibria and monopoly prices are stated for the discretized game. In case of multiple equilibria, the maximum prices were selected.
\end{table}

\paragraph{Algorithms.}
We investigate the behavior of four widespread bandit algorithms. First, we evaluate a variant of the UCB-Tuned (\textit{UCB-T}) algorithm that resolves ties randomly and does not eliminate actions. Next, we run experiments with the simple $\epsilon$-greedy algorithm (\textit{$\epsilon$-Greedy}) and an Exp3 (\textit{Exp3-$\epsilon$}) algorithm with fixed exploration rate $\epsilon$. We note that this version of Exp3 is not mean-based. Known mean-based versions of Exp3 suffer from a very slowly decaying exploration rate, which makes them impractical for experiments and applications. Finally, we implemented an algorithm (\textit{TS}) that performs Thompson Sampling with Gaussian priors, likelihoods, and posteriors. We note that our experiment results are stable against reasonable modifications of the parameters. In one result on mean-based algorithms, we also analyze Multiplicative Weights Update (MWU) and the mean-based version of Exp3 introduced by \citet{braverman_selling_2018}. In the appendix, we also report results on Q-learning to provide a comparison. However, given the multitude of hyperparameters available in this algorithm, we restrict our analysis to a few experiments replicating prior results. 

\paragraph{Number of competitors.}
Some of our empirical results are based on the number of firms that compete in an oligopoly. The symmetric games (\textit{O1}, \textit{O2}, \textit{O3}) can be easily extended to an arbitrary number of players. We investigate numbers between two and ten for an oligopoly based on \textit{O2}.

\vspace{\baselineskip}

We run each experiment ten times with random seeds from 0 to 9. Fixing the seeds allows us to reproduce results despite the random nature of the algorithms. Each run consists of 250,000 iteration steps during which the agents first submit their actions simultaneously and independently to the market. Then, the demands are evaluated, and the agents observe their respective rewards. The agents update their beliefs at the end of each step. We found that almost all of our experiments converged within the provided time frame.

\subsection{Results}

Let us now report the main results of our experimental analysis. 

\subsubsection{Convergence Speed with Mean-based Algorithms}

Let us first focus on markets with mean-based algorithms where our convergence guarantee applies. Speed of convergence is unknown, however. We consider the mean-based versions of MWU and Exp3 that have been described by \cite{braverman_selling_2018}. 
Figure \ref{fig:price-evolution-mb} shows the evolution of the price-competition index over the course of the experiments. We observe convergence to very low index values. While the MWU algorithm reaches the final outcome within a few iterations, the Exp3 algorithm requires noticeably more iterations. This is expected as MWU has access to full feedback while Exp3 only receives bandit feedback. Below, we use a different version of Exp3 that is faster than its mean-based counterpart.

\begin{figure}
	\centering
	
	\hfill
	\begin{subfigure}{0.44\textwidth}
		\includegraphics[width=\textwidth]{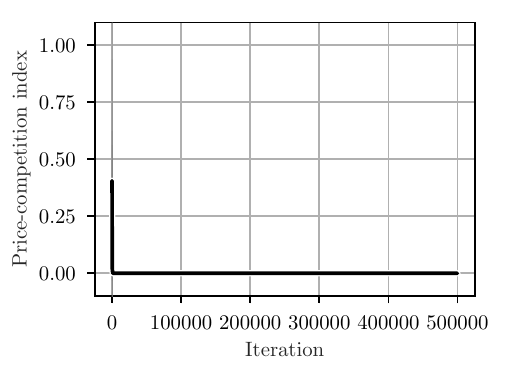}
		\subcaption{Multiplicative Weights}
	\end{subfigure}
	\hfill
	\begin{subfigure}{0.44\textwidth}
		\includegraphics[width=\textwidth]{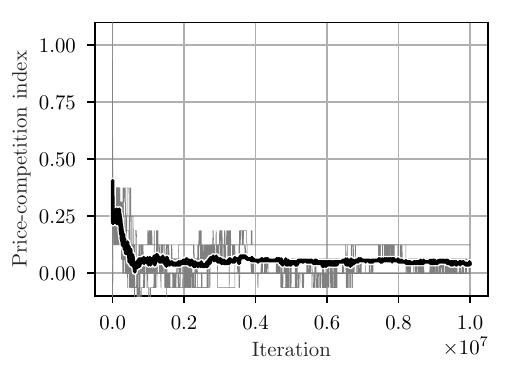}
		\subcaption{Exp3 (mean-based)}
	\end{subfigure}
	\hfill
	
	\caption{Evolution of Prices During Training for Mean-based Algorithms}
	\label{fig:price-evolution-mb}
	
	\centering \small
	The experiments are based on the symmetric environment with linear demand (\textit{O2}). The figures display the competition indices based on the charged prices (running medians over 1000 steps) for ten runs of the experiment in thin lines. The thick line shows the average of all runs and thus indicates the overall trend.
\end{figure}

\subsubsection{Algorithmic Pricing in Duopolies}

Next, we analyze duopolies with different algorithms that are not mean-based anymore. 
We first report the prices at the end of the experiments in two-player games. After 250,000 iterations, the algorithms consistently settle at a price level (see next section), so we can assume convergence. 
In Appendix \ref{sec:convergence-speed}, we provide an analysis of the convergence speed.
Our first result makes a statement about the price levels agents eventually reach.

\begin{result}
	Supra-competitive pricing only evolves with UCB algorithms under self-play. Other algorithms consistently price close to the Nash equilibrium prices. In particular, combinations of diverse algorithms rarely display non-competitive behavior. This happens for all demand models (standard, linear, logit) and for symmetric and asymmetric settings.
\end{result}

We visualize these results in Figure \ref{fig:collusion-in-duopolies}, which displays the profit- and price-competition indices found in our experiments. We average the indices over settings, runs, and - if applicable - over agents. In settings with two identical competing algorithms (diagonal entries), we only observe non-competitive behavior with the \textit{UCB-T} algorithm. In this case, prices \textit{and} profits are high, meaning that cooperating algorithms indeed benefit from their non-competitive play. The combined settings of any two different algorithms show copmetition indices close to zero, with a combination of \textit{UCB-T} and \textit{$\epsilon$-Greedy} resulting in only slightly supra-competitive prices.

Our experiments also demonstrate that non-competitive behavior with Bandit algorithms does not develop in a stable manner. While we could observe high prices with \textit{UCB-T} for all demand functions, and in symmetric as well as asymmetric games, these experiments usually suffer from a high variance between runs, making the result unpredictable in advance. In contrast, competitive algorithms performed consistently in all environments and all runs.

\begin{figure}
	\centering
	
	\hfill
	\begin{subfigure}{0.4\textwidth}
		\includegraphics[width=\textwidth]{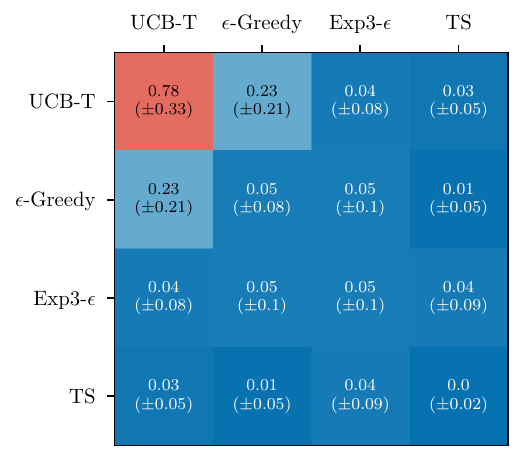}
		\subcaption{Price-competition indices \label{fig:price-collusion-in-duopolies}}
	\end{subfigure}
	\hfill
	\begin{subfigure}{0.4\textwidth}
		\includegraphics[width=\textwidth]{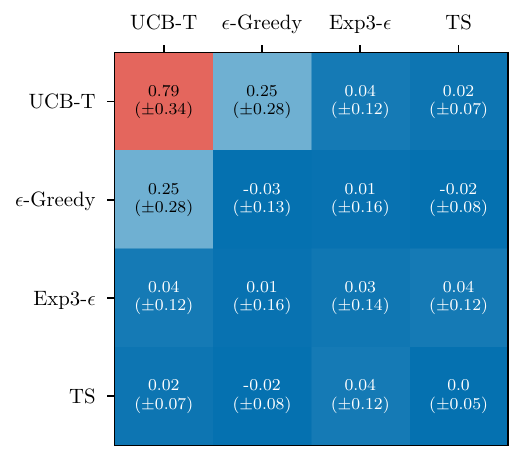}
		\subcaption{Profit-competition indices \label{fig:profit-collusion-in-duopolies}}
	\end{subfigure}
	\hfill
	
	\caption{Pricing in Duopolies}
	\label{fig:collusion-in-duopolies}
	
	\centering \small
	We visualize the competition indices, based on a) the median prices and b) the mean profits in the last 1000 steps, of several combinations of algorithms. The displayed numbers are the numeric values of the competition indices; the values in brackets display their corresponding standard deviation. Averages and standard deviations were computed over two agents, ten runs, and the five settings \textit{O1} - \textit{O3'}.
\end{figure}

While experiments with two players in simplified oligopolies are useful to extract the essence of algorithmic interaction, we also investigated extensions of our games with more players and richer demand functions. We provide our results in Appendix \ref{sec:additional-experiments}.

\section{Conclusions}
\label{sec:conclusions}

The discussion of algorithmic collusion is largely based on the experimental results of specific algorithms and versions of the Bertrand competition. In general, learning algorithms do not converge to an equilibrium. However, we prove that important algorithms do so in repeated Bertrand pricing competition with all-or-nothing, linear, and logit demand models. The convergence of mean-based algorithms to correlated rationalizable strategies is of independent interest and might be useful for the analysis of different games as well. The result closes a gap in the literature on learning in games, which largely focused on algorithms with no internal or external regret, which converge to correlated or coarse correlated equilibria, respectively. Mean-based algorithms such as Exp3 only require bandit feedback after each round, a realistic assumption for algorithms used in algorithmic pricing. The fact that correlated rationalizable strategy profiles coincide with the set of Nash equilibria in Bertrand competition games is an important insight, and it shows that such games can be learned even with very simple algorithms relevant in practice.

In our experiments, persistent supra-competitive pricing is rare for the bandit algorithms we study and appears mainly in special symmetric installations, such as symmetric UCB among a few sellers. Whether such outcomes should be classified as collusion in the stricter legal or economic sense depends on the underlying mechanism and on whether multiple firms’ choices are implicated in sustaining the outcome.

Such insights are important for regulators when they want to identify non-competitive behavior in online markets. While multi-armed bandit algorithms are an important and widely used class of algorithms used for algorithmic pricing, one cannot guarantee that sellers don't use other algorithms. In future research, it will be valuable to analyze and classify alternative algorithms and models. The properties that we showed for the Bertrand competition will also be useful for subsequent studies.

\section*{Acknowledgements}

This project has received funding from the European Research Council (ERC) under the European Union’s Horizon Europe research and innovation programme (grant agreement No 101198689).\\    
This project was funded by the Deutsche Forschungsgemeinschaft (DFG, German Research Foundation) - GRK 2201/2 - Project Number 277991500 and BI 1057/9.

\vfill
\pagebreak


\bibliographystyle{ACM-Reference-Format}
\bibliography{literature.bib}

\vfill
\pagebreak

\appendix
\renewcommand{\theHsection}{A\arabic{section}}
\small
\setlength{\parindent}{0pt}

\section{Proofs}
\label{sec:proofs}

\subsection{Proofs of Game-Theoretical Properties}\label{app:proof1}
\begin{proposition}\label{prop:super_standard}
	The Bertrand competition with all-or-nothing demand is not supermodular.
\end{proposition}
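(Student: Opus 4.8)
The plan is to refute supermodularity by exhibiting a single Bertrand instance with all-or-nothing demand on which the increasing-differences condition of Definition~\ref{def:supermodulargame} fails. Since the action spaces here are one-dimensional, the only substantive requirement for a supermodular game is that each $u_i$ have increasing differences in $(a_i, a_{-i})$, so it is enough to produce one player, one parameterization, and two comparable pairs of action profiles on which this property breaks.

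First I would specialize to a duopoly ($n=2$) with zero costs and normalized demand $D=1$, so that player~1's payoff is $u_1(a_1,a_2) = (1-a_1)a_1$ when $a_1 < a_2$, half that when $a_1 = a_2$, and $0$ when $a_1 > a_2$. The structural point is that a price which is undercut yields zero profit no matter how low it is, while the \emph{same} price becomes profitable as soon as the opponent's price rises above it; this is a local ``strategic substitutes'' effect incompatible with supermodularity. Concretely, take $a_1 = 0.2 < a_1' = 0.4$ for player~1 and $a_2 = 0.1 < a_2' = 0.3$ for player~2, which satisfies the hypotheses $a_1 \le a_1'$ and $a_2 \le a_2'$ of the definition. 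Evaluating the four payoffs, player~1 wins only in the profile $(a_1,a_2') = (0.2,0.3)$, earning $(1-0.2)(0.2) = 0.16$, and is undercut (payoff $0$) in $(0.2,0.1)$, $(0.4,0.1)$, and $(0.4,0.3)$. Hence
\[
u_1(a_1',a_2') - u_1(a_1,a_2') = 0 - 0.16 = -0.16 \;<\; 0 \;=\; u_1(a_1',a_2) - u_1(a_1,a_2),
\]
contradicting increasing differences, so the game is not supermodular.

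I do not expect a real obstacle here; the only care needed is bookkeeping on the strict inequalities $a_2 < a_1 < a_2' < a_1'$ so that the lowest-price holder in each of the four profiles is unambiguous (in particular avoiding ties, which split demand), and checking that the four chosen prices fit inside whatever action grid is implicitly intended. If one wants the stronger reading that \emph{no} non-degenerate parameterization gives a supermodular game, I would additionally observe that the undercutting discontinuity persists for every $D>0$, every cost vector with $c_i$ below the relevant prices, and every action set containing a few points bracketing the monopoly price, so the same violating configuration can always be instantiated.
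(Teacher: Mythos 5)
Your counterexample is exactly the paper's first case: the configuration $a_2 < a_1 < a_2' < a_1'$ in a zero-cost duopoly, where the winning profile $(a_1,a_2')$ makes the left-hand difference strictly negative while the right-hand difference is $0-0=0$. This correctly refutes increasing differences under Definition~\ref{def:supermodulargame}, which fixes the componentwise order on $\R$, so your argument establishes the proposition as literally stated. The one thing the paper does that you do not is exhibit a \emph{second} configuration ($a_1 < a_2 < a_1' < a_2'$ with $1-(a_1-c_1) > a_1'$) in which increasing differences holds \emph{strictly}; the authors use the coexistence of a strict satisfaction and a violation to argue that no alternative ordering of the opponent's action set could restore supermodularity either. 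That extra step matters because supermodularity is order-relative (e.g., a two-player submodular game becomes supermodular after reversing one player's order), so if one reads ``not supermodular'' in the robust, order-free sense, your proof as written leaves that door open. Your closing remark about the persistence of the undercutting discontinuity gestures at the right kind of robustness, but it addresses parameter choices rather than reorderings, which is the case the paper is guarding against.
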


\begin{proof}
	To prove that the Bertrand-competition with standard all-or-nothing demand is not supermodular, we show that the property of increasing differences, i.e., $ u_i(a_i', a_{-i}') - u_i(a_i, a_{-i}') \geq u_i(a_i', a_{-i}) - u_i(a_i, a_{-i}) $ with $a_i' \geq a_i$ and $ a_{j}' \geq a_{j} $ for all $ j \neq i $, is not satisfied.
	To that end, we consider two cases in the simple $n=2$ player case:
	\begin{align*}
		\intertext{ For $a_2 < a_1 < a_2' < a_1' $, the inequality is violated:}
		u_1(a_1', a_2') - u_1(a_1, a_2') &= 0 - u_1(a_1, a_2') < 0 \\
		u_1(a_1', a_2) - u_1(a_1, a_2) &= 0 - 0 = 0
		\intertext{For $a_1 < a_2 < a_1' < a_2' $ with $1-(a_1-c_1) > a_1'$, the inequality is satisfied:}
		u_1(a_1', a_2') - u_1(a_1, a_2') &= (1-a_1')(a_1'-c_1) - (1-a_1)(a_1 - c_1) > 0 \\
		u_1(a_1', a_2) - u_1(a_1, a_2) & = 0 - u_1(a_1, a_2) < 0
	\end{align*}
	The condition $1-(a_1-c_1) > a_1'$ simply means that $a_1$ is closer to $c_1$ than $a_1'$ is to 1, which is necessary to ensure that $ u_1(a_1', a_2') > u_1(a_1, a_2')$.
	Since the inequalities is strictly satisfied and violated, we cannot define a different order on the opponent's action set such that the increasing differences property is satisfied. Therefore, the game cannot be supermodular. Note that this argument also works for continuous actions.
\end{proof}

\begin{proposition} \label{prop:potential}
	The Bertrand competition with linear demand is a potential game.
\end{proposition}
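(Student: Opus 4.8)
The plan is to prove the statement by exhibiting an explicit potential function and checking the exact--potential identity of Definition~\ref{def:potential} directly. With linear demand the payoff of player~$i$ is
\[
	u_i(a_i,a_\mi) = (a_i-c_i)\Bigl(\alpha_i-\beta_i a_i + \tfrac{\gamma}{n-1}\sum_{j\neq i}a_j\Bigr)
	= (a_i-c_i)(\alpha_i-\beta_i a_i) + \tfrac{\gamma}{n-1}\,a_i\sum_{j\neq i}a_j \;-\; c_i\tfrac{\gamma}{n-1}\sum_{j\neq i}a_j .
\]
The first observation is that the last summand $-c_i\tfrac{\gamma}{n-1}\sum_{j\neq i}a_j$ does not involve $a_i$, so the dependence of $u_i$ on $a_i$ is carried entirely by the ``own'' quadratic $(a_i-c_i)(\alpha_i-\beta_i a_i)$ and by the bilinear coupling $\tfrac{\gamma}{n-1}\,a_i\sum_{j\neq i}a_j$.

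Guided by this, I would take as potential
\[
	\phi(a) \;=\; \sum_{k=1}^{n}(a_k-c_k)(\alpha_k-\beta_k a_k) \;+\; \frac{\gamma}{n-1}\sum_{k<l} a_k a_l ,
\]
that is, collect the own quadratics on the ``diagonal'' and replace each player's coupling term by the single symmetric sum $\tfrac{\gamma}{n-1}\sum_{k<l}a_ka_l$. The point is that, for any fixed $i$, the part of $\sum_{k<l}a_ka_l$ that depends on $a_i$ is exactly $a_i\sum_{j\neq i}a_j$, while everything else is a product of opponents' actions; and the cost constants $c_i$ enter only through the diagonal terms, so they need no coupling of their own. (This is where choosing $a_ka_l$ rather than $(a_k-c_k)(a_l-c_l)$ in the second sum matters: a naive symmetrization of $(a_i-c_i)\sum_{j\neq i}a_j$ would inject spurious $a_i$--linear terms proportional to $c_j$.)

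To conclude, I would check that $u_i(a)-\phi(a)$ depends only on $a_\mi$: splitting $\phi$ into its $a_i$--dependent part $(a_i-c_i)(\alpha_i-\beta_i a_i)+\tfrac{\gamma}{n-1}a_i\sum_{j\neq i}a_j$ plus a remainder in $a_\mi$ only, and comparing with the displayed decomposition of $u_i$, the difference $u_i(a)-\phi(a)$ equals $-c_i\tfrac{\gamma}{n-1}\sum_{j\neq i}a_j$ minus that remainder, hence is some function $Q_i(a_\mi)$. Therefore, for all $a_i,a_i'\in\Acal_i$ and all $a_\mi\in\Acal_\mi$,
\[
	u_i(a_i,a_\mi)-u_i(a_i',a_\mi) = \bigl[\phi(a_i,a_\mi)+Q_i(a_\mi)\bigr]-\bigl[\phi(a_i',a_\mi)+Q_i(a_\mi)\bigr] = \phi(a_i,a_\mi)-\phi(a_i',a_\mi),
\]
which is exactly Definition~\ref{def:potential}. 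Since this is an algebraic identity in $a_i,a_i'$, it holds for any discretization of the price intervals as well as for the continuous game, so the same $\phi$ also certifies that continuous--action Bertrand competition with linear demand is an exact potential game. I do not expect a genuine obstacle here: the only real step is guessing the correct symmetrization for $\phi$; once it is written down, the verification is a short expansion.
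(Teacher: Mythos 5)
Your proof is correct and follows essentially the same route as the paper: you exhibit an explicit exact potential and verify Definition~\ref{def:potential} by checking that $u_i-\phi$ depends only on $a_\mi$. Your $\phi$ differs from the paper's, $\sum_k \alpha_k a_k - \sum_k \beta_k a_k(a_k-c_k) + \tfrac{\gamma}{n-1}\sum_{k<l}a_ka_l$, only by the additive constant $-\sum_k\alpha_k c_k$, so the two potentials are equivalent.
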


\begin{proof}
	To show that the Bertrand competition with linear demand is a potential game, we prove that a function $\phi: \Xcal \rightarrow \R$ satisfies the definition of a potential. Our choice of $\phi$ is similar to the potential function of the Cournot setting in \cite{slade1994} and is given by
	\begin{equation*}
		\phi(a) = \sum_k \alpha_k a_k - \sum_k \beta_k a_k (a_k - c_k) + \tfrac{\gamma}{n-1} \sum_{k < l} a_k a_l.
	\end{equation*}
	It is easy to check that this function satisfies the condition of an exact potential:
	\begin{align*}
		\phi(a_i, a_\mi) - \phi(a'_i, a_\mi) 
		&= \alpha_i [a_i - a'_i] -  \beta_i [a_i (x_i - c_i) - a'_i (a'_i - c_i)] \\
		& \quad + \tfrac{\gamma}{n-1} \sum_{l > i} [a_i - a'_i] a_l  + \tfrac{\gamma}{n-1} \sum_{k < i} a_k (a_i - a'_i). 
		\intertext{Note that in the sum over $k,l$, we have the two cases where either $k=i$ or $l=i$, which leads to the two sums above. We can combine them and get:}
		&= \alpha_i (a_i - a'_i) -  \beta_i [a_i (a_i - c_i) - a'_i (a'_i - c_i)] + \tfrac{\gamma}{n-1} \sum_{j \neq i} a_j (a_i - a'_i)\\
		&= d_i(a_i,a_\mi) \cdot (a_i - c_i) - d_i(a'_i,a_\mi) \cdot (a'_i - c_i) = u_i(a_i,a_\mi) - u_i(a'_i,a_\mi).
	\end{align*}
	This argument works for both continuous and discrete actions.
\end{proof}

\begin{proposition} \label{prop:potentialstandard}
	The Bertrand competition with all-or-nothing demand is not a potential game.
\end{proposition}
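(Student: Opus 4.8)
The plan is to use the cycle characterization of exact potential games: a finite game admits an exact potential if and only if, for every finite closed path in the product action space, the sum of the deviating player's utility changes along the path equals zero (equivalently, it suffices to check $4$-cycles alternating between two players). So it is enough to exhibit a single closed path along which this sum is nonzero. I would carry this out in the two-player case with $D = 1$ and $c_1 = c_2 = 0$; the general-$\n$ statement follows by freezing players $3, \dots, \n$ at any price strictly larger than all prices used below, so that demand is always allocated to the cheaper of players $1$ and $2$ and the remaining payoffs play no role.

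Concretely, pick prices interleaved as $p < q < p' < q'$, say $p = 0.2$, $q = 0.3$, $p' = 0.4$, $q' = 0.5$, and consider the cycle $(p,q) \to (p',q) \to (p',q') \to (p,q') \to (p,q)$, in which player $1$ alternates between $p$ and $p'$ while player $2$ alternates between $q$ and $q'$. At each vertex the cheaper firm captures all demand, so the relevant payoffs are $u_1(p,q) = (1-p)p = 0.16$, $u_2(p',q) = (1-q)q = 0.21$, $u_1(p',q') = (1-p')p' = 0.24$, and $u_1(p,q') = (1-p)p = 0.16$, while every ``losing'' payoff equals $0$. Summing the four deviation increments gives $\bigl(u_1(p',q) - u_1(p,q)\bigr) + \bigl(u_2(p',q') - u_2(p',q)\bigr) + \bigl(u_1(p,q') - u_1(p',q')\bigr) + \bigl(u_2(p,q) - u_2(p,q')\bigr) = (0 - 0.16) + (0 - 0.21) + (0.16 - 0.24) + (0 - 0) = -0.45 \neq 0$. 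If an exact potential $\phi$ existed, each increment would equal the corresponding increment of $\phi$, and the sum would telescope to $\phi(p,q) - \phi(p,q) = 0$, a contradiction. Hence the Bertrand competition with all-or-nothing demand is not a potential game in the sense of Definition~\ref{def:potential}.

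The computations are routine, so the only genuine design choice — and essentially the only place one could go wrong — is picking the four prices so that the $4$-cycle straddles the ``tie diagonal'' $a_1 = a_2$ of the all-or-nothing allocation; the interleaving $p < q < p' < q'$ does exactly this. I expect no real obstacle: it is precisely the discontinuity of the demand correspondence along that diagonal that breaks path-independence, so any generic interleaved cycle works, and I would simply present the cleanest instance. The strict sign pattern of the increments also shows that no weighted or ordinal potential exists either, but since Definition~\ref{def:potential} concerns only exact potentials, that refinement is not needed for the statement.
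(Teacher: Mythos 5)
Your proof is correct and follows essentially the same route as the paper's: both invoke the closed-path characterization of exact potential games and evaluate the deviation sum around the same interleaved $4$-cycle $(a_1,a_2)\to(a_1',a_2)\to(a_1',a_2')\to(a_1,a_2')\to(a_1,a_2)$ with $a_1<a_2<a_1'<a_2'$, the only difference being that you instantiate concrete prices (obtaining $-0.45$) where the paper keeps them generic and shows the sum equals $-D(1-a_2)(a_2-c_2)-D(1-a_1')(a_1'-c_1)<0$ whenever prices exceed marginal costs. Your explicit handling of the $n$-player extension by freezing the remaining firms at a higher price is a small but welcome addition over the paper's one-line remark.
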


\begin{proof}
	To show that the Bertrand competition with standard demand is not a potential game, it is sufficient to show that there exists an example, where the closed path property is not satisfied \citep[Theorem~2.8]{monderer1996potential}. For simplicity, we focus on the two player version, but the argument can be extended to more agents. Consider the actions $ 0 < a_1 < a_2 < a_1' < a_2' < 1$ and the closed path $\gamma$ of length 4 given by the action profiles $(a_1, a_2) \rightarrow (a_1', a_2) \rightarrow (a_1', a_2') \rightarrow (a_1, a_2') \rightarrow (a_1,a_2)$.
	We know that our game is a potential game if and only if $I(\gamma,u)=0$. In this example, $I(\gamma,u)$ is given by 
	\begin{align*}
		I(\gamma,u) 
		&:= [u_1(a_1',a_2) - u_1(a_1,a_2)] + [u_2(a_1',a_2') - u_2(a_1', a_2)] \\ 
		&\quad + [u_1(a_1,a_2') - u_1(a_1',a_2')] + [u_2(a_1,a_2) - u_2(a_1, a_2')] \\
		\intertext{Using the definition of the standard demand model, we get}
		&= [0 - D(1-a_1)(a_1-c_1)] + [0 - D(1-a_2)(a_2-c_2)] \\
		&\quad+ [D(1-a_1)(a_1-c_1) - D(1-a_1')(a_1'-c_1)] + [0 - 0] \\
		&= - D(1-a_2)(a_2-c_2) - D(1-a_1')(a_1'-c_1) 
	\end{align*}
	If we further assume that both agents bid above their marginal costs, i.e., $c_1 < a_1 < 1$ and  $c_2 < a_2$ the last line is strictly negative and thereby $I(\gamma,u) \neq 0$, which implies that the game cannot be a potential game.
\end{proof}

\begin{proposition} \label{prop:potentiallogit}
	The Bertrand competition with logit demand is in general not a potential game.
\end{proposition}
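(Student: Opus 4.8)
The plan is to follow the same route as in the proof of Proposition \ref{prop:potentialstandard}: by the characterization of Monderer and Shapley \citep[Theorem~2.8]{monderer1996potential}, a finite game is an exact potential game iff $I(\gamma,u)=0$ for every simple closed path $\gamma$, and it suffices to test this on $4$-cycles. Restricting to two players, I would fix four prices $a_1,a_1'\in\Acal_1$ and $a_2,a_2'\in\Acal_2$ and consider the cycle $(a_1,a_2)\to(a_1',a_2)\to(a_1',a_2')\to(a_1,a_2')\to(a_1,a_2)$, whose weight is
\begin{equation*}
	I(\gamma,u)=\big[u_1(a_1',a_2)-u_1(a_1,a_2)\big]+\big[u_2(a_1',a_2')-u_2(a_1',a_2)\big]+\big[u_1(a_1,a_2')-u_1(a_1',a_2')\big]+\big[u_2(a_1,a_2)-u_2(a_1,a_2')\big].
\end{equation*}
I would then substitute the logit utilities $u_i(a)=(a_i-c_i)\,e^{(\alpha_i-a_i)/\mu}\big/\big(e^{\alpha_0/\mu}+\sum_j e^{(\alpha_j-a_j)/\mu}\big)$ and exhibit a single explicit instance — e.g.\ $n=2$, $\alpha_0=0$, some $\mu>0$, prices bounded away from the costs $c_i$, and an asymmetric-cost choice $c_1\neq c_2$ — for which $I(\gamma,u)\neq 0$. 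Any such witness already shows that the logit Bertrand game is, in general, not a potential game.

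Because the logit payoffs are transcendental rather than polynomial, the four terms of $I(\gamma,u)$ do not telescope the way they do for a separable-plus-bilinear potential (contrast the clean cancellation in Proposition \ref{prop:potential}), so obtaining a usable closed form by hand is the fiddly part. To keep the argument transparent, I would complement the $4$-cycle computation — in particular for the continuous-action version of the model — with the smooth criterion: on an open product of price intervals with $C^2$ payoffs, the game is an exact potential game iff $\partial^2 u_i/\partial a_i\partial a_j=\partial^2 u_j/\partial a_j\partial a_i$ for all $i\neq j$. Writing $d_i$ for the logit demand, one has $\partial d_i/\partial a_j=d_id_j/\mu$ for $j\neq i$ and $\partial d_i/\partial a_i=-d_i(1-d_i)/\mu$, which yields
\begin{equation*}
	\frac{\partial^2 u_i}{\partial a_i\,\partial a_j}=\frac{d_i d_j}{\mu}\left(1+\frac{(a_i-c_i)(2d_i-1)}{\mu}\right),\qquad j\neq i.
\end{equation*}
This is symmetric in $i,j$ only when $(a_i-c_i)(2d_i-1)=(a_j-c_j)(2d_j-1)$ holds identically in $a$, and it does not: taking $\alpha_i=\alpha_j$ and evaluating at any profile with $a_i=a_j$ (so that $d_i=d_j$, and in the two-player case $d_i<\tfrac12$, hence $2d_i-1\neq 0$) while $c_i\neq c_j$ makes the two sides differ; in the fully symmetric-cost case one instead checks a suitable profile with $a_i\neq a_j$, where strict monotonicity of demand forces $d_i\neq d_j$. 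Hence the game is not an exact potential game.

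The main obstacle is thus purely computational: unlike the standard and linear demand models, the logit demand makes both $I(\gamma,u)$ and the mixed second partials bulky, so the argument has to rest either on a concrete numerical witness for the $4$-cycle or on the derivative criterion, where the asymmetry of the cross-partials is manifest. A secondary point to watch is nondegeneracy of the chosen instance: the candidate prices must be kept away from the costs $c_i$ and from the level $d_i=\tfrac12$, since at those values the relevant differences vanish for trivial reasons and would fail to witness a violation.
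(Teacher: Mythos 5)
Your proposal is correct and, in its executed part, takes essentially the same route as the paper: both invoke the smooth characterization of exact potential games via equality of the cross-partials $\partial^2 u_i/\partial a_i\partial a_j$ and exhibit a profile where it fails (your witness with $\alpha_i=\alpha_j$, $a_i=a_j$, $c_i\neq c_j$, and $d_i=d_j<\tfrac12$ is valid, and your compact form $\tfrac{d_id_j}{\mu}\bigl(1+\tfrac{(a_i-c_i)(2d_i-1)}{\mu}\bigr)$ is in fact cleaner than the paper's expanded exponentials). The $4$-cycle computation you sketch but do not carry out is not needed once the derivative criterion delivers the witness.
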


\begin{proof}
	Since the utility function in the Bertrand competition with logit demand is twice differentiable, we can use that the game is a potential game if and only if  $\tfrac{\partial^2u_i}{\partial a_i \partial a_j} = \tfrac{\partial^2u_j}{\partial a_i \partial a_j}$ \citep[Theorem~4.5]{monderer1996potential}. Again, we focus on a simplified version to show that the game is in general not a potential game. Let us assume that we have $n=2$ agents with costs $c_1, c_2 > 0$ and $\alpha_0 = \alpha_1 = \alpha_2 = \alpha$, and $\mu_0 = \mu_1 = \mu_2 = \mu$. If we compute the second derivatives we get
	\begin{align*}
		&\quad \frac{\partial^2u_1(a_1, a_2)}{\partial a_1 \partial a_2} =\frac{\partial^2u_2(a_1, a_2)}{\partial a_1 \partial a_2} \\
		\Leftrightarrow \quad &\mu \left(e^{\frac{\alpha}{\mu}} + e^{\frac{- a_{1} + \alpha}{\mu}} + e^{\frac{- a_{2} + \alpha}{\mu}}\right) e^{\frac{- a_{1} - a_{2} + 2 \alpha}{\mu}} + \left(- a_{1} + c_1\right) \left(e^{\frac{\alpha}{\mu}} + e^{\frac{- a_{1} + \alpha}{\mu}} + e^{\frac{- a_{2} + \alpha}{\mu}}\right) e^{\frac{- a_{1} - a_{2} + 2 \alpha}{\mu}} + 2 \left(a_{1} - c_1\right) e^{\frac{- 2 a_{1} - a_{2} + 3 \alpha}{\mu}} \\
		&= \mu \left(e^{\frac{\alpha}{\mu}} + e^{\frac{- a_{1} + \alpha}{\mu}} + e^{\frac{- a_{2} + \alpha}{\mu}}\right) e^{\frac{- a_{1} - a_{2} + 2 \alpha}{\mu}} + \left(- a_{2} + c_{2}\right) \left(e^{\frac{\alpha}{\mu}} + e^{\frac{- a_{1} + \alpha}{\mu}} + e^{\frac{- a_{2} + \alpha}{\mu}}\right) e^{\frac{- a_{1} - a_{2} + 2 \alpha}{\mu}} + 2 \left(a_{2} -c_2\right) e^{\frac{- a_{1} - 2 a_{2} + 3 \alpha}{\mu}} \\
		\Leftrightarrow \quad & \left(a_{1} - c_1\right) \left(- \left(e^{\frac{\alpha}{\mu}} + e^{\frac{- a_{1} + \alpha}{\mu}} + e^{\frac{- a_{2} + \alpha}{\mu}}\right) e^{\frac{- a_{1} - a_{2} + 2 \alpha}{\mu}} + 2 e^{\frac{- 2 a_{1} - a_{2} + 3 \alpha}{\mu}}\right) \\
		&=\left(a_{2} - c_2\right) \left(- \left(e^{\frac{\alpha}{\mu}} + e^{\frac{- a_{1} + \alpha}{\mu}} + e^{\frac{- a_{2} + \alpha}{\mu}}\right) e^{\frac{- a_{1} - a_{2} + 2 \alpha}{\mu}} + 2 e^{\frac{- a_{1} - 2 a_{2} + 3 \alpha}{\mu}}\right) \\
		\intertext{It is easy to see that if both agents bid for instance $c_1$, only the left side is equal to zero while the right side is generally not.
			And even if agents are symmetric with respect to the costs, i.e., $c_1 = c_2 = c$, and assume that agents bid $a_1 = 2c$ and $a_2 = 3c$ we get}
		\Rightarrow \quad & c \left(e^{\frac{3 c}{\mu}} + 3 e^{\frac{c}{\mu}} - 3\right) e^{\frac{3 \alpha - 8 c}{\mu}} = 0, \text{ which is obviously not true in general.}
	\end{align*}
	Therefore the condition from the theorem is not satisfied and the Bertrand competition with logit demand is in general not a potential game. 
\end{proof}

\subsection{Remarks on Correlated Rationalizability}
\label{sec:remarks-correlated-rationalizability}

In this section, we define a series of rational response sets, $(\Acal^i)_{i = 1}^{K}$, together with a game-dependent number, the competition constant $\delta$, that will play an essential role in our proof of convergence. 

\begin{definition}[Correlated Rational Responses]
	\label{def:correlated-rational-responses}
	Given a product set $\hat{\Acal} = \hat{\Acal}_1 \times \dots \times \hat{\Acal}_{\n}$ of action sets $\hat{\Acal}_i \subseteq \Acal_i$, we define player $i$'s correlated rational responses $R_i(\hat{\Acal})$ as 
	\begin{equation*}
		R_i(\hat{\Acal}) = \{ a_i \in \Acal_i ~\vert ~ \exists ~ \hat{x}_{-i} \in \Delta(\hat{\Acal}_{-i}), ~ \forall ~ a_i' \in \Acal_i : ~ u_i(a_i, \hat{x}_{-i}) \geq u_i(a_i', \hat{x}_{-i}) \}.
	\end{equation*}
\end{definition}
In other words, the correlated rational responses are the actions which maximize the utility for some probability distribution $\hat{x}_{-i}$ with support on $\hat{\Acal}_{-i}$.
For all actions $a_i \not\in R_i(\hat{\Acal})$, we find that there always exists a superior action for every probability distribution $\hat{x}_{-i} \in \Delta(\hat{\Acal}_{-i})$:
\begin{equation}
	\label{eq:irrational-actions}
	\forall ~ \hat{x}_{-i} \in \Delta(\hat{\Acal}_{-i}), ~ \exists ~ a_i' \in \Acal_i: \quad u_i(a_i, \hat{x}_{-i}) < u_i(a_i', \hat{x}_{-i}).
\end{equation}

Based on the concept of correlated rational responses from Definition \ref{def:correlated-rational-responses}, we can define an iteration of product sets as follows. 
Let $\Acal^0 = \Acal_1 \times \dots \times \Acal_\n$ be the cartesian product of the action sets of all players and let $R(\hat{\Acal}) = R_1(\hat{\Acal}) \times \dots \times R_{\n}(\hat{\Acal})$ be an operator on the sets of all players' actions that combines the correlated rational responses of all players. For $k \geq 1$, we iteratively define $\Acal^k = R(\Acal^{k-1})$. We say that an action is \emph{stable under $R$ or $R_i$} for a given $\hat{\Acal}$ if it is contained in $R_i(\hat{\Acal})$.
Furthermore, we denote the \emph{competition constant} $\delta > 0$ of the game as the largest number for which the utility difference for non-rational responses to $\Acal^k$ is at least $\delta$ for all $k$:
\begin{equation*}
	\forall k \geq 0, ~ a_i \not\in R_i(\Acal^k) : \quad \forall ~ \hat{x}_{-i} \in \Delta(\Acal^k_{-i}), ~ \exists ~ a_i' \in \Acal_i : ~ u_i(a_i', \hat{x}_{-i}) - u_i(a_i, \hat{x}_{-i}) \geq \delta .
\end{equation*}

\begin{proposition}
	\label{prop:competition-constant-exists}
	The competition constant $\delta > 0$ exists for any finite normal-form game in which the set of correlated rationalizable actions is a strict subset of the action spaces: $\bar{\Acal} \subset \Acal$.
\end{proposition}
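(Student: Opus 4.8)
The plan is to exploit finiteness at two levels: the iteration $(\Acal^k)_{k\ge 0}$ stabilizes after finitely many steps, and for each fixed iterate there are only finitely many actions to check, each of which contributes a strictly positive "margin" that we can minimize over. First I would observe that the sequence $\Acal^0 \supseteq \Acal^1 \supseteq \dots$ is nested and decreasing (since $R_i(\hat\Acal) \subseteq R_i(\hat\Acal')$ whenever $\hat\Acal \subseteq \hat\Acal'$, a routine monotonicity check from Definition \ref{def:correlated-rational-responses}), and that $\bigcap_k \Acal^k = \bar\Acal$. Because $\Acal$ is finite, there is a finite $K$ with $\Acal^k = \Acal^K = \bar\Acal$ for all $k \ge K$; hence it suffices to establish a uniform positive gap over the finitely many distinct iterates $\Acal^0, \dots, \Acal^K$.

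Next I would fix a single iterate $\Acal^k$ and a single action $a_i \notin R_i(\Acal^k)$. By definition of $R_i$ together with Equation \eqref{eq:irrational-actions}, for every $\hat x_{-i} \in \Delta(\Acal^k_{-i})$ there exists $a_i' \in \Acal_i$ with $u_i(a_i',\hat x_{-i}) - u_i(a_i,\hat x_{-i}) > 0$. I want to turn this pointwise-positive statement into a uniform bound over all $\hat x_{-i}$. The key step is a compactness/continuity argument: define
\begin{equation*}
	g(\hat x_{-i}) = \max_{a_i' \in \Acal_i} \big( u_i(a_i',\hat x_{-i}) - u_i(a_i,\hat x_{-i}) \big),
\end{equation*}
which is continuous in $\hat x_{-i}$ (a finite maximum of affine functions) and strictly positive on the compact set $\Delta(\Acal^k_{-i})$ by the above. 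A continuous strictly positive function on a nonempty compact set attains a positive minimum, so there is $\delta_{k,a_i} > 0$ with $g(\hat x_{-i}) \ge \delta_{k,a_i}$ for all $\hat x_{-i} \in \Delta(\Acal^k_{-i})$. (If $\Acal^k_{-i}$ were empty the claim would be vacuous, but $\bar\Acal \subset \Acal$ with the game finite guarantees the relevant simplices are nonempty; I would note this edge case.) Finally, set $\delta = \min_{k \le K} \min_{i} \min_{a_i \notin R_i(\Acal^k)} \delta_{k,a_i}$, a minimum over a finite index set of positive numbers, hence positive; this $\delta$ satisfies the defining inequality for all $k$ (using $\Acal^k = \Acal^K$ for $k > K$). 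The largest such constant then exists as the supremum of all valid choices, which is positive.

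The main obstacle is the uniformity over the continuum of opponent mixed strategies $\hat x_{-i} \in \Delta(\Acal^k_{-i})$ — the hypothesis $a_i \notin R_i(\Acal^k)$ only gives, for each $\hat x_{-i}$ separately, \emph{some} better response, and a priori the advantage could shrink to $0$ along a sequence of $\hat x_{-i}$'s. This is precisely what the compactness of the simplex and continuity of the (piecewise-affine) utility differences resolve; everything else is bookkeeping over finite sets. I would also remark that the hypothesis $\bar\Acal \subset \Acal$ is exactly what is needed: if $\bar\Acal = \Acal$ then every action is a correlated rational response at every stage, there are no actions $a_i \notin R_i(\Acal^k)$ to bound, and no positive $\delta$ can be required (the condition is vacuous, and "largest such $\delta$" would be ill-posed).
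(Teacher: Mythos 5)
Your proposal is correct and follows essentially the same route as the paper's proof: define $g(\hat x_{-i})$ as the maximal utility advantage, use that it is a continuous (finite max of affine) function which is strictly positive on the compact simplex $\Delta(\Acal^k_{-i})$, extract a positive minimum by the extreme value theorem, and then minimize over the finitely many relevant iterates and actions. Your version is slightly more explicit about why the outer minimum is finite (via the stabilization of the nested sequence $\Acal^k$) and about the edge cases, but the argument is the same.
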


\begin{proof}
	Let $k \geq 0$ be an iteration such that $\exists a_i \not\in R_i(\Acal^k)$. Since $a_i$ is not a correlated rational response, we know that
	\begin{equation*}
		\forall ~ \hat{x}_{-i} \in \Delta(\Acal^k_{-i}), ~ \exists ~ a_i' \in \Acal_i : ~ u_i(a_i', \hat{x}_{-i}) - u_i(a_i, \hat{x}_{-i}) > 0.
	\end{equation*}
	This is equivalent to stating that 
	\begin{equation*}
		\forall ~ \hat{x}_{-i} \in \Delta(\Acal^k_{-i}), \qquad g(\hat{x}_{-i}) > 0,
	\end{equation*}
	where we define $g(\hat{x}_{-i}) := \max_{a_i' \in \Acal_i} u_i(a_i', \hat{x}_{-i}) - u_i(a_i, \hat{x}_{-i})$. $g$ is the maximum of a finite number of linear functions (the utility differences), so it is real-valued and continuous. Additionally, the space $\Delta(\Acal_{-i}^k)$ is compact, which implies that the minimum of $g$ on this space is attained according to the extreme value theorem. Since we know that $g(\hat{x}_{-i}) > 0$, the minimum must be strictly positive, yielding
	\begin{equation*}
		\forall ~ \hat{x}_{-i} \in \Delta(\Acal^k_{-i}), \qquad g(\hat{x}_{-i}) \geq \delta_k,
	\end{equation*}
	for some $\delta_k > 0$. This again implies that
	\begin{equation*}
		\forall ~ \hat{x}_{-i} \in \Delta(\Acal^k_{-i}), ~ \exists ~ a_i' \in \Acal_i : ~ u_i(a_i', \hat{x}_{-i}) - u_i(a_i, \hat{x}_{-i}) \geq \delta_k.
	\end{equation*}
	Now take $\delta = \min_{\{k \geq 0 \text{ s.t. } \exists a_i \not\in R_i(\Acal^k)\}} \{\delta_k\} > 0$, a minimum over a finite number of values, to obtain the result.
\end{proof}

\begin{remark}
	\label{rem:rationalizable-fix-points}
	The sets of correlated rationalizable actions $\bar{\Acal}$ are fix points under $R(\cdot)$, which follows directly from Definition \ref{def:rationalizable-actions}.
\end{remark}

\begin{remark}
	The operator $R(\cdot)$ may cycle for some games and some inputs, meaning that after some number of iterations greater than 1, the same set is obtained: $(R \circ \dots \circ R) (\hat{\Acal}) = \hat{\Acal}$. 
	However, this can only happen if there are some correlated rational responses to $\hat{\Acal}$ that are not within $\hat{\Acal}$. If we start applying $R$ at the full action sets $\Acal^0$ as defined above, we will never end up in such a cyclic relation.
	Instead, we will reach a fixed point of $R$ that is a correlated rationalizable set. The following proposition makes this statement more formal.
\end{remark}

\begin{proposition}
	\label{prop:rationalizable-subset-relation}
	As before, let $\Acal^0 = \Acal = \Acal_1 \times \dots \times \Acal_n$ and $\Acal^{k} = R(\Acal^{k-1})$, $k = 1, \dots, K$, define the sequence of correlated rational responses. 
	The sets $\Acal^k_i$ are (non-strict) subsets of their predecessors $\Acal^{k-1}_i$: $ \Acal^k_i \subseteq \Acal^{k-1}_i$.
\end{proposition}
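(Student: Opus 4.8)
The plan is to prove the claim by induction on $k$, with the engine being a monotonicity property of the correlated-rational-response operator: if $\hat{\Acal} \subseteq \tilde{\Acal}$ componentwise (i.e.\ $\hat{\Acal}_i \subseteq \tilde{\Acal}_i$ for every $i \in \players$), then $R_i(\hat{\Acal}) \subseteq R_i(\tilde{\Acal})$ for each player $i$. Everything else follows from this by a short bookkeeping argument.

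To establish the monotonicity, I would fix $a_i \in R_i(\hat{\Acal})$ and invoke Definition \ref{def:correlated-rational-responses}: there exists $\hat{x}_{-i} \in \Delta(\hat{\Acal}_{-i})$ with $u_i(a_i, \hat{x}_{-i}) \geq u_i(a_i', \hat{x}_{-i})$ for all $a_i' \in \Acal_i$. Since $\hat{\Acal}_{-i} \subseteq \tilde{\Acal}_{-i}$, we have $\Delta(\hat{\Acal}_{-i}) \subseteq \Delta(\tilde{\Acal}_{-i})$, so the very same distribution $\hat{x}_{-i}$ is an admissible witness in the definition of $R_i(\tilde{\Acal})$. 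The key point is that the universally quantified comparison ``$\forall a_i' \in \Acal_i$'' ranges over the \emph{fixed} full action set $\Acal_i$ in both cases, so the right-hand side of the optimality inequality is unchanged; hence $a_i \in R_i(\tilde{\Acal})$, as desired.

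With monotonicity in hand, the base case $k=1$ is immediate: $R_i(\hat{\Acal}) \subseteq \Acal_i$ for any input $\hat{\Acal}$ (correlated rational responses are by definition a subset of $\Acal_i$), so $\Acal^1_i = R_i(\Acal^0) \subseteq \Acal_i = \Acal^0_i$. For the inductive step, assume $\Acal^k_i \subseteq \Acal^{k-1}_i$ for every $i$; then the product sets satisfy $\Acal^k \subseteq \Acal^{k-1}$, and applying the monotonicity property componentwise yields $\Acal^{k+1}_i = R_i(\Acal^k) \subseteq R_i(\Acal^{k-1}) = \Acal^k_i$ for all $i$, which closes the induction.

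I do not anticipate a genuine obstacle; the one subtlety worth flagging is precisely the observation above, namely that the ``$\forall a_i'$'' clause in $R_i$ is taken over the original action set rather than over the shrinking sets $\hat{\Acal}_i$. This is what makes $R_i$ monotone (rather than anti-monotone) in its argument, and it is exactly the feature needed for the iteration $\Acal^0 \supseteq \Acal^1 \supseteq \cdots$ to be a decreasing chain in the finite lattice of product subsets, hence to stabilize at a fixed point that is a correlated rationalizable set (cf.\ Remark \ref{rem:rationalizable-fix-points}).
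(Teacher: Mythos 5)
Your proposal is correct and is essentially the paper's proof: the paper also argues by induction, and its inductive step (phrased as a contradiction) is exactly your monotonicity observation — the witness distribution $\hat{x}_{-i} \in \Delta(\Acal^k_{-i})$ remains admissible in $\Delta(\Acal^{k-1}_{-i})$ because $\Acal^k_{-i} \subseteq \Acal^{k-1}_{-i}$, while the comparison set $\Acal_i$ stays fixed. Extracting this as an explicit monotonicity lemma for $R_i$ is a clean presentation choice, but the underlying argument is the same.
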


\begin{proof}
	We prove this statement by induction.
	\begin{description}
		\setlength{\itemindent}{0pt}
		\item[\emph{Induction start, $k = 1$:}] Since $\Acal^0 = \Acal$, we get $\Acal^1 \subseteq \Acal^0$.
		\item[\emph{Induction step, $k \to k + 1$:}] 
		Assume that $\Acal^{k'} \subseteq \Acal^{k'-1}$ holds for all $k' \leq k$. We show that this also holds for $k + 1$, i.e., $\Acal_{i}^{k+1} \subseteq \Acal_i^k$.
		Assume, towards contradiction, that $\Acal^{k+1}_i$ contains an element $a \not \in \Acal^k_i$. For this element, there is a mixed opponent strategy $\hat{x}_{-i} \in \Delta(\Acal^k_{-i})$ such that $a$ maximizes $u_i(a, \hat{x}_{-i})$. Since $\Acal^k_{-i} \subseteq \Acal^{k-1}_{-i}$, $\hat{x}_{-i} \in \Delta(\Acal^{k-1}_{-i})$, and thus $a$ must also be in $\Acal^k_i$, leading to contradiction.
	\end{description}
\end{proof}


Finally, we also provide proofs for Proposition \ref{prop:CE-is-correlated-rationalizable} and \ref{prop:SSU-not-in-CCE}.

\begin{proof}[Proof of Proposition \ref{prop:CE-is-correlated-rationalizable}]
	Consider an action $a_i \in \Acal_i$ that is supported in some CE, meaning that there is some probability distribution $p$ on the cartesian product of action sets $\Acal = \Acal_1 \times \dots \times \Acal_\n$ such that for all $a_i' \in \Acal_i$
	\begin{equation*}
		\E_{a \sim p} [u_i(a) \vert a_i] \geq \E_{a \sim p} [u_i(a_i',a_{-i}) \vert a_i].
	\end{equation*}
	The existence of $p$ implies that there exists a joint distribution $\hat{x}_{-i}$ on the opponents' actions $\Acal_{-i}$, namely the posterior distribution of $p$ given $a_i$, such that for all $a_i'$
	\begin{equation*}
		u_i(a_i, \hat{x}_{-i})  \geq u_i(a_i',\hat{x}_{-i}).
	\end{equation*}
	This shows that $a_i$ is a correlated rational response to $\hat{x}_{-i}$. Because this holds for all players in a CE, all players can only choose correlated rational responses. Mutually correlated rational responses show that the actions are indeed correlated rationalizable.
\end{proof}

\begin{proof}[Proof of Proposition \ref{prop:SSU-not-in-CCE}]
	We show that not every strictly serially undominated action is supported by some CCE. To that end we consider the following matrix game with two agents $\Ncal = \{1, 2\}$ and two actions $\Acal_1 = \Acal_2 = \{a_1, a_2\}$ each. The utilities are given by the payoff matrices
	\[
	P_1 = \begin{pmatrix}
		0 & 2 \\ 2 & 1
	\end{pmatrix}, \quad P_2 =  \begin{pmatrix}
		2 & 0 \\ 0 & 0
	\end{pmatrix},
	\]
	where $u_i(a_k, a_l) = (P_i)_{kl}$ with $i \in \Ncal$ and $k,l \in \{1,2\}$. 
	It is easy to see that no action is strictly dominated. Therefore the set of serially undominated actions is the whole action set. We will now show that there is no CCE $\sigma = (\sigma_{11}, \sigma_{12}, \sigma_{21}, \sigma_{22}) \in \Delta(\Acal)$ where the action $a_1$ of agent 1 has a positive probability, i.e., we show that $\sigma_{11} = \sigma_{12} = 0$, even though $a_1$ is undominated.
	The constraints for the CCE are given by the following constraints (see Definition \ref{def:cce}):
	\begin{align*}
		\sigma_{21}(2-0) + \sigma_{22}(1-2) & \geq 0, \\
		\sigma_{11}(0-2) + \sigma_{12}(2-1) & \geq 0, \\ 
		\sigma_{12}(0-2) + \sigma_{22}(0-0) & \geq 0, \\ 
		\sigma_{11}(2-0) + \sigma_{21}(0-0) & \geq 0.
	\end{align*}
	Since $\sigma$ has to be a probability measure over $\Acal$, we also have $\sigma_{11} + \sigma_{12} + \sigma_{21} + \sigma_{22} = 1$ and $\sigma_{11}, \sigma_{12}, \sigma_{21}, \sigma_{22}  \geq 0 $
	From the third constraint we immediately get $\sigma_{12} \leq 0$ and thereby $\sigma_{12} = 0$. Using this in the second constraint gives us $\sigma_{12} \geq 2 \sigma_{11}$. Therefore we have $\sigma_{11} = \sigma_{12} = 0$, which means that action 1 of agent 1 is not supported in any CCE even though it is a undominated action $\Rightarrow$ $SSU \not\subseteq CCE$.
\end{proof}

\subsection{Correlated Rationalizable Set in Bertrand Oligopolies}
\label{app:correlated-rationalizable-bertrand}

\begin{proposition}
	\label{prop:standard-demand-correlated-rationalizable}
	In the symmetric Bertrand competition game with standard demand and action space $\Acal_i = \{a^1, \dots, a^\numact\}$, $i = 1, \dots, \n$, with $c < a^1 < \dots < a^\numact \leq 1$ and $a^{k-1} - (a^k - a^{k-1}) \geq c$, the set of correlated rationalizable actions is $\bar{\Acal}_i = \{a^1\}$ for every player, which is equivalent to the unique Nash equilibrium of the game.
\end{proposition}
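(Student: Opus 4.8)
The plan is to compute the correlated rationalizable set through the iteration $\Acal^0=\Acal_1\times\cdots\times\Acal_\n$, $\Acal^{k}=R(\Acal^{k-1})$ of Definition~\ref{def:correlated-rational-responses}, and to show that each round strips off the topmost surviving action, so that after $\numact-1$ rounds only $\{a^1\}$ remains for every player. Write $\varphi(a):=(1-a)(a-c)$, so that under standard demand a firm posting the (weakly) lowest price $a$, tied with $n_{min}-1$ others, earns $\tfrac{D}{n_{min}}\varphi(a)$; note $\varphi(a)>0$ whenever $c<a<1$. Throughout I use that this is a competition, $\n\ge 2$.

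The decisive ingredient is the elementary estimate that $\varphi(a^{k-1})>\tfrac1\n\,\varphi(a^{k})$ for all $2\le k\le\numact$. If $a^{k}=1$ this is clear since the right-hand side is $0$; otherwise, using $c<a^{k-1}<a^{k}<1$,
\[
\frac{\varphi(a^{k})}{\varphi(a^{k-1})}=\frac{1-a^{k}}{1-a^{k-1}}\cdot\frac{a^{k}-c}{a^{k-1}-c}<1\cdot\Bigl(1+\frac{a^{k}-a^{k-1}}{a^{k-1}-c}\Bigr)\le 2\le \n,
\]
where the bound on the middle factor is exactly the discretization hypothesis $a^{k-1}-(a^{k}-a^{k-1})\ge c$. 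This is the only place that hypothesis enters, and it is also where the difficulty lies: it is precisely the condition ensuring that the profit margin forgone by not undercutting the top action is always outweighed by the extra demand captured. Without it (e.g.\ with $a^1$ very close to $c$) the action $a^{k}$ can genuinely remain a best reply to the belief ``all opponents price at $a^{k}$'', and the conclusion is false — so I expect this estimate, not the surrounding bookkeeping, to be the crux.

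Given the estimate, the plan is to prove by induction on $k$ that $\Acal^k_i\subseteq\{a^1,\dots,a^{\numact-k}\}$ for every $i$, together with $a^1\in\Acal^k_i$ for all $k$ (the point mass on the all-$a^1$ profile makes $a^1$ the unique best reply, so $a^1$ is never eliminated). The base case is immediate. For the step, assume $\Acal^k_i\subseteq\{a^1,\dots,a^\ell\}$ with $\ell:=\numact-k\ge 2$; it suffices to show $a^\ell\notin R_i(\Acal^k)$, i.e.\ that for every belief $\hat x_{-i}$ supported on $\{a^1,\dots,a^\ell\}^{\n-1}$ some action of $\Acal_i$ strictly beats $a^\ell$. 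Put $\pi:=\mathbb{P}_{\hat x_{-i}}[\text{all }\n-1\text{ opponents price at }a^\ell]$. If $\pi=0$ then $u_i(a^\ell,\hat x_{-i})=0<\tfrac{D}{\n}\varphi(a^1)\le u_i(a^1,\hat x_{-i})$, since pricing at $a^1$ always makes $i$ a (co-)lowest-priced firm. If $\pi>0$ then, since opponents never price above $a^\ell$, pricing at $a^\ell$ earns positive profit only in the all-at-$a^\ell$ event, so $u_i(a^\ell,\hat x_{-i})=\tfrac{D}{\n}\varphi(a^\ell)\,\pi$; whereas $u_i(a^{\ell-1},\hat x_{-i})\ge D\,\varphi(a^{\ell-1})\,\pi$, because in that same event $i$ is the unique lowest-priced firm and all other events contribute nonnegatively. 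Hence $u_i(a^{\ell-1},\hat x_{-i})-u_i(a^\ell,\hat x_{-i})\ge D\pi\bigl(\varphi(a^{\ell-1})-\tfrac1\n\varphi(a^\ell)\bigr)>0$ by the estimate. Either way $a^\ell$ is beaten, so $\Acal^{k+1}_i\subseteq\{a^1,\dots,a^{\ell-1}\}$.

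After $\numact-1$ steps this gives $\Acal^{\numact-1}_i=\{a^1\}$ for every player; this product set is trivially a fixed point of $R$, hence it is the correlated rationalizable set and $\bar\Acal_i=\{a^1\}$. To finish, $(a^1,\dots,a^1)$ is a Nash equilibrium — each firm earns $\tfrac{D}{\n}\varphi(a^1)>0$, any upward deviation yields $0$, and there is no admissible price below $a^1$ — and since every Nash equilibrium is a correlated equilibrium and hence lies in the correlated rationalizable set (Proposition~\ref{prop:CE-is-correlated-rationalizable}), it is the unique one, which coincides with $\bar\Acal$.
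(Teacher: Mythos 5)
Your proof is correct and follows essentially the same route as the paper's: the same key inequality $\varphi(a^{k-1}) > \tfrac{1}{n}\varphi(a^{k})$ extracted from the discretization hypothesis, the same two deviation targets ($a^1$ when opponents put little or no mass on the all-top-action event, $a^{k-1}$ when they concentrate on it), and iterated removal of the highest surviving action. The only organizational differences are that you split on $\pi = 0$ versus $\pi > 0$ rather than on a quantitative threshold for that probability (the paper's threshold additionally yields an explicit uniform improvement bound), and you obtain uniqueness of the Nash equilibrium as a corollary of the correlated-rationalizability computation rather than proving it separately first.
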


\begin{proof}
	First, we show that $a_i^* = a^1, \, \forall i \in \players$ is the unique pure Nash equilibrium. It is easy to see that it is a NE, since it is only possible to deviate to a higher action which yields a utility of $0$ instead of $\tfrac{D}{n}(1-a^1)(c-a^1) > 0$.
	It remains to show that there is no other Nash equilibrium. Other candidates have to be symmetric, i.e., $a_i^* = a^k, \, \forall i \in \players$ for some $k \in \{2, \dots, \numact\}$, since otherwise there is at least one agent pricing higher than some other agents, which means that the player has a utility of zero which could be strictly improved by deviating to $a^1$. 
	Now assuming that all agents play $a^k$, we want to show that deviating to $a^{k-1}$ strictly increases an agent's utility, showing that $a_i^* = a^k$ for all $i \in \players$ cannot be a NE.
	\begin{align*}
		u_i(a^{k-1}, a^k_{-i}) -  u_i(a^k, a^k_{-i}) 
		&= D(1-a^{k-1})(a^{k-1}-c) - \tfrac{D}{n}(1-a^k)(a^k-c) \\
		&= \tfrac{D}{n} \left[ n(1-a^{k-1})(a^{k-1}-c) - (1-a^{k-1})(a^k-c) + (a^k-a^{k-1})(a^k-c) \right] \\
		&\geq \tfrac{D}{n} \left[ (1-a^{k-1})(2a^{k-1}-2c) - (1-a^{k-1})(a^k-c) + (a^k-a^{k-1})(a^k-c) \right] \\
		&=\tfrac{D}{n} \left[ (1-a^{k-1})(a^{k-1} - (a^k-a^{k-1}) - c)+ (a^k-a^{k-1})(a^k-c) \right]  \\
		&\geq \tfrac{D}{n} (a^k-a^{k-1})(a^k-c) > 0.
	\end{align*}
	In the last step, we use the assumption on the discretization, i.e., $(a^{k-1} - (a^k-a^{k-1}) - c) \geq 0$. 
	Therefore, we have uniqueness of our NE.    
	
	Next, we want to show that no other action than $a^1$ can be correlated rationalizable. Towards contradiction, assume we have supports $\hat{\Acal}_1, \dots, \hat{\Acal}_\n$ for all players where at least one player can play an action unequal $a^1$. We show that there must be some player $i$ and one action for which Equation \eqref{eq:irrational-actions} holds, meaning that the support will change under $R_i$.
	W.l.o.g. let player $i$ support actions up to $a^k, k \in \{2, \dots, \numact\}$, the highest action of any support. (There could be multiple such players.)	The other players have a joint distribution $x_{-i}$ with support only on actions up to $a^{max} = a^k $. 
	Let us  define $\Delta := \max_a u_i(a) $ and $p \in [0, 1]$, which denotes the probability that all opponents play $a^k$ under $x_{-i}$.
	Depending on $x_{-i}$ (in particular $p$), we will show that either $a^1$ or $a^{k-1}$ can be chosen to get inequality (\ref{eq:irrational-actions}).
	\begin{enumerate}[label=(\roman*)]
		\item If $p \leq \frac{1}{2 \Delta} (1-a^1)(a^1-c)$, we choose $a_i' = a^1$:
		\begin{align*}
			\E_{-i} \left[ u_i(a^1, a_{-i}) - u_i(a^k, a_{-i}) \right] 
			&= p  \left[ u_i(a^1, a^k_{-i}) - u_i(a^k, a^k_{-i}) \right] + (1-p) \left[ u_i(a^1, a^1_{-i}) + \dots \right]\\
			&\geq p \tfrac{D}{n}  \left[ (n-1) (1-a^1)(a^1-c) - (1-a^k)(a^k-c) \right]  + \tfrac{D}{n} (1-a^1)(a^1-c) \\
			&> - p \tfrac{D}{n} \Delta + \tfrac{D}{n} (1-a^1)(a^1-c) \\
			&> \tfrac{D}{2n} (1-a^1)(a^1-c) = \frac 1 2 u_i(a^1,a^1_{-i}) > 0.
		\end{align*}
		For the last line we used that $(n-1) (1-a^1)(a^1-c) > 0$, $- (1-a^k)(a^k-c) \geq -\Delta$, and $p < \frac{1}{2 \Delta} (1-a^1)(a^1-c)$.
		\item If $p > \frac{1}{2 \Delta} (1-a^1)(a^1-c)$, we choose $a_i' = a^{k-1}$:
		\begin{align*}
			\E_{-i} \left[ u_i(a^{k-1}, a_{-i}) - u_i(a^k, a_{-i}) \right] 
			&\geq p \left[  u_i(a^{k-1}, a^k_{-i}) - u_i(a^k, a^k_{-i}) \right] 
			\geq p \tfrac{D}{n} (a^k-a^{k-1})(a^k-c) \\
			&> \tfrac{D}{2 n \Delta} (1-a^1)(a^1-c) (a^k-a^{k-1})(a^k-c) \\
			&\geq \frac 1 2 u_i(a^1_i,a^1_{-i}) \cdot \frac 1 \Delta \min_k (a^k-a^{k-1})(a^k-c) \\
			&> \frac 1 2 u_i(a^1,a^1_{-i}) \cdot \frac 1 \Delta \min_k (a^k-a^{k-1})^2 > 0.
		\end{align*}
		In the first line, we ignored instances where all opponents might have played $a^{k-1}$ and used the inequality we derived above to show uniqueness of the NE. In the last line we used $a^k-c > a^{k-1} - c \geq a^k - a^{k-1}$.
	\end{enumerate} 
	Thus, for all mixed strategies of the opponents within the given support, there always exists an action $a_i'$ for which the utility is strictly increased. As a consequence, $a^k$ is not stable under $R_i$ and the supports cannot be correlated rationalizable.
\end{proof}

\begin{proposition}
	\label{prop:linear-demand-correlated-rationalizable-actions}    
	For a symmetric Bertrand competition game with linear demand, $\gamma < \beta$, and action space $\Acal = \{a^1, \dots, a^\numact \}$ where $0 \leq a^1 \leq \dots, < a^{k^*} \leq p^* < a^{k^* + 1} \dots < a^\numact$ with $p^* = \frac{\alpha + \beta c}{2 \beta - \gamma}$, all correlated rationalizable sets are subsets of $\{a^{k^*}, a^{k^* + 1}\}$ for every player $i$.
\end{proposition}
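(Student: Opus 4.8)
The plan is to exploit the fact that, under \ref{eq:linear_demand} demand, a player's payoff against a (possibly correlated) opponent distribution depends on that distribution only through a single scalar. By linearity of expectation, for any $\hat x_{-i}\in\Delta(\Acal_{-i})$,
$u_i(a_i,\hat x_{-i}) = (a_i-c)\bigl(\alpha-\beta a_i+\gamma m\bigr)$, where $m := \E_{a_{-i}\sim\hat x_{-i}}\bigl[\tfrac{1}{n-1}\sum_{j\neq i}a_j\bigr]$ is the expected average opponent price. This is a strictly concave quadratic in $a_i$, writable as $-\beta\bigl(a_i-\rho(m)\bigr)^2+C(m)$ with maximiser $\rho(m)=\tfrac{\alpha+\beta c+\gamma m}{2\beta}$. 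Since $\gamma<\beta$, the affine map $\rho$ is increasing with slope $\tfrac{\gamma}{2\beta}\in(0,\tfrac12)$, its unique fixed point is exactly $p^*=\tfrac{\alpha+\beta c}{2\beta-\gamma}$, and $\rho(a)>a$ for $a<p^*$ while $\rho(a)<a$ for $a>p^*$. First I would record the elementary grid consequence of the quadratic form: against $\hat x_{-i}$, action $a^k$ is weakly preferred to $a^{k-1}$ iff $\rho(m)\ge\tfrac12(a^{k-1}+a^k)$ and weakly preferred to $a^{k+1}$ iff $\rho(m)\le\tfrac12(a^k+a^{k+1})$; in particular $\rho(m)<\tfrac12(a^{k-1}+a^k)$ forces $a^{k-1}$ to \emph{strictly} beat $a^k$, and symmetrically on the upper side.

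With this in hand I would argue by contradiction in the manner of Proposition~\ref{prop:standard-demand-correlated-rationalizable}. Suppose $\bar\Acal=\bar\Acal_1\times\dots\times\bar\Acal_\n$ is correlated rationalizable, so $\bar\Acal_i\subseteq R_i(\bar\Acal)$ for every $i$, yet $\bar\Acal_i\not\subseteq\{a^{k^*},a^{k^*+1}\}$ for some player. Then either some player's support contains an action above $a^{k^*+1}$, or some player's support contains an action below $a^{k^*}$. In the first case, let $a^r$ be the \emph{largest} action occurring in \emph{any} $\bar\Acal_j$ (so $r\ge k^*+2$), and let $i$ be a player with $a^r\in\bar\Acal_i$. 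For every $\hat x_{-i}\in\Delta(\bar\Acal_{-i})$ all opponents price at most $a^r$, hence $m\le a^r$ and $\rho(m)\le\rho(a^r)=\tfrac{(2\beta-\gamma)p^*+\gamma a^r}{2\beta}$. A one-line computation using $a^{r-1}\ge a^{k^*+1}>p^*$, $a^r>p^*$ and $\beta-\gamma>0$ gives $\rho(a^r)<\tfrac12(a^{r-1}+a^r)$, so $a^{r-1}$ strictly beats $a^r$ against every such $\hat x_{-i}$; thus $a^r\notin R_i(\bar\Acal)$, contradicting $a^r\in\bar\Acal_i$. The second case is symmetric: take $a^l$ the smallest action in any support (so $l\le k^*-1$), use $m\ge a^l$ and $\rho(m)\ge\rho(a^l)$, and show $\rho(a^l)>\tfrac12(a^l+a^{l+1})$ from $a^{l+1}\le a^{k^*}\le p^*$ and $a^l<p^*$, so that $a^{l+1}$ strictly beats $a^l$. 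Both cases being impossible, $\bar\Acal_i\subseteq\{a^{k^*},a^{k^*+1}\}$ for every $i$.

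I do not anticipate a genuine obstacle; the content is bookkeeping rather than a delicate estimate. The points that need care are: (i) the reduction "payoff depends on the opponents' joint distribution only through its mean $m$" must be justified for correlated distributions, which it is, by linearity of expectation (this is exactly what distinguishes the argument from the rationalizability, rather than correlated-rationalizability, case); (ii) $a^r$ and $a^l$ must be chosen globally across all players, so that the one-sided bounds $m\le a^r$ and $m\ge a^l$ hold for \emph{every} feasible $\hat x_{-i}$; and (iii) the two sign computations use only $\gamma<\beta$ together with the placement $a^{k^*}\le p^*<a^{k^*+1}$. That placement also guarantees $1\le k^*<\numact$, so the boundary indices $a^{k^*\pm 1}$ invoked above exist; when $k^*=1$ the "too low" case is vacuous and when $k^*=\numact-1$ the "too high" case is, and the conclusion holds unchanged.
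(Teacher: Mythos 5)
Your proposal is correct and follows essentially the same route as the paper's proof: both reduce the payoff against a correlated opponent distribution to a function of the mean opponent price via linearity, pivot on the continuous-game fixed point $p^* = \tfrac{\alpha+\beta c}{2\beta-\gamma}$, and derive a contradiction by showing that the globally highest (resp.\ lowest) supported action outside $\{a^{k^*}, a^{k^*+1}\}$ is strictly improved upon against every feasible distribution. The only cosmetic difference is the exhibited deviation --- you move to the adjacent action $a^{r-1}$ (resp.\ $a^{l+1}$) via the quadratic-vertex/midpoint criterion, whereas the paper jumps directly to $a^{k^*+1}$ (resp.\ $a^{k^*}$); both computations rest on the same identity and the same hypotheses $\gamma<\beta$ and $a^{k^*}\le p^*<a^{k^*+1}$.
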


\begin{proof}
	We will use the Nash equilibrium of the continuous game $p^* = \tfrac{\alpha + \beta c}{2 \beta - \gamma}$ as a reference point and show that only the neighboring points of $p^*$, i.e., $a^{k^*}$ or $a^{k^*+1}$ can be stable under $R$.
	Using that the utility $u_i(a_i,a_{-i})$ is linear in $a_{-i}$, we can write the difference in the utility between some action $a$ and $a'$ for agent $i$ as 
	\begin{align*}
		\E_{a_{-i} \sim x_{-i}} \left[ u_i(a', a_{-i}) - u_i(a, a_{-i}) \right]
		&=  u_i(a', \E_{a_{-i} \sim x_{-i}} \left[ a_{-i} \right] )- u_i(a, \E_{a_{-i} \sim x_{-i}} \left[ a_{-i} \right] )\\
		&= (\alpha - \beta a' + \gamma \bar a_{-i})(a'-c) - (\alpha - \beta a + \gamma \bar a_{-i})(a-c) \\
		&= (a'- a)(\alpha + \beta c - \beta(a' + a) + \gamma  \bar a_{-i}),
	\end{align*}
	where $\bar a_{-i} := \E_{a_{-i} \sim x_{-i}} [ \tfrac{1}{n-1} \sum_{j\neq i} a_j ] $ is the expected average action of the opponents. Using the definition of $p^*$, we obtain
	\begin{equation} \label{eq:diff_utility_linear_demand}
		u_i(a', x_{-i}) - u_i(a, x_{-i}) = (a'-a) \ll \beta(p^*-a') + \beta(p^* - a) - \gamma (p^* - \bar a_{-i}) \rr.
	\end{equation}
	
	We show that deviating from an action $a  \notin \{ a^{k^*}, a^{k^*+1} \}$ to one of these actions is always better if $a$ is the lowest or highest action of any support. Therefore, an action outside of $\{ a^{k^*}, a^{k^*+1} \}$ can never be correlated rationalizable. We have to consider two different cases:
	Given an action profile $a$
	\begin{enumerate}[label=(\roman*)]
		\item let agent $i$ be the one with the highest support up to action $a$ with $a > a^{k^*+1}>p^*$:
		\begin{align}
			u_i(a^{k^*+1},x_{-i}) - u_i(a,x_{-i})  
			&\stackrel{\eqref{eq:diff_utility_linear_demand}}{=}  (a^{k^*+1}-a) \ll \beta(p^*-a^{k^*+1}) + \beta(p^* - a) - \gamma (p^* - \bar a_{-i}) \rr \notag \\ 
			&\geq (a^{k^*+1}-a)(\beta(p^*-a^{k^*+1}) + (\beta-\gamma)(p^*-a)) \notag \\
			&\geq (a^{k^*+1}-a^{k^*+2})\beta(p^*-a^{k^*+1}) > 0. \label{eq:ineq1_linear_demand}
		\end{align}
		\item let agent $i$ be the one with the lowest support up to action $a$ with $a < a^{k^*} \leq p^*$:
		\begin{align}
			u_i(a^{k^*},x_{-i}) - u_i(a,x_{-i})  
			&\stackrel{\eqref{eq:diff_utility_linear_demand}}{=} (a^{k^*}-a) \ll \beta(p^*-a^{k^*}) + \beta(p^* - a) - \gamma (p^* - \bar a_{-i}) \rr \notag \\
			&\geq (a^{k^*}-a)(\beta(p^*-a^{k^*}) + (\beta-\gamma)(p^*-a)) \notag \\
			&\geq(a^{k^*}-a^{k^*-1})(\beta - \gamma)(p^*-a^{k^*-1}) > 0. \label{eq:ineq2_linear_demand}
		\end{align}    
	\end{enumerate}
	In summary, this shows that for any choice of, potentially mixed, opponent strategy with support of at most/at least $a$, there is an action that is better than $a$. Thus, $a$ is not stable under $R_i$ and cannot be part of any correlated rationalizable set.
	
\end{proof}

\begin{proposition}
	\label{prop:logit-demand-correlated-rationalizable-actions}    
	For a symmetric Bertrand competition game with logit demand where $0 \leq a^1 \leq \dots, < a^{k^{*}} < \dots < a^{k^{**}} < \dots < a^\numact$, all correlated rationalizable sets are subsets of $\{a^{k^*}, \dots, a^{k^{**}}\}$ for every player $i$.
\end{proposition}

\begin{proof}
	The continuous Bertrand competition with logit demand is log-supermodular, i.e., if we transform the utilities by the application of a logarithm, the transformed game is supermodular \citep{milgrom1990rationalizability}. 	
	When we discretize the game, this property remains valid in the sense of \cite{milgrom1990rationalizability}.
	The log-transformed game is thus a game with strategic complementarities, a generalization of supermodularity, as defined in \cite{milgrom_monotone_1994}. 
	Since the concept of strategic complementarities is only based on ordinal relations between the actions and rewards and because the log-transformation is monotone, the original game also has this property. 
	\cite[Thm.~12]{milgrom_monotone_1994} shows that the game thus has smallest ($a^{k^*}$) and largest ($a^{k^{**}}$) serially strictly undominated strategies, which are both Nash equilibria. For this reason, the set of correlated rationalizable actions (being equivalent to SSU) is tight to the set of discrete equilibria.
\end{proof}

\subsection{Convergence with Mean-based Algorithms}
\label{sec:proof of the theorems}

In this section, we finally provide the proof for our main statement: the convergence of mean-based algorithms to correlated rationalizable actions. Before we start with the formal derivation, we briefly discuss the steps of our proof.

\subsubsection{Proof Sketch}

We want to show that all agents play only correlated rationalizable actions with probability close to 1 as time goes to infinity. Towards this result, we do the following:
\begin{description}
	\item[Step 1.] (Proposition \ref{prop:individually-rational-responses}): We show that every mean-based agent tends to choose correlated rational responses to their competitors' past actions with high probability. In the context of Bertrand competition games, this means that an individual agent undercuts the opponents' prices if the prices are too high and increases prices if the prices are too low. 
	\item[Step 2.] (Proposition \ref{prop:mean-based-convergence}): By an inductive argument, we find that agents will settle on correlated rationalizable actions. Assume that all players have been playing a subset of the action space for a prolonged time. Every mean-based player will thus select actions that are correlated rational responses to their competitors' past actions. This will continue for an extended number of steps, which changes the set of past actions. As a result, the players adapt their strategy again by selecting correlated rational responses to the new history. By noting that such a behavior of repeated correlated rational responses will lead to a set of correlated rationalizable responses, we can establish our result. Since mean-based algorithms do not "exploit" all of the time but sometimes choose "exploratory" actions at random, we must rely on probabilistic arguments to show that this happens with probability one.
	\item[Step 3.] 
	(Propositions \ref{prop:individually-rational-responses-stochastic} and \ref{prop:staggered-entry}): Finally, we extend these results by providing variants of Proposition \ref{prop:individually-rational-responses} that extend the results to noisy or stochastic payoffs and staggered entry.
\end{description}

\subsubsection{Notation and Formal Setting}

Our argument is of probabilistic nature. Let $(\Omega, \mathcal{F}, \prob)$ be a probability space where $\Omega$ contains all possible sequences of actions and rewards, including counterfactual rewards, $\mathcal{F}$ is the power set of these sequences, and $\prob$ is a probability measure that is induced by the agents' algorithms and the reward distributions\footnote{We cover stochastic rewards in Section \ref{sec:stochastic-utility}.}. On this probability space, we consider a stochastic process $\{a(t, i, \omega): t \in \mathbb{N}, ~ i = 1, \dots, n, ~ \omega \in \Omega\}$, in which $a(t, i, \omega)$ represents the action of player $i$ at time $t$ for some random outcome $\omega$. For a simple and consistent notation, we let $a_{i,t} = a(t, i)$ in the sections below. 
The exclusion of rewards in the stochastic process has technical reasons (see Section \ref{sec:stochastic-utility}), but it does not limit us in our ability to make statements about the convergence of the actions. Note that, in the setting that we cover first and where rewards are deterministic, (realized and counterfactual) rewards can be recovered exactly from the actions of all players. 

Based on this stochastic process, we define a filtration $\{\mathcal{F}_t\}_{t = 0}^{\infty}$ as follows: $\mathcal{F}_0 = \{\emptyset, \Omega\}$ and $\mathcal{F}_t = \sigma\{(a_{1,s}, \dots, a_{n, s}), s = 1, \dots, t\}$ is the $\sigma$-algebra generated by the actions up to time $t$. The stochastic process thus is adapted to this filtration. Naturally, we define the $\mathcal{F}_{t - 1}$-measurable random variable $H_{t - 1} = ((a_{1, s}, \dots, a_{n, s}))_{s = 1}^{t - 1}$ to be the history of all actions up to time $t - 1$. 
In summary, our derivations will rely on the filtered probability space $(\Omega, \mathcal{F}, \{\mathcal{F}_t\}, \prob)$.

Recall that we defined the series of correlated rational response sets $\Acal^0, \dots, \Acal^k, \dots$ as $\Acal^{k} = R(\Acal^{k-1})$ for all $k \geq 1$ and $\Acal^0 = \Acal$. Let $K$ be the number for which $\Acal^{K}$ is a strict subset of $\Acal^{K-1}$ and $\Acal^K = \Acal^{K+1} = \bar{\Acal}$. Because of remarks \ref{prop:rationalizable-subset-relation} and \ref{rem:rationalizable-fix-points} this $K$ always exists. To prove Theorem \ref{thm:mean-based-convergence}, we show that mean-based algorithms converge along this series of support sets with high probability, approaching $\Acal^K = \bar{\Acal}$ in the limit. Also, let $\numact$ be the maximum cardinality of any action space $\Acal_1, \dots, \Acal_\n$.

\subsubsection{Step 1: Individual Players Play Correlated Rational Responses}
\label{section:proof-step-mean-based-properties}
In the first step, we reason that individual players with mean-based algorithms will choose correlated rational responses.
To do so, we look at the history of past actions $H_{t-1}$ of all players. 
If the opponents have mainly selected actions within $\Acal^k_{-i}$, we show that the player is likely to choose actions within $\Acal^{k+1}_i = R_i(\Acal^k)$.

We denote the empirical frequency of opponents ($-i$) selecting actions in $\Acal^k_{-i}$ up to time $t$ by
\begin{equation*}
	P_{i,t}(\Acal^k) := \frac{1}{t} \sum_{s = 1}^{t} \I \left[\forall j \neq i: ~ a_{j, s} \in \Acal^k_j \right].
\end{equation*}
Now we can state the proposition:
\begin{proposition}
	\label{prop:individually-rational-responses}
	Given $k \in \{0, \dots, K-1 \}$ and $\gamma_t < \frac{\delta}{2}$ where $\delta$ is the competition constant of the game.
	If $P_{i, t-1}(\Acal^k)$ is sufficiently large for a history of action $h = ((a_{1,s}, \dots, a_{n, s}))_{s = 1}^{t - 1}$, then $\prob_{i, t}(a \mid H_{t-1} = h) \leq \gamma_t$ for all $a \not\in \Acal^{k+1}_i$.
\end{proposition}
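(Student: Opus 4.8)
The plan is to convert the hypothesis on the empirical frequency $P_{i,t-1}(\mathcal{A}^k)$ into a uniform gap between average rewards and then apply the mean-based property verbatim. Fix $C>0$ with $|u_i(a',a_{-i})-u_i(a,a_{-i})|\le C$ for all players and all action pairs (this exists since the game is finite and payoffs are bounded), let $\delta$ be the competition constant of Proposition~\ref{prop:competition-constant-exists}, and set $p^\ast:=\frac{\delta/2+C}{\delta+C}$; note that $p^\ast<1$ precisely because $\delta>0$. The claim ``$P_{i,t-1}(\mathcal{A}^k)$ sufficiently large'' will be read as $P_{i,t-1}(\mathcal{A}^k)\ge p^\ast$. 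The slightly sharper $t$-dependent threshold $\frac{\gamma_t+C}{\delta+C}$ also works and may be more convenient when this proposition is fed into the inductive Proposition~\ref{prop:mean-based-convergence}.

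I would fix an arbitrary $a\notin\mathcal{A}^{k+1}_i$ and partition $\{1,\dots,t-1\}$ into $S_{\mathrm{good}}=\{s:\ a_{j,s}\in\mathcal{A}^k_j\ \text{for all }j\neq i\}$ and $S_{\mathrm{bad}}$, so that $|S_{\mathrm{good}}|=(t-1)\,P_{i,t-1}(\mathcal{A}^k)$. Since $p^\ast>0$, the hypothesis makes $S_{\mathrm{good}}$ nonempty, so the empirical distribution $\hat x_{-i}$ over the opponent profiles $\{a_{-i,s}\}_{s\in S_{\mathrm{good}}}$ is a well-defined element of $\Delta(\mathcal{A}^k_{-i})$ (each such $a_{-i,s}$ lies in $\mathcal{A}^k_{-i}$ by definition of $S_{\mathrm{good}}$) --- this is the bridge from the purely algorithmic quantity $\alpha_t$ to the object appearing in the competition constant. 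Because $\mathcal{A}^{k+1}_i=R_i(\mathcal{A}^k)\subseteq\mathcal{A}^k_i$ (Proposition~\ref{prop:rationalizable-subset-relation}), the assumption $a\notin\mathcal{A}^{k+1}_i$ means $a\notin R_i(\mathcal{A}^k)$, so the defining inequality of $\delta$ applied to $\hat x_{-i}$ produces some $a'\in\mathcal{A}_i$ with $u_i(a',\hat x_{-i})-u_i(a,\hat x_{-i})\ge\delta$. Writing $\alpha_t(a')-\alpha_t(a)=\frac{1}{t-1}\sum_{s=1}^{t-1}\big(u_i(a',a_{-i,s})-u_i(a,a_{-i,s})\big)$ and splitting the sum along $S_{\mathrm{good}}\cup S_{\mathrm{bad}}$, linearity gives that the $S_{\mathrm{good}}$ part equals $|S_{\mathrm{good}}|\,\big(u_i(a',\hat x_{-i})-u_i(a,\hat x_{-i})\big)\ge(t-1)P_{i,t-1}(\mathcal{A}^k)\,\delta$, while the $S_{\mathrm{bad}}$ part is at least $-(t-1)\big(1-P_{i,t-1}(\mathcal{A}^k)\big)C$. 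Dividing by $t-1$ and using $P_{i,t-1}(\mathcal{A}^k)\ge p^\ast$ yields $\alpha_t(a')-\alpha_t(a)\ge p^\ast\delta-(1-p^\ast)C=\delta/2>\gamma_t$, the last step being the hypothesis $\gamma_t<\delta/2$. By the mean-based property (conditioned on $H_{t-1}$, which determines $\alpha_t$), $\prob_{i,t}(a\mid H_{t-1})\le\gamma_t$; as $a$ was arbitrary in $\mathcal{A}_i\setminus\mathcal{A}^{k+1}_i$, the proposition follows.

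I do not expect a deep obstacle here --- this is essentially the base step of the induction carried out in Proposition~\ref{prop:mean-based-convergence}. The one delicate point is calibrating the threshold: $p^\ast$ must be pushed close enough to $1$ that the $\delta$-sized advantage accumulated on the ``good'' rounds outweighs the worst-case $C$-sized loss on the remaining rounds and still clears the margin $\gamma_t$, and one must check both that $p^\ast<1$ (precisely where $\delta>0$, i.e.\ Proposition~\ref{prop:competition-constant-exists}, is used) and that $P_{i,t-1}(\mathcal{A}^k)>0$, so that the conditioning distribution $\hat x_{-i}$ is legitimate. A minor bookkeeping point is that $\alpha_t$ and $P_{i,t-1}$ are both normalized by $t-1$, so no off-by-one discrepancy arises, and $t\ge 2$ is implicit in the mean-based framework.
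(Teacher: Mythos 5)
Your proposal is correct and follows essentially the same route as the paper's proof: the same good/bad time-step decomposition, the same interpretation of the good rounds as an empirical mixed strategy in $\Delta(\Acal^k_{-i})$ to which the competition constant is applied, and an algebraically identical threshold, since your $p^\ast=\frac{\delta/2+C}{\delta+C}$ equals the paper's $\tilde p=\frac{1}{2}\bigl(\frac{\Delta u}{\delta+\Delta u}+1\bigr)$ with $C=\Delta u$, both yielding the margin $\delta/2>\gamma_t$. Your explicit check that $S_{\mathrm{good}}$ is nonempty so that $\hat x_{-i}$ is well defined is a small point the paper leaves implicit, but the argument is the same.
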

\noindent By $\prob_{i,t}(a)$ we denote the probability that player $i$ with $\gamma_t$-mean-based algorithm chooses action $a$ at time $t$.

\begin{proof}
	We say that the empirical frequency $P_{i, t-1}(\Acal^k) $ is sufficiently large if 
	\begin{equation} \label{eq:p_tilde}
		P_{i, t-1}(\Acal^k) \geq \tilde p := \frac{1}{2} \ll \frac{\Delta u}{\delta + \Delta u} + 1 \rr \in (0, 1),
	\end{equation}
	where $\delta$ is the competition constant of the game and $\Delta u$ is the maximum absolute utility difference between any two actions of player $i$ for any opponent profile $a_{-i}$.
	Let $a' \in \Acal^{k+1}_i$ be an action which increases player $i$'s utility by at least $\delta$ compared to playing action $a \not\in \Acal^{k+1}_i$ if opponents play a mixed strategy with support on $\Acal^k_{-i}$. 
	Consider the counterfactual advantage of action $a'$ over $a$, which we can compute based on the actions given in $h$:
	\begin{equation*}
		\begin{aligned}
			\alpha_{i, t}(a') - \alpha_{i, t}(a)
			&:= \frac{1}{t-1} \sum_{s = 1}^{t-1} u_i(a', a_{-i,s}) - u_i(a, a_{-i,s}) \\
			&= \frac{1}{t-1} \sum_{\substack{1 \leq s \leq t -1: \\ a_{-i, s} \not \in \Acal_{-i}^k}} \underbrace{u_i(a', a_{-i,s}) - u_i(a, a_{-i,s})}_{\geq - \Delta u} 
			+ \frac{1}{t-1} \sum_{\substack{1 \leq s \leq t - 1: \\ a_{-i,s} \in \Acal_{-i}^k}} u_i(a', a_{-i,s}) - u_i(a, a_{-i,s}) \\
			&\geq (1 - P_{i, t-1}(\Acal^k_{-i})) \cdot (- \Delta u)
			+  P_{i, t-1}(\Acal^k) \cdot \underbrace{\E_{a_{-i} \sim x_{-i}} \left[u_i(a', a_{-i}) - u_i(a, a_{-i}) \right]}_{> \delta} \\
			&> (1 - P_{i, t-1}(\Acal^k)) (- \Delta u) + \delta P_{i, t-1}(\Acal^k) 
			= P_{i, t-1}(\Acal^k) (\delta + \Delta u) - \Delta u \\
			&\geq \tilde{p} (\delta + \Delta u) - \Delta u 
			= \frac{1}{2} (2\Delta u + \delta) - \Delta u 
			= \frac{\delta}{2} 
			> \gamma_t.
		\end{aligned}
	\end{equation*}
	The third line interprets the sum over all times for which the opponents play $a_{-i} \in \Acal^k_{-i}$ as an expectation over a joint action distribution $x_{-i} \in \Delta(\Acal^{k}_{-i})$ with support on $\Acal^k_{-i}$. We choose $a' \in \Acal^{k+1}_i$ such that it performs better than $a$ by a margin of at least $\delta$. For $a \not \in \Acal^{k+1}_i$, this action exists for any such mixed strategy, according to Equation \eqref{eq:irrational-actions}.
	The last line implies that the probability $\prob_{i, t}(a \mid h)$ that the mean-based algorithm chooses actions $a \not\in \Acal^{k+1}_i$ is small ($\leq \gamma_t$).
\end{proof}

\subsubsection{Step 2: High-probability Induction}
\label{section:proof-step-high-prob-induction}

In the second step, we use the insights from Proposition \ref{prop:individually-rational-responses} to make a statement about all players. 
Our inductive argument is based on the following observation. Suppose all agents played actions in $\Acal^{k}$ at time $t$ with high frequency. 
Then, each of the competitors will tend to play actions within $\Acal^{k+1}$ in the next step, thus increasing $P_{i, t-1}(\Acal^k)$ further (recall that $\Acal^{k+1} \subseteq \Acal^k$). At the same time, $P_{i, t-1}(\Acal^{k+1})$ will also increase.
Eventually, $P_{i, t^\prime-1}(\Acal^{k+1})$ will exceed some threshold at time $t^\prime$ with high probability. 
This induces the next step towards $\Acal^{k+2}$, and so on, until the algorithms reach $\Acal^K$.

Instead of looking at $P_{i, t-1}(\Acal^k)$ for individual players, we will now move on to looking at the frequency at which all players (not only the opponents) played within $\Acal^k$. 
We capture that this frequency is high by a series of events
\begin{equation*}
	A_k := \left[ \frac{1}{T_k} \sum_{t = 1}^{T_k} \I \left[ \exists i: ~ a_{i, t} \not\in \Acal^k_i \right] \leq \beta \right] = \left[ \frac{1}{T_k} \sum_{t = 1}^{T_k} \I \left[ \forall i: ~ a_{i, t} \in \Acal^k_i \right] \geq 1 - \beta \right].
\end{equation*}
These events are defined for some small $\beta > 0$ and for a corresponding series of time steps $T_0 < \dots < T_k < \dots < T_K$, and they are $\mathcal{F}_{T_k}$-measurable. 
In words, they capture that only in rare cases (fraction $\beta$ of all timesteps up to $T_k$), any agent chooses an action outside $\Acal^k$. Equivalently, this means that in most cases (fraction $1 - \beta$), all agents played actions from $\Acal^k$. 

We note the meaning of the event $A_K$ with respect to time-average convergence. For $\beta \to 0$, this event captures that all agents played within $\Acal^K$, i.e., the set of rationalizable actions, for a fraction of the time close to 1. 
If $\beta \to 0$ as $T_K \to \infty$, \textit{this implies time-average convergence.} 
The following proposition makes a statement about the probability that this will happen.

\begin{proposition}
	\label{prop:mean-based-convergence}
	Assume that all agents follow a $\gamma_t$-mean-based algorithm with respect to a monotonously decreasing sequence $\gamma_t$ such that $\gamma_t \to 0$ as $t \to \infty$.
	For every sufficiently small $\beta > 0$, there exists a rational constant $ c > 1$ such that
	\begin{equation*}
		\prob(A_K) \geq 1 - \sum_{k = 0}^{K - 1} \sum_{j = 1}^J \exp \left( \frac{- (T^j_k - T^{j-1}_k) \beta^2}{32} \right),
	\end{equation*}
	where $T_k^j = c^{j + Jk} T_0$ and $T_k^0 = T_k$ with a sufficiently large $T_0 \in \N$ and $J = \lceil \log_{c} \tfrac 2 \beta \rceil$.\\
	Moreover, $\prob(A_K)$ approaches 1 as $T_0 \to \infty$.
\end{proposition}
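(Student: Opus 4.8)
The plan is to prove Proposition~\ref{prop:mean-based-convergence} by induction on the level $k = 0,1,\dots,K$ of the correlated-rational-response hierarchy $\Acal^0 \supseteq \Acal^1 \supseteq \cdots \supseteq \Acal^K = \bar{\Acal}$, propagating the nested events
\[
A_k \;=\; \Bigl[\, \tfrac{1}{T_k}\textstyle\sum_{t=1}^{T_k}\I\!\left[\exists i:\, a_{i,t}\notin\Acal^k_i\right] \le \beta \,\Bigr].
\]
The base case $A_0$ holds deterministically, since $\Acal^0 = \Acal$ and no agent can play outside its full action set. For the inductive step I would show that, conditionally on $A_k$ and on $\beta$ being small enough that $1-\beta \ge \tilde p$ (with $\tilde p$ as in \eqref{eq:p_tilde} and $\gamma_t$ eventually below $\delta/2$), for every round $t \in (T_k, T_{k+1}]$ the empirical frequency $P_{i,t-1}(\Acal^k)$ with which opponents stayed inside $\Acal^k_{-i}$ remains at least $\tilde p$ --- the $\le\beta T_k$ ``old'' violations are diluted by the new rounds, which themselves stay inside $\Acal^k$ with high probability. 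Proposition~\ref{prop:individually-rational-responses} then bounds, for each agent, the conditional probability of playing outside $\Acal^{k+1}_i$ by $\gamma_t$, so a union bound over the $\n$ agents makes the conditional mean of the ``bad'' indicator $\I[\exists i:\, a_{i,t}\notin\Acal^{k+1}_i]$ at most $\n\gamma_t$; choosing $T_0$ large enough that $\n\gamma_t \le \beta/8$ for all $t \ge T_0$ keeps this uniformly small throughout.

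The second ingredient is concentration. On each sub-block $(T_k^{j-1},T_k^j]$ the centred partial sums $\sum_t\bigl(\I[\exists i:\, a_{i,t}\notin\Acal^{k+1}_i] - \E[\,\cdot\mid H_{t-1}]\bigr)$ form a bounded martingale-difference sequence, so Azuma--Hoeffding (increments bounded in $[0,1]$, deviation $\beta/4$) gives that the number of bad rounds added in that sub-block exceeds its conditional mean by more than $\tfrac\beta4(T_k^j - T_k^{j-1})$ with probability at most $\exp\bigl(-(T_k^j - T_k^{j-1})\beta^2/32\bigr)$; on the complementary event the new bad rounds are at most $\tfrac{3\beta}{8}(T_k^j - T_k^{j-1}) \le \tfrac\beta2(T_k^j - T_k^{j-1})$. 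Iterating over $j = 1,\dots,J$ and using the geometric schedule $T_k^j = c^{j+Jk}T_0$, so that $T_{k+1}/T_k = c^J \ge 2/\beta$, the at most $T_k$ rounds before time $T_k$ that may lie outside $\Acal^{k+1}$ are diluted to at most $T_k \le \tfrac\beta2 T_{k+1}$, while the fresh bad rounds accumulated on $(T_k,T_{k+1}]$ total at most $\tfrac\beta2 T_{k+1}$; together this is $\le \beta T_{k+1}$, which is exactly $A_{k+1}$. A union bound over the $k = 1,\dots,K$ levels and $j = 1,\dots,J$ sub-blocks of the Azuma failure events yields the stated lower bound on $\prob(A_K)$. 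Finally, for fixed $\beta$ (hence fixed $J$ and $K$) each $T_k^j - T_k^{j-1} = c^{\,j-1+Jk}(c-1)T_0 \to \infty$ as $T_0 \to \infty$, so the finite double sum tends to $0$ and $\prob(A_K) \to 1$; letting $\beta \to 0$ along with $T_0 \to \infty$ then delivers time-average convergence for Theorem~\ref{thm:mean-based-convergence}.

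The main obstacle is the circular dependence in the conditioning: the bound $\E[\I[\text{bad}_t]\mid H_{t-1}] \le \n\gamma_t$ from Proposition~\ref{prop:individually-rational-responses} is valid only on histories still ``clean at level $k$'', which is itself the event whose probability we are trying to control, so one cannot apply a single global concentration bound. The sub-block scaffold is the device that breaks this loop: one argues inductively that the events ``sub-blocks $1,\dots,j-1$ were clean'' force the cleanliness hypothesis of Proposition~\ref{prop:individually-rational-responses} to hold throughout sub-block $j$, and only then applies the Azuma bound for sub-block $j$ conditionally on that history --- most cleanly formalised via a stopping time (or by truncating the martingale increments to zero once cleanliness is first violated). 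Making this nesting airtight, and checking that the constants in the schedule really keep the carry-over below $\beta$ at every $T_k$ (including the degenerate initial case where all $T_0$ rounds are a priori uncontrolled), is where the genuine bookkeeping lies; the rest is routine.
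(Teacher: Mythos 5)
Your proposal follows essentially the same route as the paper's proof: induction over the hierarchy $\Acal^0\supseteq\cdots\supseteq\Acal^K$, a geometric sub-block schedule $T_k^j=c^{j+Jk}T_0$ with $c^J\ge 2/\beta$ to dilute the carry-over, Azuma--Hoeffding on each sub-block conditioned on cleanliness of all previous blocks (the paper formalises your ``scaffold'' via nested events $A_k^{j-1}=(A_0,\dots,A_k,B_k^1,\dots,B_k^{j-1})$ and a chain-rule decomposition), and the same $\exp(-(T_k^j-T_k^{j-1})\beta^2/32)$ failure bound. The one small slip is your union bound: Proposition~\ref{prop:individually-rational-responses} is a per-action bound, so the conditional mean of the bad indicator is $\n\numact\gamma_t$ (union over agents \emph{and} actions), not $\n\gamma_t$ --- the paper accordingly requires $\gamma_t<\beta/(4\n\numact)$, and your constants go through once this is corrected.
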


\begin{proof}
	For a sufficiently small $\beta$, i.e., $0 < \beta < 1 - \tilde p$ (as defined in Eq. \ref{eq:p_tilde}), we choose the constant $c \in \Q$ such that $1 < c < \tfrac{1-\beta}{\tilde p} $ and a sufficiently large $T_0$ such that $\gamma_t < \min \{\tfrac{\delta}{2}, \frac{\beta}{4nm}\}$ for all $t \geq T_0$, where $\delta$ is the competition constant of the game (see Section \ref{sec:remarks-correlated-rationalizability}), and $\Delta u$ is the maximium absolute utility difference between any two action profiles.
	Additionally, (for technical reasons) we choose $T_0$ such that $T^j_k = c^{j + Jk} T_0 \in \N$ for all $j = 0,\dots, J$ and $k = 0,\dots,K-1$.
	
	Before we can prove the result, we have to introduce some sub-events we use in the following:
	\begin{align}
		B_k^j &:= \left[ \frac{1}{T^j_k - T^{j-1}_k} \sum_{t = T^{j-1}_k+1}^{T^j_k} \I[\exists i: ~ a_{i, t} \not\in \Acal^{k+1}_i ] \leq \frac{\beta}{2} \right] \label{eq:subevent},\\ 
		A_k^j &:= (A_0, \dots, A_{k}, B_{k}^1, \dots, B_{k}^j).\label{eq:combined_events}
	\end{align}
	Note that $B_k^j$ is a somehow stricter version of the events $A_k$ since $\tfrac \beta 2 < \beta$ and $\Acal^{k+1} \subseteq \Acal^k$. Again, the events $B_k^j$ are $\mathcal{F}_{T_k^j}$-measurable. We can also interpret the events $A_k$, $A_k^j$, and $B_k^j$ as sets of "good" action sequences in the context of our filtered probability space.
	
	Using these definitions and some intermediate results, which we will prove afterwards, we get:
	{
		\begin{align*}
			\prob(A_K)  
			&\geq \prob(A_K, A_{K-1}, \dots, A_{1}, A_{0}) \\
			&= \underbrace{\prob(A_{0})}_{=1} \cdot \prod_{k = 0}^{K - 1} \prob(A_{k+1} \mid A_{k}, \dots, A_{0}) \tag{Chain Rule}\\
			&\geq \prod_{k = 0}^{K - 1} \prob(A_{k}^J \mid A_{k}, \dots, A_{0})  \tag{Lemma \ref{lem:implication}}\\
			&= \prod_{k = 0}^{K - 1}  \prob(B_{k}^1, \dots, B_{k}^J \mid A_{k}, \dots, A_{0}) \tag{Equation \ref{eq:combined_events}}\\
			&= \prod_{k = 0}^{K - 1} \prod_{j = 1}^J \prob(B_{k}^j \mid  A_{k}, \dots, A_{0}, B_{k+1}^1, \dots,  B_{k+1}^{j-1} ) \tag{Chain Rule} \\
			&= \prod_{k = 0}^{K - 1} \prod_{j = 1}^J \prob(B_{k}^j \mid A_{k}^{j-1}) \tag{Equation \ref{eq:combined_events}} \\
			&\geq \prod_{k = 0}^{K - 1} \bigg[ 1 - \sum_{j = 1}^J \exp \bigg( \frac{-(T^j_k - T^{j-1}_k) \beta^2}{32} \bigg) \bigg] \tag{Weierstrass Product Inequality and Lemma \ref{lem:B-events-conditioned-on-A-events}} \\
			&\geq 1 - \sum_{k = 0}^{K - 1} \sum_{j = 1}^J \exp \bigg( \frac{- (T^j_k - T^{j-1}_k) \beta^2}{32} \bigg) \tag{Weierstrass Product Inequality}
		\end{align*}
	}
	
	Since $(T^j_k - T^{j-1}_k) = (c-1) c^{j - 1 + Jk} T_0 $, the term in the last line converges to $1$ for $T_0 \rightarrow \infty$.
	
\end{proof}

In the remaining part of this subsection, we will show the necessary steps used in the proof of Proposition \ref{prop:mean-based-convergence}.
\begin{lemma} \label{lem:implication}
	$A_{k}^J$ implies $A_{k+1}$, which means $\prob(A_{k+1}) \geq \prob(A_{k}^J)$
\end{lemma}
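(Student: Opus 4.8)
The plan is to prove the lemma by exhibiting the event inclusion $A_k^J \subseteq A_{k+1}$; once this is established, $\prob(A_{k+1}) \ge \prob(A_k^J)$ follows immediately from monotonicity of $\prob$, so the whole task reduces to a deterministic counting argument along any fixed trajectory lying in $A_k^J$, with no probability involved. The key preliminary observation is the identity linking the time horizons: since $T_k^J = c^{J+Jk}T_0$ and $T_{k+1} = T_{k+1}^0 = c^{J(k+1)}T_0 = c^{J+Jk}T_0$, we have $T_{k+1} = T_k^J$. Hence $A_{k+1}$ is a statement about the rounds $\{1,\dots,T_k^J\}$, which I partition (using $T_k^0 = T_k$) as $\{1,\dots,T_k\}\cup\bigcup_{j=1}^{J}\{T_k^{j-1}+1,\dots,T_k^{j}\}$; the choice of $T_0$ in Proposition~\ref{prop:mean-based-convergence} makes all the $T_k^j$ integers, so these blocks tile the window exactly.

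Next I would count the \emph{bad} rounds, i.e.\ rounds $t$ with $\exists i:\, a_{i,t}\notin\Acal^{k+1}_i$. On the initial block $\{1,\dots,T_k\}$ I use only the trivial bound of $T_k$ such rounds; notably $A_k$ gives nothing sharper here, because $\Acal^{k+1}\subseteq\Acal^k$ means ``leaving $\Acal^{k+1}$'' is a \emph{weaker} event than ``leaving $\Acal^k$'', so the containment runs the wrong way. On each subsequent block $\{T_k^{j-1}+1,\dots,T_k^{j}\}$, the event $B_k^j$ directly bounds the bad rounds by $\tfrac{\beta}{2}(T_k^j - T_k^{j-1})$. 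Summing and telescoping, the total number of bad rounds in $\{1,\dots,T_k^J\}$ is at most $T_k + \tfrac{\beta}{2}(T_k^J - T_k)$.

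Finally I would verify that $T_k + \tfrac{\beta}{2}(T_k^J - T_k) \le \beta\,T_k^J$, which after dividing by $T_k^J = T_{k+1}$ is exactly the defining inequality of $A_{k+1}$. Rearranging gives the requirement $T_k^J/T_k \ge (2-\beta)/\beta$; since $T_k^J/T_k = c^J$ and $J = \lceil\log_c\tfrac{2}{\beta}\rceil$, we get $c^J \ge \tfrac{2}{\beta} > \tfrac{2-\beta}{\beta}$, so the bound holds and $A_k^J \subseteq A_{k+1}$. The one genuine subtlety — and really the only nontrivial step — is the remark that $A_k$ cannot help bound the $\Acal^{k+1}$-bad rounds, which forces me to charge the entire initial block $\{1,\dots,T_k\}$ to the ``bad budget'' and then lean on the geometric spacing $T_k^j = c^{\,j}T_k$ to make that block a vanishing fraction of $T_k^J$; this is precisely why $J$ is chosen so that $c^J \ge 2/\beta$. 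Everything else is bookkeeping.
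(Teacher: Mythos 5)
Your proof is correct and follows essentially the same route as the paper: the identical block decomposition of $\{1,\dots,T_{k+1}\}$ into $\{1,\dots,T_k\}$ plus the $J$ geometric blocks, the trivial bound $T_k$ on the initial block, the $\tfrac{\beta}{2}$-bounds from the $B_k^j$ events, and the same final arithmetic using $c^J \ge 2/\beta$ (the paper writes it as $c^{-J} + (1-c^{-J})\tfrac{\beta}{2} \le \tfrac{\beta}{2} + \tfrac{\beta}{2} = \beta$, a marginally looser but equivalent estimate). Your explicit remark that $A_k$ cannot sharpen the count on the initial block, because $\Acal^{k+1}\subseteq\Acal^k$ makes ``leaving $\Acal^{k+1}$'' the weaker containment, is correct and is left implicit in the paper.
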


\begin{proof}
	If $A_{k}^J$ holds we know that $B^1_{k}, \dots, B^J_{k}$ hold, which means that the agents have played actions outside of $\Acal^{k+1}$ for a fraction of at most $\frac{\beta}{2}$ during $\{T_k^0+1, \dots, T_k^1\}, \dots, \{T_k^{J-1}+1, \dots, T_k^J \}$. Recall that $T_k^0 = T_k$ and $T_k^J = T_{k+1}$.
	Using this, we can bound the relevant sum for $A_{k+1}$:
	\begin{align*}
		\frac{1}{T_{k+1}} \sum_{t = 1}^{T_{k+1}} \I[\exists i: a_{i,t} \not\in \Acal^{k+1}_i] 
		&= \frac{1}{T_{k+1}} \sum_{t = 1}^{T_{k}} \I[\exists i: a_{i,t} \not\in \Acal^{k+1}_i] + \frac{1}{T_{k+1}} \sum_{t = T_k + 1}^{T_{k+1}} \I[\exists i: a_{i,t} \not\in \Acal^{k+1}_i] \\
		&\leq \frac{T_k}{T_{k+1}} + \frac{1}{T_{k+1}} \sum_{j=1}^J (T_k^j - T_k^{j-1}) \frac{\beta}{2} 
		= \frac{T_k}{T_{k+1}} +  \frac{T_{k+1} - T_k}{T_{k+1}} \frac{\beta}{2}  \\
		&= c^{-J} + (1-c^{-J}) \frac \beta 2 
		\leq c^{- \log_c\left( \frac{2}{\beta} \right) } + \frac{\beta}{2} = \beta.
	\end{align*}
	Therefore $A_{k}^J$ implies $A_{k+1}$.
\end{proof}

\begin{lemma}
	\label{lem:B-events-conditioned-on-A-events}
	\begin{equation*}
		\begin{aligned}
			\prob(B_k^j \mid A_k^{j-1}) \geq 1 - \exp \bigg( \frac{- (T^j_k - T^{j-1}_k) \beta^2}{32} \bigg), \qquad \qquad j = 1, \dots, J.
		\end{aligned}
	\end{equation*}
\end{lemma}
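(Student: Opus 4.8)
The plan is to reduce the statement to a martingale concentration inequality, the key preliminary step being a deterministic argument that the conditioning event $A_k^{j-1}$ keeps the empirical opponent frequency above the threshold of Proposition~\ref{prop:individually-rational-responses} throughout the window $(T_k^{j-1},T_k^j]$. Write $X_t:=\I[\exists i: a_{i,t}\notin\Acal^{k+1}_i]$ and $N:=T_k^j-T_k^{j-1}$, so that $B_k^j$ is the event $\sum_{t=T_k^{j-1}+1}^{T_k^j}X_t\le N\beta/2$. I would first show that, conditional on $A_k^{j-1}$, every one-step conditional mean $\E[X_t\mid H_{t-1}]$ is at most $\beta/4$ for $t$ in the window, and then apply Azuma--Hoeffding to the martingale differences $Z_t:=X_t-\E[X_t\mid H_{t-1}]$.

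For the frequency bound: on $A_k^{j-1}=(A_0,\dots,A_k,B_k^1,\dots,B_k^{j-1})$, the event $A_k$ provides at least $(1-\beta)T_k$ rounds $s\le T_k$ in which all players are in $\Acal^k$, and, since $\Acal^{k+1}\subseteq\Acal^k$, each $B_k^l$ with $l<j$ provides at least $(1-\tfrac\beta2)(T_k^l-T_k^{l-1})$ such rounds in $(T_k^{l-1},T_k^l]$. Summing these and discarding, in the worst case, all rounds in $(T_k^{j-1},t-1]$, the number of rounds $s\le t-1$ with all players in $\Acal^k$ is at least $(1-\beta)T_k+(1-\tfrac\beta2)(T_k^{j-1}-T_k)$ for every $t-1\ge T_k^{j-1}$; since $P_{i,t-1}(\Acal^k)$ counts only the opponents of $i$, it is at least this quantity divided by $t-1\le T_k^j$, and with $T_k^j=c^jT_k$, $T_k^{j-1}=c^{j-1}T_k$ one computes
\begin{equation*}
	P_{i,t-1}(\Acal^k)\ \ge\ \frac{(1-\beta)+(1-\tfrac\beta2)(c^{j-1}-1)}{c^j}\ =\ \frac{1-\tfrac\beta2}{c}-\frac{\beta}{2c^j}\ \ge\ \frac{1-\beta}{c}\ >\ \tilde p,
\end{equation*}
the last two inequalities using $c^j\ge c$ and the defining property $c<\tfrac{1-\beta}{\tilde p}$ of the constant $c$ fixed in the proof of Proposition~\ref{prop:mean-based-convergence}. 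Crucially, this lower bound does not depend on the plays realized in $(T_k^{j-1},t-1]$, so it holds for every history consistent with $A_k^{j-1}$ and every player $i$. Proposition~\ref{prop:individually-rational-responses} (applicable since $\gamma_t<\tfrac\delta2$ for $t\ge T_0$) then yields $\prob_{i,t}(a\mid H_{t-1})\le\gamma_t$ for all $a\notin\Acal^{k+1}_i$, and a union bound over the $n$ players and $\le\numact$ actions gives $\E[X_t\mid H_{t-1}]\le n\numact\gamma_t\le\tfrac\beta4$, using $\gamma_t<\tfrac{\beta}{4n\numact}$ for $t\ge T_0$.

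For the concentration step: since $A_k^{j-1}$ is measurable with respect to the history through time $T_k^{j-1}$, under the conditional law $\prob(\,\cdot\mid A_k^{j-1})$ the sequence $(Z_t)_{t=T_k^{j-1}+1}^{T_k^j}$ is a martingale difference sequence with $|Z_t|\le1$. On $A_k^{j-1}$ one has $\sum_tX_t\le\sum_tZ_t+N\beta/4$, so $B_k^j$ can fail only when $\sum_tZ_t>N\beta/4$; Azuma--Hoeffding then gives
\begin{equation*}
	\prob\big(\textstyle\sum_tZ_t>N\beta/4\ \big|\ A_k^{j-1}\big)\ \le\ \exp\!\Big(\frac{-(N\beta/4)^2}{2N}\Big)\ =\ \exp\!\Big(\frac{-(T_k^j-T_k^{j-1})\beta^2}{32}\Big),
\end{equation*}
whence $\prob(B_k^j\mid A_k^{j-1})\ge 1-\exp\!\big(-(T_k^j-T_k^{j-1})\beta^2/32\big)$, as claimed. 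I expect the main obstacle to be the frequency bound: one must rule out that $P_{i,t-1}(\Acal^k)$ drifts below $\tilde p$ anywhere in the window, and it is precisely this that forces the geometric schedule $T_k^j=c^{j+Jk}T_0$ and makes the earlier sub-events $B_k^1,\dots,B_k^{j-1}$ indispensable — the event $A_k$ alone cannot keep the frequency above $\tilde p$ once $j\ge2$. Once the one-step bound $\E[X_t\mid H_{t-1}]\le\beta/4$ is established, the concentration step is routine.
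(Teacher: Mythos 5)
Your proof is correct and follows essentially the same route as the paper's: you establish the frequency bound $P_{i,t-1}(\Acal^k) > \tilde p$ throughout the window exactly as in the paper's Lemma~\ref{lem:condition-holds-in-gamma-intervals}, apply Proposition~\ref{prop:individually-rational-responses} with the same union bound over players and actions to get the one-step conditional mean bound $n\numact\gamma_t \le \beta/4$, and conclude with Azuma--Hoeffding with the same choice of deviation $N\beta/4$. The only cosmetic difference is that you center $X_t$ at its exact conditional mean (a true martingale difference sequence) whereas the paper subtracts the fixed upper bound $n\numact\gamma_{T_k^{j-1}}$ to obtain a supermartingale; both yield the identical bound.
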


\begin{proof}
	Recall that $B_k^j$ defined in Equation \eqref{eq:subevent} is the event that agents play only a small fraction outside of $\Acal^{k+1}$ during the time steps $\{ T^{j-1}_k+1, \dots, T^j_k \}$ given some $j \in J$. To get a bound on the probability, we first introduce random variables that count if an agent plays outside this set:
	\begin{equation*}
		X_t := \I \left[ \exists i: ~ a_{i, t} \not\in \Acal^{k+1}_i \right], \qquad t \in \{ T^{j-1}_k+1, \dots, T^j_k \}.
	\end{equation*}
	We can show that the expectation of $ X_t$ conditioned on a specific history $h \in A_{k}^{j-1}$ can be bounded by
	\begin{align*}
		\E[X_t \mid H_{t-1} = h] &= \prob(\exists i: ~ a_{i, t} \not\in \Acal^{k+1}_i \mid H_{t-1} = h) \\
		&\leq  n \cdot \prob(a_{i, t} \not\in \Acal^{k+1}_i \mid H_{t-1} = h) \tag{Union Bound Agents} \\
		&\leq  n \cdot \numact \cdot \prob(a_{i,t} = a \notin \Acal^{k+1}_i \mid H_{t-1} = h) \tag{Union Bound Actions} \\
		&\leq n \cdot \numact \cdot \gamma_t \leq n \cdot \numact \cdot  \gamma_{T_k^{j-1}}. \tag{Proposition \ref{prop:individually-rational-responses}} 
	\end{align*}
	Note that Proposition \ref{prop:individually-rational-responses} is applicable because of Lemma \ref{lem:condition-holds-in-gamma-intervals} (below).  Recall that $\numact$ was the number of elements in the largest action set, so this bound holds for any player.	
	The bound conditioned on the individual histories also implies that 
	\begin{equation*}
		\E[X_t \mid H_{t-1} \in A_k^{j - 1}] \leq n \cdot \numact \cdot  \gamma_{T_k^{j-1}}.
	\end{equation*}
	
	With this upper bound, we can use the random variables $ X_t$ to define
	\begin{equation}
		Z_t := \sum_{s = T_k^{j-1} + 1}^{t} \left( X_s - n \numact \gamma_{T_k^{j-1}} \right) ,  \qquad t \in \{ T^{j-1}_k+1, \dots, T^j_k \}.
	\end{equation}
	With $Z_{T_k^{j - 1}} := 0$ and under the condition of $A_{k}^{j-1}$, the value of $Z_t$ decreases with $t$ in expectation as we always subtract an upper bound of the expectation of each summand $X_s$, i.e., $\E[Z_t \vert Z_{t-1}, A_{k}^{j-1}] \leq Z_{t-1}$.
	This makes $Z_t$ a super-martingale and allows us to apply Azuma's inequality.\\
	First, we choose $T_0$ sufficiently large, such that $ \gamma_{T_k^{j-1}} \leq \beta \tfrac{1}{4\n\numact}$, where $n$ is the number of agents and $\numact$ is the maximum size of any action space. 
	This gives us $\vert Z_t - Z_{t-1} \vert = \vert X_t - \n \numact \gamma_{T_k^{j-1}} \vert \leq 1$ since $X_t \in \{0, 1\}$ and $0 \leq \n \numact \gamma_{T_k^{j-1}} \leq \beta \tfrac{\n\numact}{4\n\numact} \leq 1$.
	By  Azuma's inequality we get for all $\varepsilon > 0$
	\begin{align*}
		&\prob(Z_{T_k^{j}} - Z_{T_k^{j-1}} \geq \epsilon \mid A_{k}^{j-1}) \leq \exp \bigg( \frac{-\epsilon^2}{2 (T^j_k - T^{j-1}_k)} \bigg)\\
		\Leftrightarrow \quad &\prob(Z_{T_k^{j}} - Z_{T_k^{j-1}} < \epsilon \mid A_{k}^{j-1}) \geq 1 - \exp \bigg( \frac{-\epsilon^2}{ 2  (T^j_k - T^{j-1}_k)} \bigg).
	\end{align*}
	Now we choose $\varepsilon = \frac{\beta}{4}(T^j_k - T^{j-1}_k) $ such that we end up with the event described by $B_k^j$:
	\begin{align*}
		Z_{T_k^{j}} - Z_{T_k^{j-1}} &< \varepsilon \\
		\Leftrightarrow \quad \sum_{t = T_k^{j-1} + 1}^{T_k^j} \left(  \I \left[ \exists i: ~ a_{i, t} \not\in \Acal^{k+1}_i \right] - n \numact \gamma_{T_k^{j-1}} \right)  &< \frac{\beta}{4}(T^j_k - T^{j-1}_k) \\
		\Leftrightarrow \quad \frac{1}{T^j_k - T^{j-1}_k}\sum_{t = T_k^{j-1} + 1}^{T_k^j}  \I \left[ \exists i: ~ a_{i, t} \not\in \Acal^{k+1}_i \right] &<  n \numact \gamma_{T_k^{j-1}} + \frac{\beta}{4}
	\end{align*}
	Since $ \gamma_{T_k^{j-1}} \leq \beta \tfrac{1}{4\n\numact}$, the right hand side is less or equal to $\frac \beta 2$ and satisfies thereby $B_k^j$. Finally, we have 
	\begin{equation*}
		\prob(B_k^j \mid A_k^{j-1}) \geq \prob(Z_{T_k^{j}} - Z_{T_k^{j-1}} 
		< \tfrac{\beta}{4}(T^j_k - T^{j-1}_k) \mid A_k^{j-1}) 
		\geq 1 - \exp \bigg( \frac{- (T^j_k - T^{j-1}_k) \beta^2}{32} \bigg).        
	\end{equation*}
\end{proof}

To apply Proposition \ref{prop:individually-rational-responses} in the proof of Lemma \ref{lem:B-events-conditioned-on-A-events}, we need to make sure that the assumptions hold. This is ensured by the following lemma.
\begin{lemma}
	\label{lem:condition-holds-in-gamma-intervals}
	If $A_k^{j-1}$ holds, the condition $P_{t-1}^i(\Acal^k) > \Tilde{p}$ holds for all $i \in \players$, $t \in \{ T^{j-1}_k+1, \dots, T^j_k \}$, and any $j = 1, \dots, J$.
\end{lemma}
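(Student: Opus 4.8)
The plan is to bound the number of time steps $s\le t-1$ at which some agent played outside $\Acal^k$, using the three ingredients packaged into $A_k^{j-1}=(A_0,\dots,A_k,B_k^1,\dots,B_k^{j-1})$. First I would observe that it suffices to control the all-agents frequency rather than the opponents-only one: since $\I[\forall \ell\neq i:\,a_{\ell,s}\in\Acal^k_\ell]\ge\I[\forall \ell:\,a_{\ell,s}\in\Acal^k_\ell]$, we have $P_{i,t-1}(\Acal^k)\ge 1-\tfrac{1}{t-1}N_{t-1}$ with $N_{t-1}:=\sum_{s=1}^{t-1}\I[\exists \ell:\,a_{\ell,s}\notin\Acal^k_\ell]$, a quantity that no longer depends on $i$.

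For $t-1\in\{T_k^{j-1},\dots,T_k^j-1\}$ I would split $\{1,\dots,t-1\}$ into three blocks. On $\{1,\dots,T_k\}$ the event $A_k$ gives at most $\beta T_k$ bad steps. On $\{T_k+1,\dots,T_k^{j-1}\}$, which is the union of the intervals appearing in $B_k^1,\dots,B_k^{j-1}$, I would combine those events with the inclusion $\Acal^{k+1}\subseteq\Acal^k$ (a step that is bad for $\Acal^k$ is a fortiori bad for $\Acal^{k+1}$) to bound the bad steps by $\tfrac{\beta}{2}(T_k^{j-1}-T_k)$; when $j=1$ this block is empty and contributes nothing, so the argument degenerates gracefully. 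On the last block $\{T_k^{j-1}+1,\dots,t-1\}$ I would use the trivial bound $t-1-T_k^{j-1}$. Summing, $N_{t-1}\le\beta T_k+\tfrac{\beta}{2}(T_k^{j-1}-T_k)+(t-1-T_k^{j-1})$, hence
\begin{equation*}
	P_{i,t-1}(\Acal^k)\;\ge\;\frac{T_k^{j-1}\bigl(1-\tfrac{\beta}{2}\bigr)-\tfrac{\beta}{2}T_k}{t-1}\;\ge\;\frac{(1-\beta)\,T_k^{j-1}}{t-1},
\end{equation*}
where the last inequality uses $T_k\le T_k^{j-1}$.

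Finally I would invoke the geometry of the thresholds $T_k^j=c^{\,j+Jk}T_0$: since $t-1<T_k^j=c\,T_k^{j-1}$, the right-hand side above is strictly larger than $\tfrac{1-\beta}{c}$, and $c$ was fixed in the proof of Proposition \ref{prop:mean-based-convergence} exactly so that $1<c<\tfrac{1-\beta}{\tilde p}$, i.e.\ $\tfrac{1-\beta}{c}>\tilde p$. This yields $P_{i,t-1}(\Acal^k)>\tilde p$ for every agent $i$ and every $t$ in the stated range. I do not expect a genuine obstacle here: the lemma is a bookkeeping step, and the only places demanding care are tracking the geometric sequence $T_k^j$ and the degenerate case $j=1$ (no $B$-events, $T_k^{j-1}=T_k$).
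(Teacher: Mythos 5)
Your proof is correct and follows essentially the same route as the paper's: the identical three-block decomposition of $\{1,\dots,t-1\}$ at $T_k$ and $T_k^{j-1}$, bounding the first block via $A_k$, the middle block via $B_k^1,\dots,B_k^{j-1}$ together with $\Acal^{k+1}\subseteq\Acal^k$, discarding the last block, and then using $t-1< T_k^j=c\,T_k^{j-1}$ with $c<\tfrac{1-\beta}{\tilde p}$. The only cosmetic difference is that you count bad steps where the paper lower-bounds good steps, which yields the same intermediate quantity $\tfrac{1}{t-1}\bigl(T_k^{j-1}(1-\tfrac{\beta}{2})-\tfrac{\beta}{2}T_k\bigr)$.
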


\begin{proof}
	Recall that given $\tilde p$ as defined in Eq.\eqref{eq:p_tilde}, the constants $\beta$ and $c$ in the previous proofs were chosen such that  $0 < \beta < 1- \tilde p$ and $1 < c < \tfrac{1-\beta}{\tilde p}$.
	\begin{align*}
		P^i_{t-1}(\Acal^k) 
		&= \frac{1}{t - 1} \sum_{s = 1}^{t - 1} \I[\forall j \neq i: ~ a_{j,s} \in \Acal^k_{j}] \\
		&= \frac{1}{t - 1} \bigg(
		\sum_{s = 1}^{T_k} \I[\forall j \neq i: ~ a_{j,s} \in \Acal^k_{j}]
		+ \sum_{s = T_k + 1}^{T_k^{j-1}} \I[\forall j \neq i: ~ a_{j,s} \in \Acal^k_{j}]
		+ \sum_{s = T_k^{j-1} + 1}^{t - 1} \I[\forall j \neq i: ~ a_{j,s} \in \Acal^k_{j}]
		\bigg) \\
		&\geq \frac{1}{t - 1} \bigg(
		\underbrace{
			\sum_{s = 1}^{T_k} \I[\forall j: ~ a_{j,s} \in \Acal^k_{j}]
		}_{\text{bounded using } A_k}
		+ \underbrace{
			\sum_{s = T_k + 1}^{T_k^{j-1}} \I[\forall j: ~ a_{j,s} \in \Acal^k_{j}]
		}_{\text{bounded using }(B_k^1, \dots, B_k^{j-1})}
		+ \underbrace{
			\sum_{s = T_k^{j-1} + 1}^{t - 1} \I[\forall j: ~ a_{j,s} \in \Acal^k_{j}]
		}_{\geq 0}
		\bigg) \\
		&\geq \frac{1}{t - 1} \big( T_k \cdot (1 - \beta) + (T_k^{j - 1} - T_k)(1-\tfrac \beta 2) \big) \\
		&> \frac{1}{t - 1} T_k^{j-1}(1 - \beta) \geq \frac{T_k^{j-1}}{T_k^j}(1-\beta) = \tfrac{1-\beta}{c} > \tilde p.
	\end{align*}
	In the last line, we used $1- \tfrac{\beta}{2} > 1 - \beta$ and $t-1 < T_k^j$.
\end{proof}

\subsubsection{Extension: Asymmetric mean-based algorithms}

Up until this point, we assumed that all algorithms are $\gamma_t$-mean-based for the same sequence $\gamma_t$. The following corollary extends our previous results to situations where this assumption does not hold.

\begin{corollary}
	Assume each agent $i$, $i \in \players$, follows a $\gamma_{i,t}$-mean-based algorithm with respect to a monotonically decreasing sequence $\gamma_{i,t}$ (in $t$) such that $\gamma_{i,t} \to 0$ as $t \to \infty$.
	For every sufficiently small $\beta > 0$, there exists a rational constant $ c > 1$ such that
	\begin{equation*}
		\prob(A_K) \geq 1 - \sum_{k = 0}^{K - 1} \sum_{j = 1}^J \exp \left( \frac{- (T^j_k - T^{j-1}_k) \beta^2}{32} \right),
	\end{equation*}
	where $T_k^j = c^{j + Jk} T_0$ and $T_k^0 = T_k$ with a sufficiently large $T_0 \in \N$ and $J = \lceil \log_{c} \tfrac 2 \beta \rceil$.\\
	Moreover, $\prob(A_K)$ approaches 1 as $T_0 \to \infty$.
\end{corollary}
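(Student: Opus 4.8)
The plan is to reduce this corollary to Proposition \ref{prop:mean-based-convergence} rather than re-running the entire probabilistic argument, by collapsing the heterogeneous sequences into a single common one. First I would define $\gamma_t := \max_{i \in \players} \gamma_{i,t}$ and check that it has the two properties demanded by the definition of a mean-based algorithm: it is monotonically decreasing in $t$, since a pointwise maximum of monotonically decreasing sequences is monotonically decreasing; and $\gamma_t \to 0$ as $t \to \infty$, since it is a maximum over the finite set $\players$ of sequences each tending to $0$.

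Next I would argue that every agent $i$ is in fact $\gamma_t$-mean-based with respect to this common sequence. This is the only point that requires any checking, and it goes in the harmless direction: if $\alpha_t(a') - \alpha_t(a) > \gamma_t$ then a fortiori $\alpha_t(a') - \alpha_t(a) > \gamma_{i,t}$ because $\gamma_t \ge \gamma_{i,t}$, so the $\gamma_{i,t}$-mean-based property of agent $i$ forces the probability that agent $i$ plays $a$ at round $t$ to be at most $\gamma_{i,t} \le \gamma_t$. Hence the defining inequality for $\gamma_t$-mean-basedness holds for agent $i$ as well, for every $i$.

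Finally, with all agents $\gamma_t$-mean-based for one monotonically decreasing sequence $\gamma_t \to 0$, I would invoke Proposition \ref{prop:mean-based-convergence} directly. Its statement and proof depend only on the common sequence $\gamma_t$ together with the game-dependent constants $\delta$, $\Delta u$, $n$, $\numact$, so the same choices of $c$, $J$, $T_0$ and time steps $T_k^j$ apply verbatim, and the conclusion — the explicit lower bound on $\prob(A_K)$ and the fact that $\prob(A_K) \to 1$ as $T_0 \to \infty$ — carries over unchanged.

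I do not expect a genuine obstacle here; the corollary is essentially a bookkeeping observation. The only mild subtlety worth stating explicitly is that relaxing the mean-based threshold from $\gamma_{i,t}$ up to $\gamma_t$ is legitimate precisely because it both tightens the triggering hypothesis and loosens the conclusion, and that monotonicity and the vanishing limit both survive the maximum because $\players$ is finite.
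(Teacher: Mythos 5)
Your proposal is correct and follows essentially the same route as the paper: the paper also sets $\gamma_t := \max_{i \in \players} \gamma_{i,t}$, notes that a pointwise maximum of finitely many nonincreasing vanishing sequences is nonincreasing and vanishing, invokes the same ``relaxing the threshold upward preserves mean-basedness'' observation (stated there as a separate small proposition), and then applies Proposition~\ref{prop:mean-based-convergence} verbatim. No gaps.
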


\begin{proof}
	Let $\gamma_t := \max_{i \in \players} \gamma_{i,t}$ be the maximum $\gamma_{i,t}$ at every time step. The pointwise maximum of finitely many nonincreasing sequences is again nonincreasing, so $\gamma_t$ is monotonically nonincreasing with $\gamma_t \to 0$.
	Also, $\gamma_{i,t} \leq \gamma_t$, so each of the algorithms is also $\gamma_t$-mean-based according to Proposition \ref{prop:gamma_t-mean-based} below. The result follows from the application of Proposition \ref{prop:mean-based-convergence}.
\end{proof}

\begin{proposition}
	\label{prop:gamma_t-mean-based}
	Let ALG be a $\gamma_t$-mean-based algorithm. For any sequence $\gamma_t'$ such that $\gamma_t' \geq \gamma_t$ for all $t$, ALG is also $\gamma_t'$-mean-based. 
\end{proposition}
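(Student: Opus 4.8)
The plan is to unwind the definition of $\gamma_t$-mean-based directly; the statement is really just a monotonicity observation about the threshold appearing in that definition. Fix a round $t$ and an action $a \in \Acal$, and suppose the hypothesis of the $\gamma_t'$-mean-based condition is triggered at round $t$, i.e., there is some $a' \in \Acal$ with $\alpha_t(a') - \alpha_t(a) > \gamma_t'$. Since $\gamma_t' \geq \gamma_t$ by assumption, it follows that $\alpha_t(a') - \alpha_t(a) > \gamma_t' \geq \gamma_t$, so the same witness $a'$ already certifies the hypothesis of the $\gamma_t$-mean-based condition for $a$ at round $t$.

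Now apply the assumed $\gamma_t$-mean-based property of ALG: the probability that ALG picks $a$ at round $t$ is at most $\gamma_t$, and hence at most $\gamma_t'$, again using $\gamma_t \leq \gamma_t'$. Since $a \in \Acal$ and $t$ were arbitrary, ALG satisfies the $\gamma_t'$-mean-based condition, which is the claim.

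There is essentially no obstacle here; the only point that deserves a moment of care is the quantifier structure of the definition --- ``for every $a$, if there exists $a'$ with $\alpha_t(a') - \alpha_t(a) > \gamma_t$, then the probability that ALG picks $a$ is at most $\gamma_t$''. Raising the threshold from $\gamma_t$ to $\gamma_t'$ only \emph{weakens} the hypothesis (fewer action pairs trigger it) while simultaneously \emph{relaxing} the conclusion (a larger allowed probability), and both comparisons are exactly $\gamma_t' \geq \gamma_t$. Note that if one additionally wanted $\gamma_t'$ to witness the full \emph{mean-based} property (a nonincreasing sequence tending to $0$), that would be an extra requirement on the chosen $\gamma_t'$ and is not part of this proposition; in the application in the preceding corollary, $\gamma_t = \max_{i \in \players} \gamma_{i,t}$ does satisfy it, so the mean-based property is inherited there as well.
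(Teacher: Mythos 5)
Your argument is correct and is essentially identical to the paper's proof: the $\gamma_t'$-hypothesis $\alpha_t(a')-\alpha_t(a)>\gamma_t'$ implies the $\gamma_t$-hypothesis since $\gamma_t'\geq\gamma_t$, so the $\gamma_t$-mean-based property bounds the selection probability by $\gamma_t\leq\gamma_t'$. Your closing remark on the quantifier structure and on when the full mean-based property transfers is accurate but not needed for the proposition itself.
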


\begin{proof}
	We need to show that, whenever $\alpha_t(a') - \alpha_t(a) > \gamma_t'$, the algorithm selects action $a$ with probability less than or equal to $\gamma_t'$. 
	Assume $\alpha_t(a') - \alpha_t(a) > \gamma_t'$. This implies that $\alpha_t(a') - \alpha_t(a) > \gamma_t$. Since ALG is $\gamma_t$-mean-based, we know that $\prob(a_t = a) \leq \gamma_t$. Thus, $\prob(a_t = a) \leq \gamma_t'$. 
\end{proof}

\subsubsection{Extension: Stochastic utility}
\label{sec:stochastic-utility}

So far, we assumed that utilities are deterministic given the players' actions, giving rise to a game that we call "deterministic" in the following. We now drop this assumption and instead model the utilities (including \textit{counterfactual} utilities) as a stochastic process
\begin{equation*}
	\{ U(i, t, a, \omega): t \in \N, ~ i = 1, \dots, n, ~ a \in \Acal_i ~ \omega \in \Omega \}.
\end{equation*}
for our probability space $(\Omega, \mathcal{F}, \prob)$. $U(i, t, a)$ is the utility that player $i$ would have received at time $t$ if they had played action $a$. The opponents' actions are determined by $\omega$, unless they are specified explicitly by an event. This adds another layer of randomness to the payoffs. In general, there can be rich dependencies between the actions and rewards over time. However, given the concrete realizations $a_t := (a_{1, t}, \dots, a_{n,t})$, we assume that for all $i = 1, \dots, n$, $t \in \N$, and $a \in \Acal_i$,
\begin{enumerate}[label=\roman*)]
	\item the utility $U(i, t, a)$ is conditionally independent from the utilities of other times given $a_t$,
	\item its conditional expectation equals the deterministic game, $\E[U(i, t, a) \mid a_t] = u_i(a, a_{-i, t})$, and
	\item it is bounded, $U(i, t, a) \in [\underline u, \overline u]$, where $M := \overline u - \underline u$ denotes the maximal range.
\end{enumerate}
This definition of random payoffs allows for a variety of different payoff models in the oligopoly settings. For example, we can add a random, bounded noise to the payoffs or demands. This noise may also be dependent (but not correlated) among players as long as its conditional expectation is zero. Alternatively, we could define a model where the utility function determines the probability that an item is sold. Many models are possible, as long as the analysis of the game (i.e., its correlated actions and Nash equilibria) is based on the deterministic game.

In this setting, we provide an alternative version of Proposition \ref{prop:individually-rational-responses}.

\begin{proposition}
	\label{prop:individually-rational-responses-stochastic}
	Given $k \in \{0, \dots, K-1\}$ and $\gamma_t < \frac{\delta}{4}$ where $\delta$ is the competition constant of the game. If $P_{i, t-1}(\Acal^k)$ is sufficiently large for a history of actions $h = ((a_{1,s}, \dots, a_{n, s}))_{s = 1}^{t - 1}$, then 
	$$
	\prob_{i,t}(a \mid H_{t-1} = h) \leq \tilde{\gamma}_t := \gamma_t + \exp \left(- \frac{(t-1) \delta^2}{32 (b - a)^2} \right)
	$$ 
	for all $a \notin \Acal_i^{k+1}$.
\end{proposition}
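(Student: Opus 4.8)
The plan is to run the argument behind Proposition~\ref{prop:individually-rational-responses} at the level of \emph{expected} rewards and then close the gap between the realized empirical averages and their conditional means with a single concentration bound, which will produce exactly the extra $\exp(\cdot)$ term in $\tilde\gamma_t$. Fix $k\in\{0,\dots,K-1\}$, an action $a\notin\Acal_i^{k+1}$, and a history $H_{t-1}$ whose action part satisfies $P_{i,t-1}(\Acal^k)\ge\tilde p$, with $\tilde p$ as in Equation~\eqref{eq:p_tilde}. Exactly as in the deterministic proof, pick $a'\in\Acal_i^{k+1}$ to be the action that, against the conditional empirical distribution of the opponents' plays that fell inside $\Acal_{-i}^k$, beats $a$ by at least the competition constant $\delta$; this $a'$ is a function of the action path only, so it is fixed once $H_{t-1}$ is.

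First I would write the realized counterfactual advantage as $\alpha_{i,t}(a')-\alpha_{i,t}(a)=\frac{1}{t-1}\sum_{s=1}^{t-1}W_s$ with $W_s:=U_{i,s}(a',a_{-i,s})-U_{i,s}(a,a_{-i,s})\in[-M,M]$, where $M:=\overline u-\underline u$ is the common range of the reward variables (the $b-a$ appearing in the statement). Since $\E[W_s\mid a_{-i,s}]=u_i(a',a_{-i,s})-u_i(a,a_{-i,s})$, the conditional mean $\bar\mu$ of $\frac{1}{t-1}\sum_s W_s$ given the action path is precisely the quantity bounded in the deterministic proof: splitting the sum according to whether $a_{-i,s}\in\Acal_{-i}^k$, invoking Equation~\eqref{eq:irrational-actions} on the ``good'' indices together with $P_{i,t-1}(\Acal^k)\ge\tilde p$, one obtains $\bar\mu>\delta/2$ verbatim. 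Moreover $\{W_s-\E[W_s\mid a_{-i,s}]\}_s$ is a bounded martingale-difference sequence, because the reward noise at time $s$ is independent of the past while $a_{-i,s}$ is chosen adaptively.

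Then, applying Hoeffding's inequality (more precisely its martingale version, Azuma--Hoeffding, with increments controlled by the length $2M$ of their conditional support interval), with deviation $\delta/4$, I would obtain
\[
\prob\!\Big(\alpha_{i,t}(a')-\alpha_{i,t}(a)\le\gamma_t\Big)\ \le\ \prob\!\Big(\tfrac{1}{t-1}\sum_s W_s\le\bar\mu-\tfrac{\delta}{4}\Big)\ \le\ \exp\!\Big(-\frac{2(t-1)(\delta/4)^2}{(2M)^2}\Big)=\exp\!\Big(-\frac{(t-1)\delta^2}{32\,M^2}\Big),
\]
where the first inequality uses the hypothesis $\gamma_t<\delta/4<\bar\mu-\delta/4$. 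On the complementary event one has $\alpha_{i,t}(a')-\alpha_{i,t}(a)>\gamma_t$, so the $\gamma_t$-mean-based property forces $\prob_{i,t}(a\mid\cdot)\le\gamma_t$; on the bad event bound this probability trivially by $1$. Averaging over the reward noise yields $\prob_{i,t}(a\mid H_{t-1})\le\gamma_t+\exp\!\big(-(t-1)\delta^2/(32M^2)\big)=\tilde\gamma_t$, and the bound is uniform over $i\in\players$ since only the per-player range enters.

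\textbf{Main obstacle.} The only genuine subtlety relative to the deterministic Proposition~\ref{prop:individually-rational-responses} is the bookkeeping of the conditioning: $\alpha_{i,t}(\cdot)$ is determined by the \emph{realized} rewards in $H_{t-1}$, whereas the bound $\bar\mu>\delta/2$ is only available for the conditional mean given the action path, so the statement has to be read as ``condition on the action history (which controls $P_{i,t-1}(\Acal^k)$), keep the reward realizations random, integrate them out''. One must be careful that (i) $a'$ depends on actions only, so it may be fixed before invoking concentration; (ii) the centered $W_s$ form a martingale-difference sequence whose increments sit in an interval of length $2M$ — this is exactly what makes the exponent's constant come out as $32$ rather than a larger number; and (iii) $M$ is the per-player reward range, so the conclusion holds uniformly in $i$. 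Everything else is the verbatim computation of Proposition~\ref{prop:individually-rational-responses}.
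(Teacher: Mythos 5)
Your proposal is correct and follows essentially the same route as the paper's proof: the paper likewise introduces the event that the realized counterfactual advantage undercuts its expectation by at most $\delta/4$, bounds its complement by Hoeffding's inequality over the bounded reward differences (yielding the same $\exp(-(t-1)\delta^2/(32M^2))$ term), and on the good event reruns the deterministic computation with the $\delta/4$ slack so that the advantage still exceeds $\gamma_t$ and the mean-based property applies. Your explicit remarks on the conditioning bookkeeping and the martingale-difference structure are, if anything, slightly more careful than the paper's treatment, but the decomposition and constants are identical.
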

\noindent Remember that, by $\prob_{i,t}(a \mid H_{t-1} = h)$, we denote the probability that player $i$ with $\gamma_t$-mean-based algorithm chooses action $a$ at time $t$, given that the previous actions were as described by $h$. This alternative version of the proposition can still be used with the proofs in section \ref{section:proof-step-high-prob-induction} by replacing $\gamma_t$ with $\tilde{\gamma}_t$ (a monotonically decreasing sequence) and choosing $T_0$ sufficiently large such that $\gamma_{T_0} < \frac{\delta}{4}$.

\begin{proof}
	We again say that the empirical frequency $P_{i, t-1}(\Acal^k) $ is sufficiently large if 
	\begin{equation*}
		P_{i, t-1}(\Acal^k) \geq \tilde p := \frac{1}{2} \ll \frac{\Delta u}{\delta + \Delta u} + 1 \rr \in (0, 1)
	\end{equation*}
	where $\delta$ is the competition constant of the game and $\Delta u$ is the maximum absolute utility difference between any two actions of player $i$ for any opponent profile $a_{-i}$.
	
	Recall that $a \notin \Acal^{k+1}_i$, and let $a' \in \Acal^{k+1}_i$ be an action which increases player $i$'s utility by at least $\delta$ \textit{in the deterministic game} compared to playing action $a$ if opponents play a mixed strategy with support on $\Acal^k_{-i}$. We want to bound $\prob_{i,t}(a \mid H_{t-1} = h)$.
	
	Consider the (random) counterfactual advantage of action $a'$ over $a$,
	\begin{equation*}
		\alpha_{i, t}(a') - \alpha_{i, t}(a) := \frac{1}{t-1} \sum_{s = 1}^{t-1} U(i, s, a') - U(i, s, a),
	\end{equation*}
	and let 
	\begin{equation*}
		D: \quad \alpha_{i, t}(a') - \alpha_{i, t}(a) \geq \frac{1}{t-1} \sum_{s = 1}^{t-1} [u_i(a', a_{-i,s}) - u_i(a, a_{-i,s})] - \frac{\delta}{4}
	\end{equation*}
	denote the event that it does not undercut the \textit{deterministic} advantage by more than $\delta/4$.
	In the following, we will derive our desired bound by splitting the probability $\prob_{i,t}(a \mid H_{t-1} = h)$ into
	\begin{equation*}
		\begin{aligned}
			\prob_{i,t}(a \mid H_{t-1} = h) 
			&= \prob_{i,t}(a, \neg D \mid H_{t-1} = h) + \prob_{i,t}(a, D \mid H_{t-1} = h) \\
			&\leq \prob_{i, t}(\neg D \mid H_{t-1} = h) + \prob_{i,t}(a \mid D, H_{t-1} = h) 
		\end{aligned}
	\end{equation*}
	and bounding the two probability terms.
	
	Let us start with the first term.
	Since we condition on $h$, the random variables $U(i, s, a') - U(i, s, a)$, $s = 1, \dots, t-1$, have expectation $u_i(a', a_{-i,s}) - u_i(a, a_{-i,s})$, they are independent, and they are bounded by $M$ from above and $-M$ from below.
	By applying Hoeffding's inequality, we find that
	\begin{align*}
		\prob_{i, t}(\neg D \mid H_{t-1} = h)
		&= \prob_{i, t} \left( \sum_{s = 1}^{t-1} [U(i, s, a') - U(i, s, a)] - \sum_{s = 1}^{t-1} [u_i(a', a_{-i,1}) - u_i(a, a_{-i,s})] \leq  - (t-1) \frac{\delta}{4} \right) \\
		&\leq \exp \left(- \frac{2 (t-1)^2 \frac{\delta^2}{16}}{(t-1) (2M)^2} \right) 
		= \exp \left(- \frac{(t-1) \delta^2}{32 M^2} \right)
	\end{align*}
	
	Now, we bound the second term. Again, we consider the counterfactual utility advantage of $a'$ over $a$. We show that it exceeds $\gamma_t$, given that $D$ holds.
	\begin{equation*}
		\begin{aligned}
			\alpha_{i, t}(a') - \alpha_{i, t}(a)
			&= \frac{1}{t-1} \sum_{s = 1}^{t-1} U(i, s, a') - U(i, s, a) \\
			&\geq \frac{1}{t-1} \sum_{s = 1}^{t-1} u_i(a', a_{-i,s}) - u_i(a, a_{-i,s}) - \frac{\delta}{4} \\
			&= \frac{1}{t-1} \sum_{\substack{1 \leq s \leq t -1: \\ a_{-i, s} \not \in \Acal_{-i}^k}} \underbrace{u_i(a', a_{-i,s}) - u_i(a, a_{-i,s})}_{\geq - \Delta u} 
			+ \frac{1}{t-1} \sum_{\substack{1 \leq s \leq t - 1: \\ a_{-i,s} \in \Acal_{-i}^k}} u_i(a', a_{-i,s}) - u_i(a, a_{-i,s}) - \frac{\delta}{4} \\
			&\geq (1 - P_{i, t-1}(\Acal^k)) \cdot (- \Delta u)
			+  P_{i, t-1}(\Acal^k) \cdot \underbrace{\E_{a_{-i} \sim x_{-i}} \left[u_i(a', a_{-i}) - u_i(a, a_{-i}) \right]}_{> \delta}  - \frac{\delta}{4} \\
			&> (1 - P_{i, t-1}(\Acal^k)) (- \Delta u) + \delta P_{i, t - 1}(\Acal^k)  - \frac{\delta}{4}
			= P_{i, t-1}(\Acal^k) (\delta + \Delta u) - \Delta u  - \frac{\delta}{4} \\
			&\geq \tilde{p} (\delta + \Delta u) - \Delta u  - \frac{\delta}{4}
			= \frac{1}{2} (2\Delta u + \delta) - \Delta u  - \frac{\delta}{4}
			= \frac{\delta}{4} 
			> \gamma_t
		\end{aligned}
	\end{equation*}
	The second line makes use of the event $D$, which allows us to relate the random variables with their expected outcomes.
	The fourth line interprets the sum over all times for which the opponents play $a_{-i} \in \Acal^k_{-i}$ as an expectation over a joint action distribution $x_{-i} \in \Delta(\Acal^{k}_{-i})$ with support on $\Acal^k_{-i}$. We have chosen $a' \in \Acal^{k+1}_i$ such that it performs better than $a$ by a margin of at least $\delta$ in the deterministic game. For $a \not \in \Acal^{k+1}_i$, this action exists for any such mixed strategy, according to Equation \eqref{eq:irrational-actions}.
	The last line implies that the probability $\prob_{i, t}(a)$ that the mean-based algorithm chooses actions $a \not\in \Acal^{k+1}_i$ is small:
	\begin{equation*}
		\prob_{i,t}(a \mid H_{t-1} = h, D) \leq \gamma_t
	\end{equation*}
	
	Combining the two bounds shows the result. 
\end{proof}

\begin{remark}
	We can replace the assumption of a bounded support for $U_{i,t}$ by the assumption that it has a finite variance. Using Chebyshev's inequality instead of Hoeffding's inequality would yield a similar result that would still show the main claim. 
\end{remark}

\subsubsection{Extension: Staggered Entry and Biases}
\label{sec:staggered-entry}

In the previous sections, we have assumed that all players start at the same time $t = 0$ and have made no experiences about the payoffs in advance. Now, we will show that we can drop this assumption and still show convergence. The intuition behind this is that all beliefs that have been formed up to some point will become negligible if the interaction continues sufficiently long. 

Consider the following notation. We assume that all agents will have entered the competition at time $t = 1$, meaning that the game will remain stable thereafter (no players leave or join after $t = 0$). However, each agent $i$ might have joined the competition at an earlier point $\tau_i \leq 1$. We will show that the agents (employing mean-based strategies) still converge to the correlated rationalizable actions of the game that is defined by the competition present at $t = 1$.

In this scenario, every agent has $t - \tau_i$ observations at time $t$. For $t = \tau_i, \dots, 0$, the agent observes rewards $\tilde{u}_{i,t}(\cdot, \tilde{a}_{-i,s})$ from a utility function that is potentially different from $u_{i,t}$ and that also depends on the potential actions of other competitors $\tilde{a}_{i, t}$ (these can be different competitors than at $t \geq 0$). We will assume that $\tilde{u}_{i,t}$ is bounded by the interval $[-c, c], c \in \R$. The value estimate of player $i$ for an action $a \in \Acal_i$ at time $t > \tau_i$ becomes
\begin{equation*}
	\alpha_{i,t}(a) = \frac{1}{t - \tau_i} \left( \sum_{s = \tau_i}^{0} \tilde{u}_{i,s}(a, \tilde{a}_{-i, s}) + \sum_{s = 1}^{t-1} u_{i,s}(a, a_{-i, s}) \right) = \frac{1}{t - \tau_i} \left( \tilde{b}_i(a) + \sum_{s = 1}^{t-1} u_{i,s}(a, a_{-i, s}) \right).
\end{equation*}
We have used $\tilde{b}_i(a) = \sum_{s = \tau_i}^{0} \tilde{u}_{i,s}(a, \tilde{a}_{-i, s})$ to denote player $i$'s biases of the game that they developed during rounds $\tau_i, \dots, 0$. Note that these biases can grow with $\tau_i$. These biases could, alternatively, be viewed as prior knowledge that an algorithm designer injects into an algorithm based on their beliefs about the competition.

We again provide an alternative version of Proposition \ref{prop:individually-rational-responses} that shows that an individual player will select a correlated rationalizable response as long as its opponents have played a certain set of actions sufficiently often. This is enough to show convergence to the set of correlated rationalizable actions as in Proposition \ref{prop:mean-based-convergence}.

Recall that we denote the empirical frequency of opponents ($-i$) selecting actions in $\Acal^k_{-i}$ up to time $t$ by
\begin{equation*}
	P_{i,t}(\Acal^k) := \frac{1}{t} \sum_{s = 1}^{t} \I \left[\forall j \neq i: ~ a_{j, s} \in \Acal^k_j \right].
\end{equation*}

\begin{proposition}
	\label{prop:staggered-entry}
	Given $k \in \{0, \dots, K-1 \}$ and $t \geq 1$ such that $\gamma_{(t - \tau_i)} < \frac{-2 c \abs{\tau_i - 1}}{t - \tau_i} + \frac{t-1}{t - \tau_i} \frac{\delta}{2}$, where $\delta$ is the competition constant of the game.
	If $P_{i, t-1}(\Acal^k)$ is sufficiently large for a history of actions $h$, then $\prob_{i, t}(a \mid H_{t-1} = h) \leq \gamma_{(t - \tau_i)}$ for all $a \not\in \Acal^{k+1}_i$.
\end{proposition}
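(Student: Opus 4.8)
The plan is to run the argument of Proposition \ref{prop:individually-rational-responses} essentially verbatim, but carrying along the two modifications that staggered entry introduces: the additive bias term $\tilde b_i(a)$ and the fact that the agent's empirical average is normalized by $t-\tau_i$ (its number of observations) rather than by $t-1$. The only new ingredient is a crude bound on the bias, after which the calculation collapses to the one already in the paper.

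First I would bound the bias. Since $\tilde b_i(a)=\sum_{s=\tau_i}^{0}\tilde u_{i,s}(a,\tilde a_{-i,s})$ is a sum of $1-\tau_i=|\tau_i-1|$ terms, each lying in $[-c,c]$, we get $|\tilde b_i(a)|\le c|\tau_i-1|$ for every action $a$, and hence $\tilde b_i(a')-\tilde b_i(a)\ge-2c|\tau_i-1|$ for any pair. Next, fix $a\notin\Acal^{k+1}_i$; as in the original proof, ``$P_{i,t-1}(\Acal^k)$ sufficiently large'' means $P_{i,t-1}(\Acal^k)\ge\tilde p$ with $\tilde p$ as in \eqref{eq:p_tilde}. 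Let $x_{-i}\in\Delta(\Acal^k_{-i})$ be the empirical distribution of the opponents over those rounds $s\in\{1,\dots,t-1\}$ on which $a_{-i,s}\in\Acal^k_{-i}$, and take $a'$ to be a best response to $x_{-i}$, so that $a'\in R_i(\Acal^k)=\Acal^{k+1}_i$ and, by the competition-constant property \eqref{eq:irrational-actions}, $u_i(a',x_{-i})-u_i(a,x_{-i})\ge\delta$. Splitting the historical sum over the ``in-$\Acal^k_{-i}$'' and ``out-of-$\Acal^k_{-i}$'' rounds exactly as in Proposition \ref{prop:individually-rational-responses} and bounding each out-of-set term below by $-\Delta u$ gives
\[
\sum_{s=1}^{t-1}\big(u_i(a',a_{-i,s})-u_i(a,a_{-i,s})\big)\ \ge\ (t-1)\big[P_{i,t-1}(\Acal^k)(\delta+\Delta u)-\Delta u\big]\ \ge\ (t-1)\tfrac{\delta}{2},
\]
the last step using $P_{i,t-1}(\Acal^k)\ge\tilde p$ precisely as before.

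Then I would assemble the estimate. With $\alpha_{i,t}(a)=\frac{1}{t-\tau_i}\big(\tilde b_i(a)+\sum_{s=1}^{t-1}u_{i,s}(a,a_{-i,s})\big)$, combining the two bounds yields
\[
\alpha_{i,t}(a')-\alpha_{i,t}(a)\ \ge\ \frac{1}{t-\tau_i}\Big(-2c|\tau_i-1|+(t-1)\tfrac{\delta}{2}\Big)\ =\ \frac{-2c|\tau_i-1|}{t-\tau_i}+\frac{t-1}{t-\tau_i}\cdot\frac{\delta}{2}\ >\ \gamma_{(t-\tau_i)},
\]
by the hypothesis imposed on $\gamma_{(t-\tau_i)}$. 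Since the agent has $t-\tau_i$ past observations at time $t$ and runs a $\gamma_{(t-\tau_i)}$-mean-based algorithm, the existence of such an $a'$ forces $\prob_{i,t}(a\mid H_{t-1})\le\gamma_{(t-\tau_i)}$, which is the claim. For the downstream use in Proposition \ref{prop:mean-based-convergence} I would also note that, for fixed $\tau_i$, the right-hand threshold $\frac{-2c|\tau_i-1|}{t-\tau_i}+\frac{t-1}{t-\tau_i}\cdot\frac{\delta}{2}\to\frac{\delta}{2}>0$ as $t\to\infty$, so any decreasing sequence $\gamma_t\to 0$ satisfies the hypothesis once $T_0$ is taken large enough (depending on $\max_i|\tau_i-1|$); with that, the high-probability induction goes through unchanged.

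I do not expect a real obstacle here: the argument is a routine adaptation. The one point requiring care is that the bound cannot be uniform in $\tau_i$ — the bias grows like $|\tau_i-1|$ — and the proposition's hypothesis on $\gamma_{(t-\tau_i)}$ is calibrated exactly so that the $\Theta(t)$ ``competitive'' term $(t-1)\delta/2$ eventually dominates the $O(|\tau_i-1|)$ bias term; I would also double-check the usual off-by-one in the mean-based index (whether the agent at time $t$ is $\gamma_{t-\tau_i}$- or $\gamma_{t-\tau_i+1}$-mean-based), since it is cosmetic but should match the statement.
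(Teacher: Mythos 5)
Your proposal is correct and follows essentially the same route as the paper's proof: bound the accumulated bias by $2c\abs{\tau_i-1}$, rerun the decomposition of the counterfactual advantage from Proposition \ref{prop:individually-rational-responses} with the normalization $t-\tau_i$, and observe that the hypothesis on $\gamma_{(t-\tau_i)}$ is calibrated exactly so the resulting lower bound exceeds it. Your explicit choice of $a'$ as a best response to the empirical distribution on $\Acal^k_{-i}$ (hence in $R_i(\Acal^k)=\Acal^{k+1}_i$) is a slightly cleaner instantiation of the same step, but the argument is the same.
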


\noindent Remember that, by $\prob_{i,t}(a \mid H_{t-1} = h)$, we denote the probability that player $i$ with $\gamma_t$-mean-based algorithm chooses action $a$ at time $t$. This alternative version can still be used with the proofs in section \ref{section:proof-step-high-prob-induction}: We can replace $\gamma_t$ by $\tilde{\gamma}_t := \max_{i \in \players} \gamma_{(t - \tau_i)}$ in the proof of Proposition \ref{prop:mean-based-convergence}, and we choose $T_0$ sufficiently large such that $\gamma_{(T_0 - \tau_i)} < \frac{-2 c \abs{\tau_i - 1}}{T_0 - \tau_i} + \frac{T_0-1}{T_0 - \tau_i}$ for all active players $i$.

\begin{proof}[Proof of Proposition \ref{prop:staggered-entry}.]
	We say that the empirical frequency $P_{i, t-1}(\Acal^k) $ is sufficiently large if 
	\begin{equation*}
		P_{i, t-1}(\Acal^k) \geq \tilde p := \frac{1}{2} \ll \frac{\Delta u}{\delta + \Delta u} + 1 \rr \in (0, 1)
	\end{equation*}
	where $\delta$ is the competition constant of the game and $\Delta u$ is the maximum absolute utility difference between any two actions of player $i$ for any opponent profile $a_{-i}$.
	Let $a' \in \Acal^{k+1}_i$ be an action which increases player $i$'s utility by at least $\delta$ compared to playing action $a \not\in \Acal^{k+1}_i$ if opponents play a mixed strategy with support on $\Acal^k_{-i}$. 
	Consider the counterfactual advantage of action $a'$ over $a$:
	\begin{equation*}
		\tiny
		\begin{aligned}
			\alpha_{i, t}(a') - \alpha_{i, t}(a)
			&:= \frac{1}{t- \tau_i } \left( \tilde{b}_i(a') - \tilde{b}_i(a) + \sum_{s = 1}^{t-1} u_i(a', a_{-i,s}) - u_i(a, a_{-i,s}) \right) \\
			&= \frac{\tilde{b}_i(a') - \tilde{b}_i(a)}{t- \tau_i} + \frac{t-1}{t - \tau_i} \left( \frac{1}{t-1} \sum_{\substack{1 \leq s \leq t -1: \\ a_{-i, s} \not \in \Acal_{-i}^k}} \underbrace{u_i(a', a_{-i,s}) - u_i(a, a_{-i,s})}_{\geq - \Delta u} 
			+ \frac{1}{t-1} \sum_{\substack{1 \leq s \leq t - 1: \\ a_{-i,s} \in \Acal_{-i}^k}} u_i(a', a_{-i,s}) - u_i(a, a_{-i,s}) \right) \\
			&\geq \frac{-2 c \abs{\tau_i - 1}}{t - \tau_i} + \frac{t-1}{t - \tau_i} \left( (1 - P_{i, t-1}(\Acal^k_{-i})) \cdot (- \Delta u)
			+  P_{i, t-1}(\Acal^k) \cdot \underbrace{\E_{a_{-i} \sim x_{-i}} \left[u_i(a', a_{-i}) - u_i(a, a_{-i}) \right]}_{> \delta} \right) \\
			&> \frac{-2 c \abs{\tau_i - 1}}{t - \tau_i} + \frac{t-1}{t - \tau_i} \left( (1 - P_{i, t-1}(\Acal^k)) (- \Delta u) + \delta P_{i, t - 1}(\Acal^k) \right)
			= \frac{-2 c \abs{\tau_i - 1}}{t - \tau_i} + \frac{t-1}{t - \tau_i} \left( P_{i, t-1}(\Acal^k) (\delta + \Delta u) - \Delta u \right) \\
			&\geq \frac{-2 c \abs{\tau_i - 1}}{t - \tau_i} + \frac{t-1}{t - \tau_i} \left( \tilde{p} (\delta + \Delta u) - \Delta u \right)
			= \frac{-2 c \abs{\tau_i - 1}}{t - \tau_i} + \frac{t-1}{t - \tau_i} \left( \frac{1}{2} (2\Delta u + \delta) - \Delta u \right) \\
			&= \frac{-2 c \abs{\tau_i - 1}}{t - \tau_i} + \frac{t-1}{t - \tau_i} \frac{\delta}{2} 
			> \gamma_{(t - \tau_i)}
		\end{aligned}
	\end{equation*}
	In contrast to the proof of Proposition \ref{prop:individually-rational-responses}, we need to take the biases into account. According to our assumption, we can bind the values for $\tilde{b}_i(a)$ and $\tilde{b}_i(a')$ by $-c \cdot (\tau_i - 1)$ and $c \cdot (\tau_i - 1)$, respectively. We also gain a factor $\frac{t - 1}{t - \tau_i}$.
	The series of values $\frac{-2 c \abs{\tau_i - 1}}{t - \tau_i} + \frac{t-1}{t - \tau_i} \frac{\delta}{2} , t \geq 0$ is strictly monotonously increasing and converges to a value of $\frac{\delta}{2}$. As $\gamma_t$ is monotonously decreasing and converges to zero, there exists $T \geq 0$ such that $\gamma_t$ falls below this bound for all $t \geq T$. \\
	The inequality implies that $\prob_{i, t}(a \mid H_{t-1} = h) \leq \gamma_{(t - \tau_i)}$, which is less than $\tilde{\gamma}_t$.
\end{proof}

\subsubsection{Formal Theorem}
\label{sec:formal-theorem}

Here, we state the formal version of \ref{thm:mean-based-convergence-informal} in the main part, which combines all three extensions introduced above. The  extensions can be combined into a single statement: 
\begin{itemize}
	\item We can use the piecewise maximum of $\gamma_{i,t}$ to unify the mean-based bound of all players. This does not affect stochastic utility or staggered entry.
	\item The bias terms of staggered entry can be included in the derivation of $\alpha_{i,t}(a') - \alpha_{i,t}(a)$ in the proof of Proposition \ref{prop:individually-rational-responses-stochastic}.
	The result, $\tilde{\gamma}_t$, is a scaled version of $\gamma_t$ with an additional term. However, it still monotonously decreases to zero, which is the only requirement for convergence used in the proof of Proposition \ref{prop:mean-based-convergence}.
\end{itemize}

All in all, we obtain the following main result:

\begin{theorem}
	\label{thm:mean-based-convergence}
	Let $n$ players, $i \in \players$, compete in a repeated game with stochastic (counterfactual) payoffs $U(i, t, a)$ as described in section \ref{sec:stochastic-utility}.
	Assume each agent follows a $\gamma_{i,t}$-mean-based algorithm with respect to a monotonically decreasing sequence $\gamma_{i,t}$ such that $\gamma_{i,t} \to 0$ as $t \to \infty$.
	Then, there exists a rational constant $ c > 1$ such that
	\begin{equation*}
		\prob \left( \frac{1}{T_K} \sum_{t = 1}^{T_K} \I \left[ \forall i: ~ a_{i, t} \in \bar{\Acal}_i \right] \geq 1 - \beta \right) \geq 1 - \sum_{k = 0}^{K - 1} \sum_{j = 1}^J \exp \left( \frac{- (T^j_k - T^{j-1}_k) \beta^2}{32} \right)
	\end{equation*}
	for every sufficiently small $\beta > 0$, where $T_k^j = c^{j + Jk} T_0$ and $T_k^0 = T_k$ with a sufficiently large $T_0 \in \N$ and $J = \lceil \log_{c} \tfrac 2 \beta \rceil$.\\
	Moreover, the probability approaches 1 as $T_0 \to \infty$, indicating time-average convergence to $\bar{\Acal} = \bar{\Acal}_1 \times \dots \times \bar{\Acal}_n$, the set of correlated rationalizable actions of the game with rewards $u_i$. 
	The result also holds if some or all of the agents have entered the game at different times $\tau_i \leq 1$.
\end{theorem}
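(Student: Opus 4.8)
The plan is to assemble the formal theorem from the three extensions already established, since each was engineered so that the core induction of Proposition~\ref{prop:mean-based-convergence} transfers unchanged. First I would reduce the heterogeneous mean-based hypothesis to a homogeneous one by setting $\gamma_t := \max_{i \in \players} \gamma_{i,t}$: a pointwise maximum of finitely many monotonically decreasing null sequences is again monotonically decreasing with $\gamma_t \to 0$, and $\gamma_{i,t} \le \gamma_t$ together with Proposition~\ref{prop:gamma_t-mean-based} shows every agent's algorithm is $\gamma_t$-mean-based. So from here on all players may be treated as $\gamma_t$-mean-based with a common sequence.

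Next I would absorb the stochastic payoffs and the staggered-entry biases into the single-round lemma. Concretely I would re-run the argument of Proposition~\ref{prop:individually-rational-responses-stochastic}, but carry in the counterfactual advantage $\alpha_{i,t}(a') - \alpha_{i,t}(a)$ the extra bias term $\frac{\tilde b_i(a') - \tilde b_i(a)}{t - \tau_i}$ of Proposition~\ref{prop:staggered-entry}, bounded below by $\frac{-2c\,\abs{\tau_i - 1}}{t - \tau_i}$, while the remaining genuine-history average is split as usual into the $\Acal_{-i}^k$ and non-$\Acal_{-i}^k$ parts and the Hoeffding step controls the deviation of the stochastic sample average from its mean $\frac{1}{t-1}\sum_s u_i(a',a_{-i,s}) - u_i(a,a_{-i,s})$ on an event of probability at least $1 - \exp\!\big(-(t-1)\delta^2/(32 M^2)\big)$, where $M := \overline u - \underline u$. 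The net effect, exactly as anticipated in the discussion preceding the theorem, is that the probability player $i$ plays some $a \notin \Acal_i^{k+1}$ is bounded by a new sequence $\tilde\gamma_t$ that is a rescaled copy of $\gamma_t$ plus the Hoeffding exponential; it is still monotonically decreasing with $\tilde\gamma_t \to 0$, where I would invoke the monotonicity of $t \mapsto \frac{-2c\abs{\tau_i-1}}{t-\tau_i} + \frac{t-1}{t-\tau_i}\cdot\frac{\delta}{2}$ toward $\delta/2$ from Proposition~\ref{prop:staggered-entry} to guarantee the threshold $\tilde\gamma_t < \delta/4$ is eventually met.

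With $\tilde\gamma_t$ in hand the rest is bookkeeping: feed $\tilde\gamma_t$ into the proof of Proposition~\ref{prop:mean-based-convergence} in place of $\gamma_t$, and choose $T_0 \in \N$ large enough that all the finitely many ``sufficiently large $T_0$'' requirements hold at once (the competition-constant threshold, $\tilde\gamma_{T_0} < \beta/(4n\numact)$, integrality of the $T_k^j$, and the staggered-entry threshold for every active $i$). Since the chain $\Acal^0 \supseteq \dots \supseteq \Acal^K = \bar\Acal$ is defined through the expected-utility game it is untouched by the noise, so Lemmas~\ref{lem:implication}--\ref{lem:condition-holds-in-gamma-intervals} apply verbatim and yield $\prob(A_K) \ge 1 - \sum_{k=1}^K\sum_{j=1}^J \exp\!\big(-(T_k^j - T_k^{j-1})\beta^2/32\big)$. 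Since $A_K$ is precisely the event $\frac{1}{T_K}\sum_{t=1}^{T_K}\I[\forall i: a_{i,t}\in\bar\Acal_i] \ge 1-\beta$ and $T_k^j - T_k^{j-1} = (c-1)c^{j-1+Jk}T_0 \to \infty$, this gives the stated bound and its limit as $T_0 \to \infty$ (a routine Borel--Cantelli choice of $\beta_m \to 0$ with $T_0^{(m)} \to \infty$ fast enough further upgrades this to almost-sure time-average convergence). The main obstacle is not any individual step but the careful verification that the composite per-round bound $\tilde\gamma_t$ stays monotone and null \emph{and} that the finitely many lower bounds on $T_0$ contributed by the three extensions are mutually satisfiable; once that is in place, the induction of Proposition~\ref{prop:mean-based-convergence} goes through mechanically.
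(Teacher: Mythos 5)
Your proposal is correct and follows essentially the same route as the paper: unify the heterogeneous $\gamma_{i,t}$ via their pointwise maximum (the paper's asymmetric-algorithms corollary with Proposition~\ref{prop:gamma_t-mean-based}), fold the staggered-entry bias and the Hoeffding control of the stochastic payoffs into a single modified per-round bound $\tilde\gamma_t$ (Propositions~\ref{prop:individually-rational-responses-stochastic} and~\ref{prop:staggered-entry}), and then run the unchanged induction of Proposition~\ref{prop:mean-based-convergence} with $\tilde\gamma_t$ in place of $\gamma_t$ and a $T_0$ large enough to meet all finitely many thresholds simultaneously. The only addition beyond the paper's argument is your explicit Borel--Cantelli remark for the almost-sure upgrade, which is consistent with the paper's informal statement.
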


\subsubsection{Last-iterate convergence}
\label{sec:last-iterate-convergence}

Finally, we also provide the proof for Corollary \ref{cor:last-iterate-convergence}. 

\begin{proof}[Proof of Corollary \ref{cor:last-iterate-convergence}.]
	From Lemma \ref{lem:condition-holds-in-gamma-intervals}, we get that $P_{t-1}(\Acal^{K-1}) \geq \tilde{p}$ for sufficiently large $t$. According to Proposition \ref{prop:individually-rational-responses} (Prop. \ref{prop:individually-rational-responses-stochastic}, \ref{prop:staggered-entry}), a player $i$ plays outside of $\bar{\Acal} = \Acal^K$ with probability at most $\gamma_t$ ($\tilde{\gamma}_t$). As the same argument holds for all players, the statement follows from $\gamma_t \to 0$ ($\tilde{\gamma}_t \to 0$) for $t \to \infty$. 
\end{proof}

\FloatBarrier

\section{Additional Experiments}
\label{sec:additional-experiments}

In this section, we provide a brief description of our algorithms and provide additional experiments that shed further light on the algorithmic interaction in oligopoly games.

\subsection{Online Optimization Algorithms}\label{app:algorithms}

\subsubsection{\texorpdfstring{$\epsilon$}{epsilon}-Greedy Bandit}
\label{sec:greedy-bandit}

One of the simplest algorithms is the $\epsilon$-greedy exploration strategy. 
With its internal belief, it keeps track of the empirical mean reward $\bar{u}_t(a)$ of each explored action $a$. With probability $1 - \epsilon$, it selects the most promising action in the current time step, assuming that each action has been chosen at least once. With probability $\epsilon$, it chooses an arbitrary action with uniform probability.
\begin{equation}
	\pi_t(a_t) = \begin{cases}
		1 - \epsilon \cdot \frac{K - 1}{K} & a_t = \arg\max_{a \in \mathcal{A}} \bar{u}_{t-1}(a) \\
		\frac{\epsilon}{K} & \text{else}
	\end{cases}
\end{equation}

\subsubsection{UCB Algorithms}
\label{sec:ucb}
Like $\epsilon$-greedy, the UCB algorithms estimate and update the empirical mean reward of all their actions. 
They balance exploration and exploitation by choosing arms according to a heuristic, the upper confidence bound (UCB) of estimated rewards. 
This value can be computed by adding an exploration bonus $\rho_t$ to the empirical mean of past rewards $\bar{u}_t$:
\begin{equation}
	\text{UCB}_{t}(a) = \bar{u}_{t}(a) + \rho_{t}(a)
\end{equation}
The exploration bonus may depend on past rewards, $u_s(a_s), ~ s \leq t$, the number of times an action was selected, $n_t(a)$, and the time, $t$.
In each step, the algorithm chooses the action $a_t$ which maximizes $\text{UCB}_{t-1}(a)$. 
Similar to \citet{hansen_frontiers_2021}, we use the UCB-Tuned algorithm (\cite{auer2002bandit}) in our experiments. This algorithm is a refined version of UCB1 (\cite{auer2002bandit}) with an exploration bonus 
\begin{equation*}
	\rho_{t}^{UCB-Tuned}(a) = \sqrt{\frac{\ln t}{n_t(a)} \min\left\{\dfrac{1}{4}, V_{t}(a)\right\}}
\end{equation*}
where $V_t(a)$ is a variance estimate:
\begin{equation*}
	V_{t}(a) = \left(\frac{1}{n_{t}(a)}\sum_{\substack{1 \leq s \leq t:\\a_s = a}} u_{s}(a)^2\right) - \bar{u}_{t}(a)^2 + \sqrt{\frac{2\ln t}{n_{t}(a)}}.
\end{equation*}

Unlike \citet{hansen_frontiers_2021}, our implementation of the algorithm does not eliminate any actions that it deems unpromising, and it resolves ties of the UCB heuristic randomly.
The algorithm has vanishing regret in the stochastic multiarm bandit problem under mild assumptions (see, e.g., \cite{lattimore2020bandit}) but may suffer linear regret in the adversarial setting.

\subsubsection{Exp3 Algorithms}
\label{sec:exp3}

Exp3 ("Exponential-weight algorithm for Exploration and Exploitation") is an algorithm introduced by \cite{auer_gambling_1995}, which was designed for the adversarial model. This means that we do not make any assumptions about the distribution of rewards. In contrast, the rewards could also be chosen by an adversary who has full information about the setting \emph{and the agent's algorithm} and who chooses rewards in each round. 

The Exp3 algorithm's action-selection rule assigns weights to each arm, and its update rule adjusts these weights based on the observed rewards. 
Following the slightly modified version by \citet{lattimore2020bandit}, our implementation of the Exp3 algorithm samples its actions $a_t$ from
\begin{equation}
	P_{t}(a) = (1 - \epsilon_t) \cdot \frac{\exp \left( \eta \sum_{s = 1}^{t-1}\hat{u}_{s}(a) \right)}{\sum_{\hat{a} \in \Acal}\exp \left( \eta \sum_{s = 1}^{t-1}\hat{u}_{s}(\hat{a}) \right)} + \frac{\epsilon_t}{\abs{\Acal}}.
\end{equation}
Here, $\epsilon_t$ is an additional exploration rate that encourages the algorithm to choose arbitrary actions from time to time and $\hat{u}_{s}(a)$ is an estimate of the reward. Let $P_s(a)$ be the probability that the algorithm chooses action $a$ at time $s$, given the history until then. Assuming rewards are bounded within $[0, 1]$, we choose
\begin{equation}
	\hat{u}_{s}(a) = 1 - \frac{\mathbb{I}\{ a_s = a\}}{P_{s}(a)} \cdot (1 - u_s).
\end{equation}

The Exp3 algorithm is known to be no-regret for some choices of $\epsilon_t$ in the adversarial setting (\cite{auer_gambling_1995, lattimore2020bandit}) and mean-based for a specific choice of exploration rate $\epsilon_t$ and learning rate $\eta_t$ \citep{braverman_selling_2018}. Our implementation relies on a fixed exploration and learning rate for simplicity.

\subsubsection{Thompson Sampling}
\label{sec:thompson-sampling}

Thompson Sampling is a Bayesian approach to multi-armed bandit algorithms that was named after William R. Thompson based on his papers from \citeyear{thompson_likelihood_1933} and \citeyear{thompson_theory_1935}. The heuristic maintains probability distributions over the rewards of every possible action and uses these to determine the next action.

More specifically, Thompson sampling assumes that the rewards of each action $a \in \Acal$ are sampled independently from a probability distribution $\prob(u_t = r \vert \theta_a)$.\footnote{This also generalizes to probability densities. For simplicity, we stick to the discrete formulation in this introduction.} Bayes' rule allows us to maintain a posterior distribution to update our beliefs on the (unknown) parameters $\theta_a$, and consequently on the distribution of rewards of each action:
\begin{equation*}
	\prob(\theta_a \vert \{u_s\}_{s = 1}^{t}, \theta_a^0) = \frac{\prob \ll \left. \{u_s\}_{1 \leq s \leq t : a_s = a} \right\vert \theta_a\rr \cdot \prob(\theta_a \vert \theta_a^0)}{\prob \ll\{u_s\}_{1 \leq s \leq t : a_s = a}\rr}
\end{equation*}
Here, $\prob(\theta_a \vert \theta_a^0)$ is a prior distribution over reward parameters for action $a$, and $\prob \ll \left. \{u_s\}_{1 \leq s \leq t : a_s = a} \right\vert \theta_a\rr$ is the probability of rewards $u_s$ at iterations for which the selected action was $a$.

In each iteration, Thompson sampling draws parameter samples $\theta_a$ for all actions from their corresponding posterior distribution, introducing randomness. Subsequently, the action with the highest expected reward, based on $\theta_a$, is chosen. After receiving reward feedback, the posterior distribution of the chosen action is updated according to Bayes' rule.

While \citet{thompson_likelihood_1933} described the approach for two actions with Bernoulli distributions, our setting has $K$ actions with continuous, bounded rewards. 
The Bayesian approach can also be used in this setting; a generalization that has been described, e.g., by \citet{russo_tutorial_2020}. 
Our implementation of Thompson sampling assumes Gaussian-distributed rewards\footnote{We note that this does not reflect that the rewards are bounded. The approach is not severely limited by this.} and poses a Gaussian distribution over the parameters as conjugated prior. 

\subsection{Convergence Speed}
\label{sec:convergence-speed}

Aside from the converged prices, we also analyzed the training behavior of the algorithms. As with the final prices, we find significant differences between the \textit{UCB-T} algorithms and the other algorithms, which we summarize as our third result. 

\begin{result}
	\textit{Exp3-$\epsilon$}, \textit{TS}, and \textit{$\epsilon$-Greedy} converge quickly and reliably to the Nash equilibrium prices.
	\textit{UCB-T} combinations erratically learn to charge higher prices with large differences within and between runs.
\end{result}

In Figures \ref{fig:price-evolution-1} and \ref{fig:price-evolution-2}, we provide the price-competition indices for all four algorithms in symmetric duopolies with linear demand (\textit{O2}). 
Indeed, we find that the convergence of \textit{Exp3-$\epsilon$} and \textit{TS} is particularly consistent and quick. The former algorithm sets prices close to the competitive Nash equilibrium after roughly 50,000 steps, while the latter requires even fewer interactions between the agents. 
With \textit{$\epsilon$-Greedy}, the agents arrive at competitive prices quickly, with small variations between runs. We attribute this to the fixed exploration rate of our algorithm configuration for \textit{$\epsilon$-Greedy}.


\begin{figure}
	\centering
	
	\hfill
	\begin{subfigure}{0.44\textwidth}
		\includegraphics[width=\textwidth]{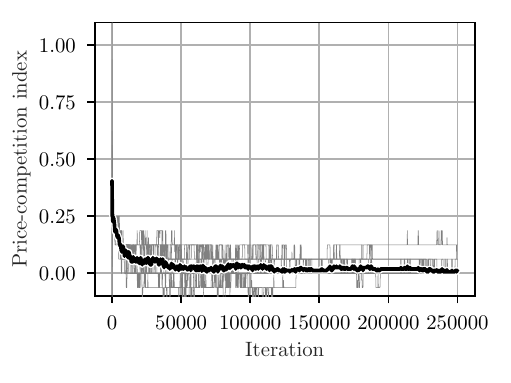}
		\subcaption{\textit{Exp3-$\epsilon$}}
	\end{subfigure}
	\hfill
	\begin{subfigure}{0.44\textwidth}
		\includegraphics[width=\textwidth]{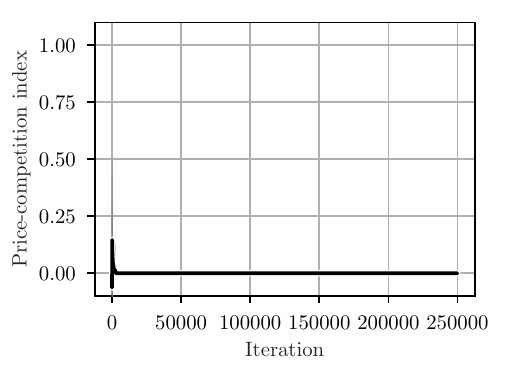}
		\subcaption{\textit{TS}}
	\end{subfigure}
	\hfill
	
	\caption{Evolution of Prices During Training}
	\label{fig:price-evolution-1}
	
	\centering \small
	The figures display the competition indices based on the charged prices (running medians over 1000 steps) for ten runs of the experiment in thin lines. The thick line shows the average of all runs and thus indicates the overall trend.
\end{figure}

\begin{figure}
	\centering
	
	\hfill
	\begin{subfigure}{0.44\textwidth}
		\includegraphics[width=\textwidth]{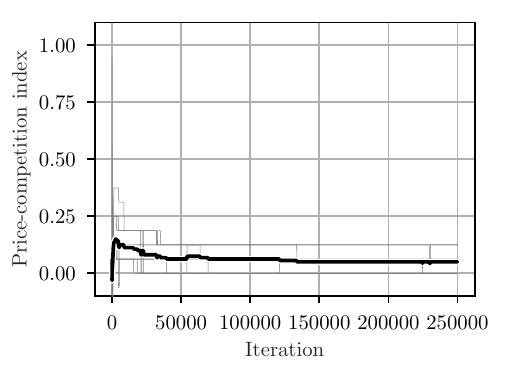}
		\subcaption{\textit{$\epsilon$-Greedy}}
	\end{subfigure}
	\hfill
	\begin{subfigure}{0.44\textwidth}
		\includegraphics[width=\textwidth]{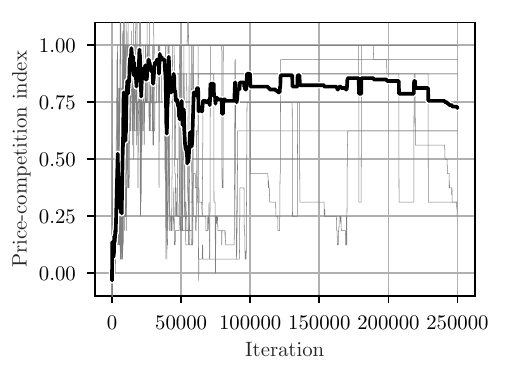}
		\subcaption{\textit{UCB-T}}
	\end{subfigure}
	\hfill
	
	\caption{Evolution of Prices During Training}
	\label{fig:price-evolution-2}
	
	\centering \small
	The figures display the competition indices based on the charged prices (running medians over 1000 steps) for ten runs of the experiment in thin lines. The thick line shows the average of all runs and thus indicates the overall trend.
\end{figure}

In stark contrast to this, the \textit{UCB-T} experiments exhibit a larger variation of prices, both within runs and between runs. The variation within runs indicates that agents do not simply "agree" on a supra-competitive price level. Instead, they erratically raise and lower prices in a seemingly random pattern. Since these agents choose actions by maximizing their heuristic (which is a non-smooth operation), their actions can vary drastically if multiple actions experience values of the heuristics at around the same level. Each run of our \textit{UCB-T} experiments is different, and we observed a strong variation between the runs. We note that the only source of randomness comes from the random action selection if one or more actions have the same heuristic value. The varying outcomes make it hard for firms to reliably predict their returns when employing a \textit{UCB-T} algorithm in this scenario. On average, though, we observe supra-competitive prices, starting from an early point in the training. 

\subsection{Supra-competitive Prices with More Sellers}
\label{sec:experiments-with-more-sellers}

There may be more than two competitors in a market. There are two major questions we answer in this context: 1) Do non-competitive algorithms such as \textit{UCB-T} eventually reach competitive prices if the number of competitors grows large enough? 2) How many competitive agents must be inserted in a system of cooperating agents to break supra-competitive pricing? Our second result sheds light on these questions.

\begin{result}
	With an increasing number of competitors, competition indices decrease. Pure \textit{UCB-T} oligopolies can still achieve supra-competitive prices, but at a lower level. Adding other bandit algorithms reduces price levels.
\end{result}

To answer the first question, we ran experiments with identical algorithms in the linear demand setting (\textit{O2}). Figure \ref{fig:large-oligopolies} shows that with an increasing number of players, \textit{UCB-T} comes closer to the Nash equilibrium prices. However, price levels remain supra-competitive in oligopolies of up to ten players. In line with our previous experiments, the other algorithms remain competitive for any number of players.

We answer the second question by configuring oligopolies where all players commit to the \textit{UCB-T} algorithm except for one. The results are displayed in Figure \ref{fig:UCB-oligopolies}. 
While the price levels decrease compared to the pure \textit{UCB-T} oligopolies, the \textit{UCB-T} algorithms are still able to coordinate on supra-competitive prices.

\begin{figure}[h]
	\centering
	
	\hfill
	\begin{subfigure}{0.44\textwidth}
		\includegraphics[width=\textwidth]{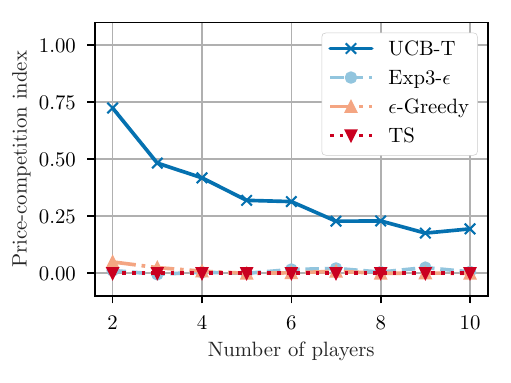}
		\subcaption{Identical algorithms \label{fig:large-oligopolies}}
	\end{subfigure}
	\hfill
	\begin{subfigure}{0.44\textwidth}
		\includegraphics[width=\textwidth]{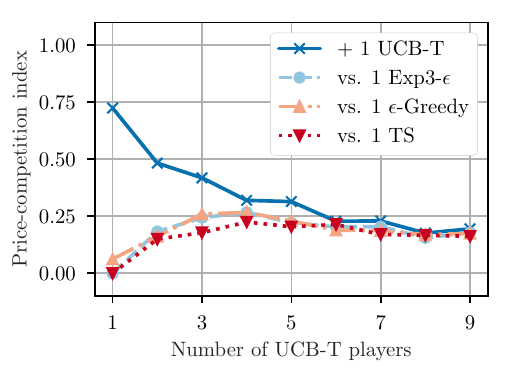}
		\subcaption{\textit{UCB-T} majority \label{fig:UCB-oligopolies}}
	\end{subfigure}
	\hfill
	
	\caption{Joint Profits with Increasing Number of Players}
	\label{fig:oligopolies}
	
	\centering \small
	The plots visualize the price-competition indices at the end of training for different numbers of players in a Bertrand economy with linear demand. \textbf{a)} shows oligopolies with identical algorithms while \textbf{b)} shows the combination of $\n - 1$ \textit{UCB-T} algorithms with one different algorithm. The median price of the last 1000 iteration steps is averaged over all runs to form the index.
\end{figure}

Of course, this coordination is only possible if there are sufficiently many players to support it. We analyze the competition of varying shares of \textit{UCB-T} agents in a 10-player oligopoly with competing \textit{Exp3-$\epsilon$} algorithms with linear demand (\textit{O2}). The result is displayed in Figure \ref{fig:UCB-fraction}. We only observe supra-competitive prices if the majority of agents play \textit{UCB-T}. With increasing numbers, prices increase, too.

\begin{figure}
	\centering
	
	\includegraphics[width=0.44\textwidth]{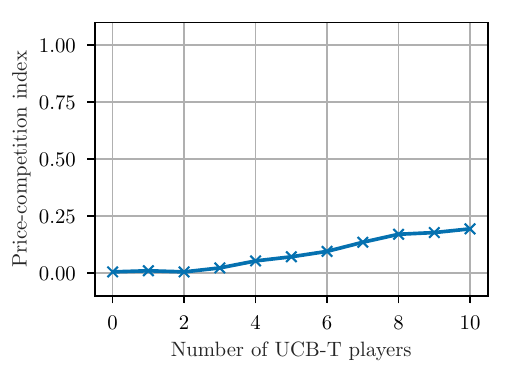}
	
	\caption{Different Numbers of \textit{UCB-T} Agents Against \textit{Exp3-$\epsilon$} Competitors in a 10-player Oligopoly}
	\label{fig:UCB-fraction}
	
	\centering \small
	The plot visualizes the price-competition index in 10-player oligopoly experiments with linear demand (\textit{O2}) for an increasing share of \textit{UCB-T} players. The remaining agents act according to \textit{Exp3-$\epsilon$}. The median price of the last 1000 iteration steps is averaged over all runs to form the index.
\end{figure}

\subsection{Staggered Entry}
\label{sec:staggered-entry-experiments}

We may not always assume that competing agents start at the same time. In this section, we complement the analytical result of Section \ref{sec:staggered-entry} with experiments. Our modified setup is as follows: We simulate the environment for 250,000 steps. In the first 20\% of all iterations, only one player is in the market. Then, the second contestant enters. A logit-demand model is most appropriate here as it ensures that the single-player version of the environment exhibits bounded, positive demand. The parameters are selected as in the symmetric logit-demand duopoly described in Section \ref{sec:experiment-setup}. 

We ran simulations with Exp3 and Thompson Sampling (TS) for 250,000 steps. The results are displayed in Figure \ref{fig:staggered-entry}. In both scenarios, the first competitor learns to charge a high price in the beginning. As soon as the second seller enters the market, the first seller immediately starts lowering prices and adapts to the increased competition. However, convergence is slow, and none of the scenarios fully recovers the competitive price within the given time frame. 

\begin{figure}
	\centering
	
	\hfill
	\begin{subfigure}{0.44\textwidth}
		\includegraphics[width=\textwidth]{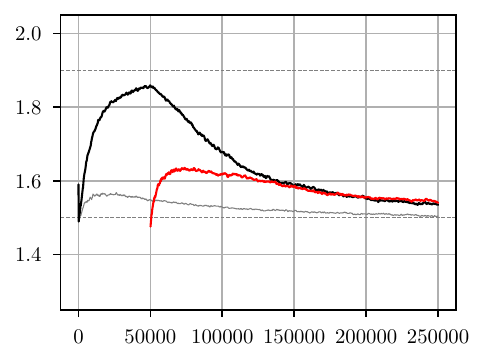}
		\subcaption{\textit{Exp3-$\epsilon$}}
	\end{subfigure}
	\hfill
	\begin{subfigure}{0.44\textwidth}
		\includegraphics[width=\textwidth]{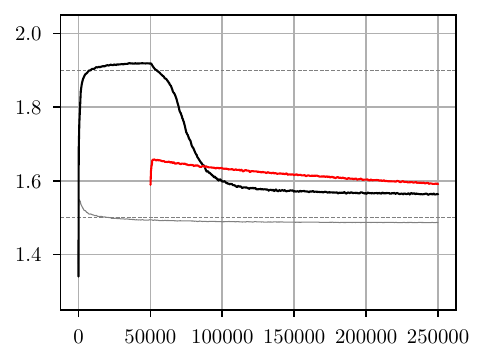}
		\subcaption{\textit{TS}}
	\end{subfigure}
	\hfill
	
	\caption{Staggered-entry experiments where the second contestant entered the market at one-fifth of the iterations.}
	\label{fig:staggered-entry}
	
	\centering \small
	Displayed is the action averaged over ten independent runs and 1000 subsequent steps. The first contestant is displayed in black, the second entrant in red. The light grey curve represents a simultaneous-entry version of the scenario for reference. The (continuous) Nash equilibrium is located at approximately 1.47; the joint profit-maximizing prices and the monopoly price are roughly 1.92. The discrete Nash equilibrium and monopoly price are marked by dashed lines.
\end{figure}

Depending on how long the entry of the second entrant is delayed, \textit{the two agents might not recover their equilibrium outcome} due to a fixation of the first contestant. While they are a monopolist, the first player might start to reject low actions as these don't yield a high reward in a monopolistic market. We could observe that the probability assigned to such actions becomes very small, close enough to be considered virtually equivalent to zero from a computer's perspective. This could lead to shrinking flexibility and thus prohibit competitive play in some scenarios.

\subsection{Random Coefficients Models: The Cereal Market}
\label{sec:cereal-market-simulations}

The examples in the main part of our paper are based on toy scenarios with a few agents and simple market structures. While they are a good choice for analyzing the algorithms' behavior in an isolated environment, they lack the complexity of real-world markets. We investigate an additional market scenario which is based on the more complex random coefficients model introduced by \citet{berry_automobile_1993}. In particular, we build on the work by \citet{nevo_practitioners_2000, nevo_measuring_2001}, which analyzes the prices in the American Cereal Market in the years 1988 to 1992. This market seems particularly suited to our case as it exhibits high margins, raising the question of whether or not real firms compete as expected \citep{nevo_measuring_2001}. We leverage the PyBLP package \citep{conlon_best_2020, conlon_pyblp_2025}, which provides dummy data for the cereal market (as used in \cite{nevo_practitioners_2000, conlon_best_2020}) and code for parameter estimation and counterfactual analysis. Our random coefficients model follows the description provided in the PyBLP documentation.

The data comprises 94 markets, each characterized by a combination of city and quarter. Every market consists of five firms, each holding one to nine products with individual prices. Further increasing the complexity is the interplay between products. Due to the random coefficients model, products show more intricate substitution patterns and price elasticities. Finally, firms with multiple products can coordinate their prices to maximize their aggregated profit.

\begin{algorithm}
	\caption{Code parts for random coefficients experiments}
	\label{alg:random-coefficients}
	\vspace{\baselineskip}
	\footnotesize
	
	\lstdefinestyle{myPython}{%
		language=Python,
		basicstyle=\small\ttfamily,
		keywordstyle=\color{blue},
		stringstyle=\color{orange},
		commentstyle=\color{niceGreen},
		morekeywords={self,as,with}, 
		showstringspaces=false,%
		tabsize=4,
		keepspaces=true,
		backgroundcolor=\color{gray!10},
		escapechar=?,
		moredelim=[is][\color{teal}]{^}{^},  
		upquote=true,
	}
	\lstset{
		frame=tlbr,
		framesep=0.1cm,          
		columns=flexible,
		linewidth=\columnwidth,
		style=myPython,
		framexleftmargin=0.2cm,
		framexrightmargin=0.2cm,
		basicstyle=\fontencoding{T1}\selectfont,
	}
	
	\begin{lstlisting}
		# Create the random coefficients model
		problem = pyblp.Problem(
		product_formulations=(
		pyblp.Formulation("0 + prices", absorb="C(product_ids)"), 
		pyblp.Formulation("1 + prices + sugar + mushy"),
		),
		agent_formulation=pyblp.Formulation(
		"0 + income + age + child" 
		),  # Demographics
		product_data=pd.read_csv(
		pyblp.data.NEVO_PRODUCTS_LOCATION
		), 
		agent_data=pd.read_csv(
		pyblp.data.NEVO_AGENTS_LOCATION
		), 
		)
		
		# Estimate the parameters
		results = problem.solve(
		sigma=np.diag([0.3302, 2.4526, 0.0163, 0.2441]),
		pi=[
		[5.4819,    0,          0       ],
		[15.8935,   -1.2000,    2.6342  ],
		[-0.2506,   0,          0       ],
		[1.2650,    0,          0       ],
		],
		method="1s",  # One-step GMM
		optimization=pyblp.Optimization(
		"bfgs", {"gtol": 1e-5}
		), 
		)
		
		# Compute market shares and profits for varying prices in a specific market
		market_id = "C01Q1"
		prices = np.array([...])
		shares = results.compute_shares(prices=prices, market_id=market_id)
		profits = results.compute_profits(
		prices=prices, shares=shares, market_id=market_id
		)
	\end{lstlisting}
\end{algorithm}

We instantiate the model, estimate its parameters based on the full dataset, and compute market shares for prices in individual markets. In essence, the code is as described in Algorithm \ref{alg:random-coefficients}.\footnote{See also \url{https://pyblp.readthedocs.io/en/stable/_notebooks/tutorial/nevo.html}.} For our market simulations, every product is represented by one agent that learns to set optimal prices for this product. Eligible cooperation between the products of one firm can be implemented by sharing the reward between the respective agents (see below).

The experiments are mostly configured as described in the main part of our paper. However, there are three specific parameters that we need to choose.

\begin{itemize}
	\item \textbf{Market ID}: We can choose from a set of 94 markets, each with a unique combination of city and quarter, that all comprise 24 products. We arbitrarily select "C01Q1" and "C07Q2" as our choices.
	\item \textbf{Agents}: For our experiments, we employ Exp3 and Thompson sampling.
	\item \textbf{Profit model}: As described above, the market consists of multiple firms with one to five products each. We can either let each agent observe the \emph{aggregate profit of all products} of the corresponding firm (profit model "firm") or let every agent only receive the profit of its own product (profit model "product").
\end{itemize}

The agents can choose from 21 actions between 0 and 0.25, which are prices that extend the equilibrium and monopoly prices of the continuous game. The equilibrium prices were computed by PyBLP. For the monopoly prices, we simulated a merger of all firms and evaluated the resulting optimal prices with PyBLP. We scaled the rewards based on the profits with equilibrium and monopoly prices to ensure that our algorithms' default parameters performed reasonable.\footnote{This does not change the strategic nature of the game.}
We ran our simulations for 500,000 steps and report the price competition indices over time. Note that the price competition index is now computed based on the reference prices of the \emph{continuous} game, as the estimation of the discrete equilibria is out of reach for 24 players. All simulations have been repeated 5 times (with seeds 0 to 4).

The results are visualized in Figure \ref{fig:random-coefficient-results-1} and Figure \ref{fig:random-coefficient-results-2}. In the first configuration, we observe a steep price increase in the very beginning, followed by a steady decline in prices. This indicates a consistent behavior with previous results in smaller markets. However, we also observe that convergence is much slower in market C07Q2. In general, slow convergence is to be expected as the large number of players and the complexity of the interaction make the feedback noisy and impede learning. 

With Thompson sampling, our results are less promising. Interestingly, we observe a split between runs: Half of our runs resulted in high competition indices, while the other half was much more benign. A potential reason is, again, slow convergence due to complexity. In market C07Q2 (not displayed), the convergence of Thompson sampling was slightly better, resulting in an average competition index of about 0.25. 

Lastly, the alternative profit model "product", where every agent is evaluated on the profit of its product alone, shows comparably quick convergence to competitive prices. Keep in mind that the \textit{reference price} is still based on the assumption that \textit{a firm coordinates its prices}, which means that competition index values below zero can occur easily with this non-cooperative profit model. 

\begin{figure}
	\centering
	
	\hfill
	\begin{subfigure}{0.44\textwidth}
		\includegraphics[width=\textwidth]{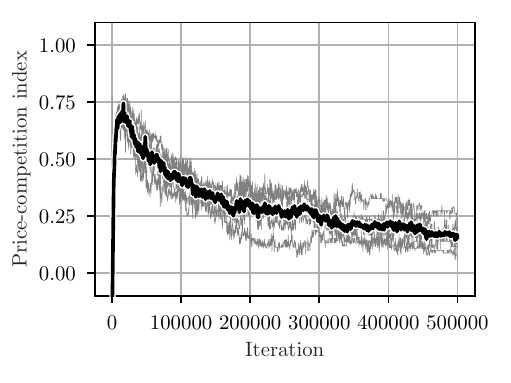}
		\subcaption{C01Q1, Exp3-$\epsilon$, "firm"}
	\end{subfigure}
	\hfill
	\begin{subfigure}{0.44\textwidth}
		\includegraphics[width=\textwidth]{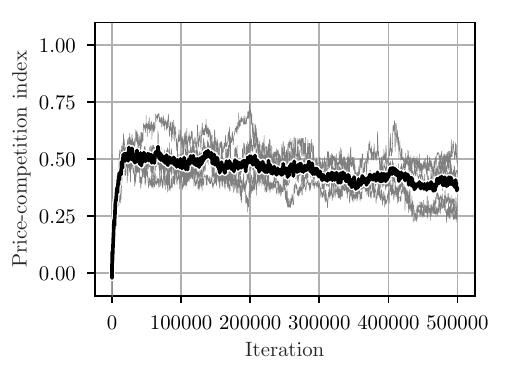}
		\subcaption{C07Q2, Exp3-$\epsilon$, "firm"}
	\end{subfigure}
	\hfill
	
	\caption{Price competition indices in the random coefficients model.}
	\label{fig:random-coefficient-results-1}
	
	\centering \small
	We report the results for markets C01Q1 and C07Q2. The competition indices are based on median price of the previous 1000 steps and are averaged over 24 agents and five runs. The results of the individual runs are displayed in light gray.
\end{figure}

\begin{figure}
	\centering
	
	\hfill
	\begin{subfigure}{0.44\textwidth}
		\includegraphics[width=\textwidth]{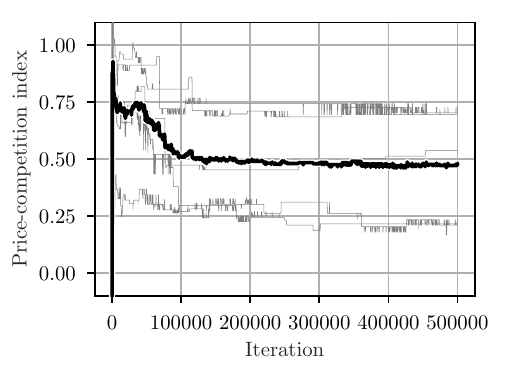}
		\subcaption{C01Q1, TS, "firm"}
	\end{subfigure}
	\hfill
	\begin{subfigure}{0.44\textwidth}
		\includegraphics[width=\textwidth]{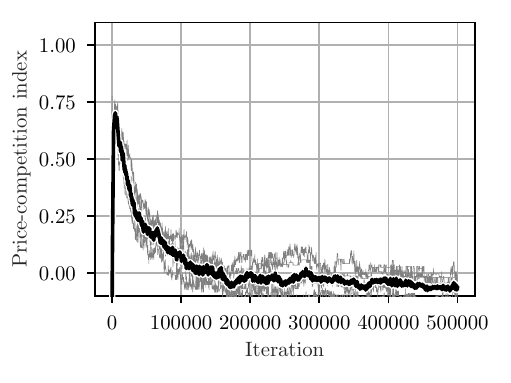}
		\subcaption{C01Q1, Exp3-$\epsilon$, "product"}
	\end{subfigure}
	\hfill
	
	\caption{Price competition indices in the random coefficients model.}
	\label{fig:random-coefficient-results-2}
	
	\centering \small
	We report the results for algorithm "TS", and profit models "product". The competition indices are based on median price of the previous 1000 steps and are averaged over 24 agents and five runs. The results of the individual runs are displayed in light gray.
\end{figure}

\FloatBarrier

\subsection{Bertrand Duopoly Experiments with More Algorithms}\label{app:qlearning}

We also ran experiments with another variant of the UCB algorithms described by \citet{auer2002bandit}, \textbf{UCB1}, and with Q-Learning (\textbf{QL}). Following \citet{Calvano.2020}, our implementation of \textbf{QL} uses an $\epsilon$-greedy exploration strategy with an exponentially decaying exploration rate $\epsilon$. A small exploration rate towards the end seems important for the evolution of supra-competitive prices with Q-Learning \citep{barfuss_intrinsic_2023, schaefer_emergence_2023, lambin2024less}. For all experiments where \textbf{QL} was involved, we iterated 3 million steps (instead of 250,000), and we provided the actions of both players from the previous iteration as the state: $s_t = (a_{1, t-1}, a_{2, t - 1})$. For completeness, we also repeated our experiments with pretrained Q-Learning agents, but could not see substantial differences in the final outcome.

Figure \ref{fig:QL-collusion-price} shows the price competition indices of all algorithm configurations. We observe that, in line with what we observed in our main experiments, non-competition mostly evolves when two algorithms of the same type are paired. The combination of \textbf{UCB-T} and \textbf{UCB1} deviates from this pattern and displays a noticeable level of cooperation. It is likely that these two algorithms can learn to "coordinate" (unconsciously) because of their similarity, which shows that algorithms need not share the exact same implementation and parameters in order to cooperate.

\begin{figure}
	\centering
	\includegraphics[width=0.49\textwidth]{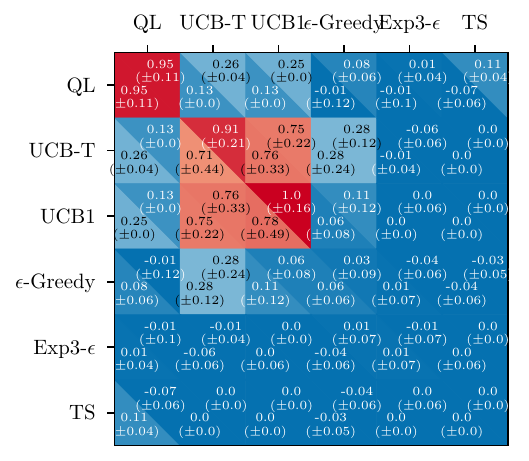}
	\caption{Price-competition indices in duopoly \textbf{O3} (logit demand) \label{fig:QL-collusion-price}}
	
	\centering \small
	We display the price-competition index \textit{by agent} as an average value over all runs. To determine the index, we use the median price over the last 1000 iterations.
\end{figure}

For the Q-learning experiment, Figure \ref{fig:price-evolution-QL} shows the evolution of prices over the experiment iterations. We observe that high prices evolve late and abruptly, which can likely be attributed to the exploration rate becoming sufficiently small.

Let us elaborate a little on the relation between Q-learning and the $\epsilon$-greedy bandit. Both share an explicit exploration term and a similar pattern of estimating average rewards/valuations. However, they differ in two ways: First, Q-learning can base its actions on previous prices, giving it a method to "retaliate" for deviations from a supra-competitive price. This is the pathway that \citet{calvano_algorithmic_2021} use to explain the evolution of algorithmic collusion, although \citet{lambin2024less} argues that stateless Q-learning, which resembles our $\epsilon$-greedy bandit, can reach supra-competitive outcomes as well. Second, the implementation of Q-learning uses a \textit{decaying} exploration term $\epsilon_t$, while our $\epsilon$-greedy bandit uses a fixed rate. Following the argumentation by \citet{den_boer_artificial_2022} and \citet{lambin2024less} that this change of exploration leads to coordination, we conjecture that this explains the observed differences.

\begin{figure}
	\centering
	\includegraphics[width=0.49\textwidth]{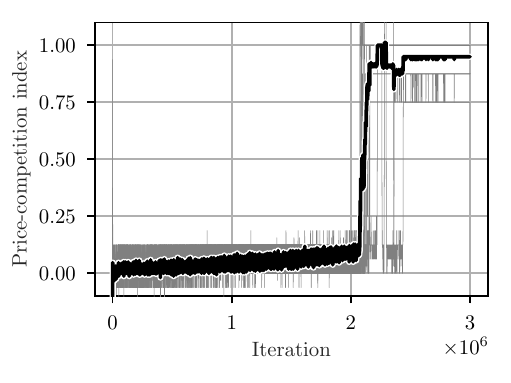}
	
	\caption{Evolution of prices with \textbf{QL} \label{fig:price-evolution-QL}}
	
	\centering \small
	We display the price-competition index of of a \textbf{QL}-duopoly over time. To determine the index, we use the median price over 1000 iterations.
\end{figure}

\subsection{Results by Demand Function}

In the main part of the paper, we averaged the competition indices of the duopolies for all demand functions. Here, we provide the results for the individual demand functions. Figures \ref{fig:price-collusion-symmetric} and \ref{fig:price-collusion-asymmetric} display the price-competition indices for all five games. We find that supra-competitive prices evolve consistently with the \textbf{UCB-T} and \textbf{UCB1} algorithms throughout all settings, with particularly pronounced cooperation in the standard demand setting (\textbf{O1}). Likewise, prices remain mostly competitive for the remaining algorithm combinations. 

\begin{figure}
	\centering
	
	\hfill
	\begin{subfigure}{0.3\textwidth}
		\includegraphics[width=\textwidth]{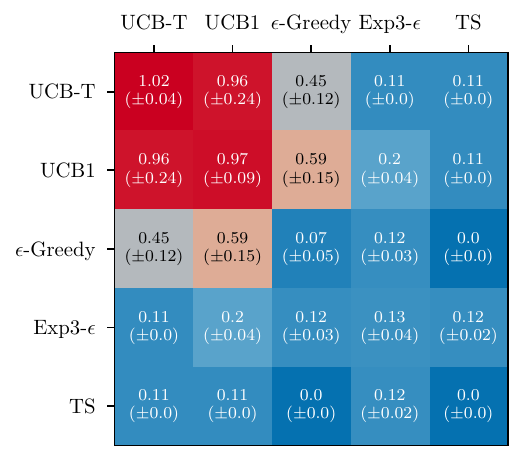}
		\subcaption{\textbf{O1}.}
	\end{subfigure}
	\hfill
	\begin{subfigure}{0.3\textwidth}
		\includegraphics[width=\textwidth]{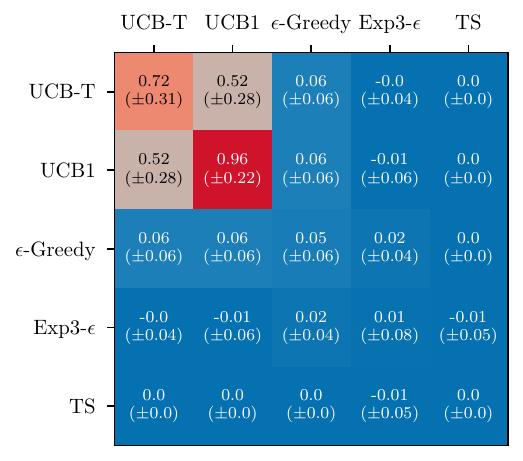}
		\subcaption{\textbf{O2}.}
	\end{subfigure}
	\hfill
	\begin{subfigure}{0.3\textwidth}
		\includegraphics[width=\textwidth]{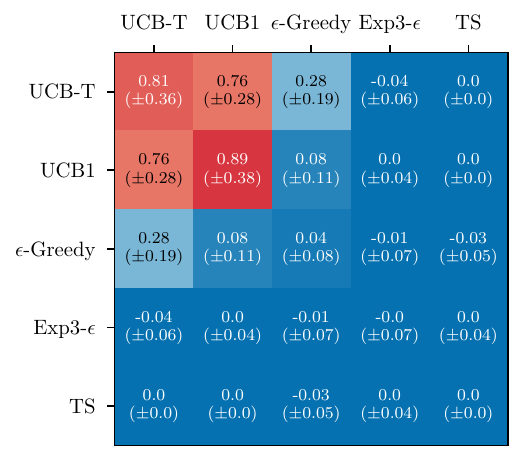}
		\subcaption{\textbf{O3}.}
	\end{subfigure}
	\hfill
	
	\caption{Price-competition indices in symmetric games}
	\label{fig:price-collusion-symmetric}
	
	\centering \small
	We visualize the price-competition indices for the symmetric games and several combinations of algorithms. The displayed numbers are the numeric values of the competition indices; the values in brackets display their corresponding standard deviation. Averages and standard deviations were computed over two agents and 10 runs.
	
\end{figure}

\begin{figure}
	\centering
	
	\hfill
	\begin{subfigure}{0.44\textwidth}
		\includegraphics[width=\textwidth]{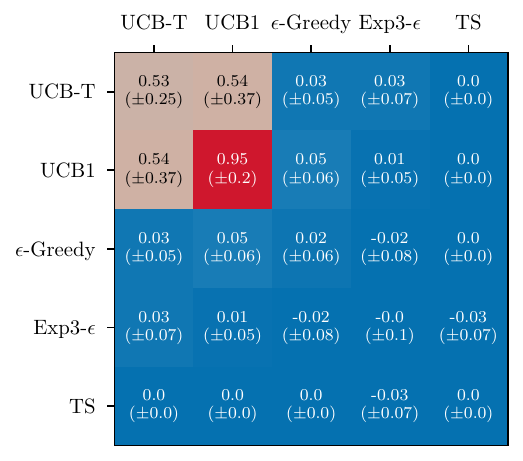}
		\caption{\textbf{O2'}.}
	\end{subfigure}
	\hfill
	\begin{subfigure}{0.44\textwidth}
		\includegraphics[width=\textwidth]{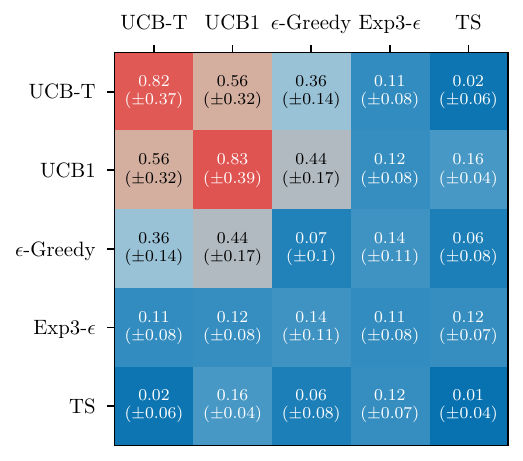}
		\caption{\textbf{O3'}.}
	\end{subfigure}
	\hfill
	
	\caption{Price-competition indices in asymmetric games}
	\label{fig:price-collusion-asymmetric}
	
	\centering \small
	We visualize the price-competition indices for the asymmetric games and several combinations of algorithms. The displayed numbers are the numeric values of the competition indices; the values in brackets display their corresponding standard deviation. Averages and standard deviations were computed over two agents and 10 runs.
\end{figure}

\end{document}